\DeclareOldFontCommand{\bf}{\normalfont\bfseries}{\mathbf}
\DeclareOldFontCommand{\cal}{\normalfont\bfseries}{\mathcal}
\def\orange{\textcolor{black}}
\begin{document}
\title[Optimal Bubble Riding]{Optimal Bubble Riding: \\A Mean Field Game with Varying Entry Times}

 \author{Ludovic Tangpi}
 \author{Shichun Wang}
 \address{Princeton University, Operations Research and Financial Engineering. Both authors gratefully acknowledge financial support from the NSF grant DMS-2005832 and the NSF CAREER award DMS-2143861.}

\maketitle

\begin{center}
\today
\end{center}

\begin{abstract}
\onehalfspacing
Recent financial bubbles such as the emergence of cryptocurrencies and ``meme stocks'' have gained increasing attention from both retail and institutional investors. 
In this paper, we propose a game-theoretic model on optimal liquidation in the presence of an asset bubble. Our setup allows the influx of players to fuel the price of the asset. 
Moreover, traders enter the market at possibly different times and take advantage of the uptrend at the risk of an inevitable crash. In particular, we consider two types of crashes: an \emph{endogenous} burst which results from excessive selling, and an \emph{exogenous} burst which cannot be anticipated and is independent from the actions of the traders. 
    
The popularity of asset bubbles suggests a large-population setting, which naturally leads to a mean field game (MFG) formulation. We introduce a class of MFGs with varying entry times. In particular, an equilibrium depends on the entry-weighted average of conditional optimal strategies. To incorporate the exogenous burst time, we adopt the method of progressive enlargement of filtrations. We prove existence of MFG equilibria using the weak formulation in a generalized setup, and we show that the equilibrium strategy can be decomposed into before-and-after-burst segments, each part containing only the market information. We also perform numerical simulations of the solution, which allow us to provide some intriguing results on the relationship between the bubble burst and equilibrium strategies.
\end{abstract}

%\tableofcontents

% \input{version 1/Introduction.tex}
% \input{version 1/Model.tex}
% \input{version 1/SetupUnderPE.tex}
% \input{version 1/FixedEntry.tex}
% \input{version 1/RandomEntry.tex}
% \input{version 1/ModelRevisit.tex}
% \input{version 1/ConcludingRemarks}
%   \appendix
% \input{version 1/Appendix.tex}

\section{Introduction}
\onehalfspacing
Asset bubbles from various financial markets have recently gained much attention. For instance, in 2021 the short squeeze on Reddit triggered a so-called ``meme stock'' frenzy, whose impact surprised the financial industry by its unprecedented magnitude. Certain cryptocurrencies also experienced explosive price increase in the past decade. In particular, the price of Bitcoin grew 25 times in 2017, a trend reminiscent of the tech boom (also known as the ``dot--com'' bubble) at the turn of the last century, during which the Nasdaq index rose 400\%. Bubbles of this size can be detected \cite{LPPLBitcoin19, NFT22, BubblevsDiscovery20, YOLOTrading22} and are attractive to traders despite the risk from an inevitable future price correction. In this paper, we shall analyze the notion of \textit{optimal bubble riding} in a symmetric, large population game setting. \par

The soaring prices certainly attract non-institutional traders. Retail investors took up nearly 25\% of U.S. equity trading volume in Q1 2021, compared to 15\% in 2019 \cite{CreditSuisse22}. A similar trend was observed during the dot-com bubble, when a large wave of retail investors poured into tech stocks. What is puzzling, however, is that sophisticated investors such as hedge funds were also riding the dot-com bubble, despite anticipating a major price correction (see e.g. \cite{HedgeFunds04, TechBubble2011} for empirical evidence). Such evidence challenges the traditional perspective which considers bubble riding merely as ``irrational exuberance'' that results from the herding behavior of uninformed investors. Researchers have thus been interested in understanding rational bubble modeling. 

Among the most influential works on the topic, the theoretical model by \citet{AbreuBrunnermeier03} on profitable bubble riding offers insight to the persistence of asset bubbles in the presence of rational arbitrageurs (see also \cite{RobustBubbleModel12, FiniteHorizon16} for extensions of this model). 
\citet{RidingBubbleConvex18} use a dynamic trading model to show that sophisticated, risk-averse money managers can invest in overvalued non-benchmark asset. The Log-Periodic Power Law (LPPL) model \cite{JLS1,JLS2, JLS3,LPPL10}, sometimes referred to as the Johansen-Ledoit-Sornette (JLS) model \cite{JLS4}, characterizes bubbles by faster-than-exponential growth of the asset price with a finite singularity point that results from the positive feedback loop from investors. A more mathematical finance perspective is developed by \citet{Jarrow2007} and \citet{Protter2013}, who model bubbles using the theory of strict local martingales and various no-arbitrage arguments. Nonetheless, these models do not link the growth of the bubble directly to the inflow of traders, nor do they take advantage of the large population feature. 
Moreover, to our knowledge, the present paper is the first that looks into \emph{optimal} bubble riding in a stochastic, non-cooperative, and continuous-time game perspective.

In contrast to the works mentioned above, we integrate the size and the dynamics of the population into our model. Our basic framework borrows from the seminal work of \citet{AlmgrenandChriss01} on price impact and its multiplayer game extension by \citet{CarlinLoboViswanathan07}. The core of the price impact model is an optimal execution problem, where traders need to liquidate their assets in a given interval of time. This framework is suitable for bubble riding as well because most ``riders'' simply wish to take advantage of the growth, and thus they do not intend to hold the overpriced asset. In an optimal liquidation setting, traders optimize their trading speed $\alpha$ to minimize the temporary trading cost while lowering their inventory. The existence of an asset bubble intensifies the urgency in light of the inevitable yet unpredictable moment when it will burst. The asset bubble's presence amplifies the urgency to liquidate, given the impending but uncertain moment of its burst. At the same time, however, the longer the trader holds onto the asset, the more she benefits from the growth of the bubble, provided that she pulls out before the crash. 

There are three main features of our model on top of the optimal execution framework: the varying entry times, a bubble component fueled by the influx of players, and a random burst time. Inspired by what \citet{AbreuBrunnermeier03} refer to as the ``awareness window'', we fix some $\eta > 0$ that measures the level of ``heterogeneity'' between traders, that is, how spread-out their entry times are. The bubble starts at time $t=0$, and the players begin trading at various entry times in the interval $[0, \eta]$. There are two potential reasons to the delay of entry. The first one is information asymmetry among the traders. It is possible that the mispricing has not yet caught some traders' attention, or that some of them have insider information and begin trading earlier than others. Another reason, perhaps more interesting, is related to risk preferences. Some traders might not yet believe in the strength of the bubble, so they choose to wait longer before entering the game. For example, using data on brokerage accounts, \citet{whoparticipated21} found that investors who joined the GameStop frenzy earlier had a history of investing in ``lottery-like'' stocks. However, since we are more interested in the Nash equilibrium and the group behavior, we consider the exact reason of entry as an extrinsic factor. In the setup of \citet{AbreuBrunnermeier03}, players enter uniformly on $[0, \eta]$. Here we only assume that the entry times are independent and identically distributed (iid) with a fixed distribution known to all players. The traders do not optimize their strategies over entry times. In fact, as soon as the agent decides that the bubble is worth riding, the optimal action is to enter immediately. The decision to enter should thus depend on the traders' awareness of the existence of the bubble and her confidence in its strength. In our accompanying paper \cite{TangpiWang23}, for example, we use the price of the bubble asset as a proxy for these two factors at the expense of a weaker result and a more abstract proof. We do not model these two factors explicitly in this paper

Our model emulates the five stages (Displacement, Boom, Euphoria, Profit-taking, and Panic) of a typical bubble identified by \citet{MinskyBook} and detailed in the book by \citet{aliber2017manias}. In particular, the Boom phase describes the period when the price increases due to waves of traders coming into the market. New players that join the game should boost the price differently from the standard market impact model. Therefore, we assume the price dynamics of the asset before the burst to have a bubble component plus a fundamental (or non-bubble) component; see \eqref{Price-Dynamics-PreBurst}. Specifically, the bubble component depends on the number of players in the game. As a result, a trader could have a large relative disadvantage even if she joins only slightly after a big crowd does. After the crash, the price dynamics loses the bubble component, and the model reduces to a standard optimal execution problem. 

Regarding the burst of the bubble, it is commonly believed that there are two types of market crashes: endogenous and exogenous. An \emph{endogenous} burst (sometimes called a Minsky moment) refers to the collapse caused by the cumulative selling pressure. For instance, \citet{TechBubble2011} showed, using individual stock peaks during the dot-com bubble, that institutions began to sell rapidly on the day of the peak which triggered the drop in prices. In contrast to a trader-driven burst, an \emph{exogenous} crash refers to a commonly observable but unanticipated event from outside the system that causes the price to plummet. A recent example took place in Q1 2020 when the COVID-19 pandemic triggered the so-called Coronavirus crash. The fact that it was an exogenous event is confirmed by \citet{ExovsEndo20}, in which the authors find no indication of endogenous factors behind the crash. Mirroring the approach of \citet{AbreuBrunnermeier03}, we allow for both types of bursts. Specifically, the true crash occurs at the minimum of the two burst times and leads to a sudden drop in price. The size of the drop does not depend on the exact type of the burst.

We define the endogenous burst time as the first time that the average inventory among the traders hits a given threshold level. The exogenous burst time is modeled as a random time, but not necessarily a stopping time in the underlying market filtration. We allow investors to continue trading after the burst until some fixed horizon. Each player chooses her continuous liquidation strategy to minimize the total cost \eqref{Definition:Modelobj} up to time $T$. As everyone's strategies should depend on the price, hence also the crash, the endogenous burst time couples all the players through their inventory levels. Similarly, the price impact term depends on the strategies of all the players. These complications make the game mathematically intractable to analyze, especially when the number of players gets large. A well-known workaround is to look at its mean field game (MFG) counterpart (see e.g. \cite{CardaliaguetMFG18, PriceImpactMFG2,PriceImpactMFG3}).

Originally introduced by \citet{LasryLions1, LasryLions2, LasryLions33} and, independently, by \citet{Huang1, Huang2}, MFGs are heuristic infinite population limits of symmetric stochastic differential games; see e.g. \cite{DelarueConv, TangpiConv, Lacker16, CardaliaguetDelarueLasryLions19} for rigorous proofs of the convergence of finite population games to MFG in various settings. As the number of players approaches infinity, the interdependence weakens due to the symmetry among the players. Instead of reacting to other players' actions, each player responds to a fixed distribution, and the equilibrium translates into a fixed point condition. 
We refer the readers to the monographs of \citet{CarmonaBookI, CarmonaBookII} for a more detailed discussion on MFGs and additional references. We adopt the weak formulation approach proposed by \citet{CarmonaLacker15}. 

The MFG that naturally emerges from our model has two distinct features which have been studied in the MFG literature. The first one is the notion of varying entry times. Suppose the representative player joins the game at time $t^* \in [0, \eta]$. Then her strategy and state processes are both kept at $0$ before $t^*$. In a classical MFG, an equilibrium requires the laws of the optimal controls and state processes coincide with the distribution of that of all players. But if players enter the game at different times, clearly their controls, or inventory levels, cannot share a common law. Both the control and state variables are fixed at $0$ for someone who has not yet entered the game. Therefore, the correct notion of consistency should aggregate the players by marginalizing the  \emph{conditional} laws over entry times. Specifically, the fixed point condition in Definition \ref{definition:MFGequilibrium} is imposed on the marginal law by integrating the regular conditional probabilities. In other words, MFG equilibrium is defined as an \emph{entry-weighted average} of optimal strategies. 

Another feature of our MFG comes from the exogenous burst time $\tau$. As mentioned earlier, $\tau$ should be an unanticipated but observable event, which bears resemblance to default times in the credit risk literature \cite{CreditRiskBook04}. To put it in mathematical terms, we are looking for a totally-inaccessible stopping time. However, such stopping times do not exist with respect to the natural choice of filtration which is generated by a Brownian motion. We will therefore solve the MFG on an enlarged filtration by adopting the method of progressive enlargement, a common technique used in credit risk analysis \cite{PE1, PE2}. On this new filtration, we can make an arbitrary random time on the original probability space into a totally inaccessible stopping time. However, we need to bring the solution back to the original filtration so that the optimal strategies of the traders still only depend on the available market information. This is achieved by adopting results of \citet{KharroubiLim14}, \citet{Ankirchner10}, and \citet{Pham10}.

 In this paper, we first prove the existence of MFG equilibrium (Theorem \ref{Theorem:MFGExistence0}) for fixed entry time using Brouwer-Schauder-Tychonoff fixed point theorem. This approach should be compared to \citet{CarmonaLacker15} and \citet{CampiFischer18}. We solve the MFG on the progressively enlarged filtration and decompose the equilibrium strategy into before-and-after-burst segments. Moreover, each part is progressively measurable with respect to the original filtration. Then, we prove the existence of MFG equilibrium with varying entry times (Theorem \ref{Theorem:MFGExistence}) through the perspective of probability kernels. Since the game introduced here is apparently new, we prove the existence results in a more generic formulation for mathematical completeness, which we then apply to the bubble riding game. 

The paper is structured as follows. In Section \ref{Section:Model}, we describe the $N$-player model for bubble riding as well as its limiting MFG. We then proceed to discuss the price dynamics and the bursting mechanism. Section \ref{Section:PE} describes the general setup under filtration enlargement. We state the assumptions and the main existence result for fixed entry time and prove it in Section \ref{Section:FP}. Section \ref{Section:RET} deals with varying entry times, and we prove the existence result in that setting. Finally in Section \ref{Section:ModelRevisit}, we revisit the bubble riding problem and prove the existence of MFG equilibrium for the model, and we provide a discussion of our numerical simulation results. We conclude by giving an outline of the proof for using MFG equilibrium to construct approximate Nash equilibrium of the finite player game. Some proofs of technical results on progressive enlargement of filtration are included in the Appendix \ref{Section:Appendix}.

%!TEX root = main.tex

\section{The Bubble Riding Model and First Main Results}\label{Section:Model}
\subsection{The N-Player Game}
\label{sec:Nplayer.game}
A crowd of $N$ agents trade on the same asset that is believed to be overpriced and has a persistent upward trend, resembling a bubble structure. The bubble starts at time $t = 0$, but the players have not necessarily started trading. We assume that the players begin trading at times independent from each other after the bubble starts. There could be two potential reasons to this delay of entry. It is possible that the mispricing has not yet caught the trader's attention, or the trader does not yet believe in the strength of the bubble. We make a simplifying assumption that the entry time is independent of everything else; though the second scenario suggests a price-dependent entry time, a case explored in our accompanying paper \cite{TangpiWang23}.

\subsubsection{Entry Time}
We let $\eta > 0$ be a known parameter that measures the level of ``heterogeneity'' between the $N$ traders. The players begin trading at various entry times in the interval  $[0, \eta]$, referred to as the ``awareness window'' by Abreu and Brunnermeier \cite{AbreuBrunnermeier03}. Let $\bT = (\T^1, \dots, \T^N) \sim \nu^N$ be a vector of i.i.d entry times, each with a fixed probability measure $\nu$ on $[0, \eta]$ with cumulative distribution function $F_\T$. This distribution reflects the differences among traders in terms of their information and beliefs on the underlying asset. We impose the assumption that $F_{\T}(0) > 0$. In other words, we assume that there are people who are already actively trading before the bubble starts, or they might have some information to anticipate an upcoming upward trend. Let $F_{\bT}^N$ denote the empirical CDF of the $N$ entry times. Conditioning on $\bT = \bt = (t^1, \dots, t^N)$, at each time $t$, the number of players already in the game is $$N_{in}(t) \ce \sumN\ind{t^i\leq t}.$$ Using the empirical CDF, we can write $$\frac{N_{in}(t)}{N} = \avg\sumN \ind{t^i \leq t} \eqqcolon F_{\bT}^N(t; \bt).$$ 
It is worth noting that the players do not optimize over their entry times in our model.
\subsubsection{Trading Rate and Inventory}
Let $\lambda_0$ denote some fixed initial probability measure on $\R_+$ with finite second moment and positive first moment. At time $t^i$, player $i$ has an initial inventory $X_{t^i}^i = \iota^i \iid \lambda_0$. Suppose that there is a common horizon $T \geq \eta > 0$, regardless of whether the bubble has burst or not. By choosing her trading rate $\alpha^i = (\alpha^i_t)_{t^i \leq t \leq T}$, the player can control her inventory trajectory by $$dX_t^i = \alpha_t^idt + \sigma dW^i_t,\quad X_t^i = 0 \text{ on } t<t^i, \quad X^i_{t^i} = \iota^i$$ 
where $W^i,\dots, W^N $ are independent $1$--dimensional Brownian motions shifted to start from $t = t^i$, which correspond to the random streams of demand \cite{CarmonaLacker15,LealThesis}. A constant $\sigma >0$ is assumed for simplicity, which can be somewhat generalized (see Section \ref{Section:PE}). A positive $\alpha_t$ corresponds to buying, and a negative $\alpha_t$ corresponds to selling. Obviously, the total supply of this stock is finite. For other practical reasons such as transaction fees, we assume that at any time a player's trading rate is in a compact interval $A \subset \R$. Before entry, naturally the trader has $0$ inventory and does not have control over it. Hence we require $\alpha_t^i = 0$ on $t < t^i$ for each $i \in \{1, \dots, N\}$.

\subsubsection{Permanent Price Impact}
Trades convey information which has a long-term impact on the price dynamics. We use the $N$-player extension of the celebrated price impact model developed by Almgren and Chriss \cite{AlmgrenandChriss01}, where the instantaneous price impact depends on the trading rate. The impact function is usually assumed to be concave, a fact repeatedly supported by strong empirical evidence and theoretical models \cite{BOUCHAUD200957,Amazing}. Let $\rho: A \to \R$ be a locally bounded function for the instantaneous impact with $\rho(0) = 0$. The true impact on the price is the average impact among the traders who have entered the game, which is defined as
\begin{equation}\label{empirical_impact}
\frac{1}{ N_{in}(t)}\sumN \rho(\alpha^i_t) = \frac{1}{N} \sumN \rho(\alpha^i_t)/F_{\bT}^N(t; \bt) = \frac{1}{F^N_{\bT}(t; \bt)}\int_A \rho(a) \theta_t^N(da) \eqqcolon \qv{\rho, \theta^N_t}_{F_{\bT}^N},
\end{equation}
where $\theta^N_t \ce \frac{1}{N}\sumN \delta_{\alpha_t^i}$ is the empirical distribution of trading rates at time $t$. Before the first entry, we set the permanent price impact to be $0$. Similarly, we define the average inventory
\begin{equation}\label{empirical_meanmu}
    \bar{\mu}_t^N \ce \begin{cases}
    0 & \text{ if } N_{in}(t) = 0\\
    \avgt \sumN X_t^i = \avg\sumN X_t^i/F^N_{\bT}(t; \bt) = \frac{1}{F^N_{\bT}(t; \bt)}\int_\R x \mu_t^N(dx) & \text{ if } N_{in}(t) \neq 0
    \end{cases}
\end{equation}
where $\mu^N_t \ce \avg \sumN\delta_{X_t^i}$ is the empirical probability distribution of inventory at time $t$. Note that the interaction terms should only depend on the inventory and control processes after entry. Therefore, we end up with a factor of $\frac{1}{F_{\bT}^N(t; \bt)}$ to scale up the simple average of $X^i$ and $\alpha^i$, which are $0$ for those who are not in the game. Since we assume $F_\mathcal{T}(0)>0$, we have $N_{in} > 0$ almost surely for all $t$ as $N \to \infty$. Hence the values of $\qv{\rho, \theta_t}_{F_{\bT}^N}$ and $\bar{\mu}_t^N$ before first entry do not matter.

\subsubsection{Price Dynamics and Bubble Component}
The price dynamics of the asset is assumed to have two components:
\begin{enumerate}
    \item  Reference price component given by $$dQ_t \ce \qv{\rho, \theta^N_t}_{F_{\bT}^N}dt + \sigma_0dW_t^0, \qquad Q_0 = P_0,$$
    where $P_0$ is a known constant, and $W^0$ is a standard 1-dimensional Brownian motion independent from $(W^i, \mathcal{T}^i)$ for $i \in \{1, \dots, N\}$. The process $Q$ serves a similar role to the reference price in the Johansen-Ledoit-Sornette model \cite{JLS3, JLS4}, or the \emph{fundamental value} process in \cite{AbreuBrunnermeier03}. This part of the price will not be impacted by the bubble burst.
    \item Bubble trend component, which is given by a bubble trend function $b: [0, T] \times [0, 1] \to \R$. The second argument in $b(t, r)$ stands for the rate of entry, where we plug in the empirical CDF of entry times at time $t$. The bubble trend should be increasing in $r$ since the bubble is fueled by the persistent influx of players, which is the main driver of the price dynamics before the burst. 
\end{enumerate}
Let $P^+_t$ denote the price of the stock before the burst of the bubble. It is assumed to follow the dynamics 
\begin{equation}\label{Price-Dynamics-PreBurst}
    dP^+_t \ce \underbrace{b(t, F^N_{\bT}(t; \bt))dt}_{\text{bubble trend}}+\underbrace{\sqbra{ \qv{\rho, \theta^N_t}_{F_{\bT}^N}dt + \sigma_0dW_t^0}}_\text{reference price}, \qquad P^+_0 = P_0,
\end{equation}
As defined in \citet{Protter2013}, the difference between $P^+$ and $Q$ is referred to as the \emph{bubble component}, specifically 
\begin{equation}\label{bubble_component}
\gamma_t \ce P^+_t - Q_t = \intt b(s, F^N_{\bT}(s; \bt))ds.  
\end{equation}
Here we give two examples of the bubble trend functions. 
\begin{example}[Exponential Bubble]
    \citet{AbreuBrunnermeier03} assumed a fully deterministic model with $P^+_t = \gamma_t = P_0\exp(\ell t)$, where $\ell > 0$ governs the speed of bubble growth. To incorporate an entry-dependent growth, one can generalize the model by setting $$\gamma_t \ce P_0\exp(\ell_tt), \quad \ell_t \ce \ell F^N_\T(t).$$
    Then the bubble function $b$ is approximated for large $N$ by taking time derivative $$b(t, r) \approx lP_0\exp(lrt)(f_\T(t)t + r)$$
    where $f_\T$ is the density function of $\nu$.
\end{example}
\begin{example}[LPPL Bubble]
    \citet{JLS1} proposed what is known as the JLS model with a mean trajectory following the log-periodic power law (LPPL): 
    \begin{equation*}
    \log(\E[\gamma_t])= A + B(T - t)^{\ell}\{1 + C\cos[\omega \log(T - t) + \phi]\}
    \end{equation*}
    with critical time set to $T$. The parameter $\ell \in (0, 1)$ measures the power law acceleration of prices, which relates to the number of traders interacting with each other \cite{EverythingLPPL}. To allow players to dynamically enter the game, set $\ell_t = \ell F^N_\T(t)$ with $\ell \in (0, 1)$. For large $N$, $F^N_\T$ is nearly deterministic. 
    Thus, we can approximate the bubble component by its expectation: \begin{equation}\label{eq:LPPL}
        \gamma_t =  \exp\pa{A + B(T - t)^{\ell_t}\{1 + C\cos[\omega \log(T - t) + \phi]\}}
    \end{equation}
    and obtain the bubble trend function $b$ by taking time derivative. Note that if all players are present from the beginning (fixed entry with $\nu_T = \delta_0$), the model reduces to a standard LPPL in expectation. 
\end{example}
After the bubble bursts, the price loses a proportion of the bubble component and drops to an after--burst value. The price dynamics in our model will be given in Equation \eqref{Price-Dynamics} below. Beforehand, let us first describe the bubble burst time.
\subsubsection{Bubble Burst Time}\label{Subsubsection:BubbleBurst}
It is commonly believed that there are two types of mechanisms that cause the burst of a bubble: endogenous and exogenous. 
An \emph{endogenous} burst refers to the collapse caused by the cumulative selling pressure, as modeled by Abreu and Brunnermeier in \cite{AbreuBrunnermeier03}. 
\begin{definition}[$N$-Player endogenous burst time]\label{Def:N-playerendoburst}
We define the $N$-player endogenous burst time as $$\bar{\tau}^N(\mu^N) \ce \inf\cbra{t > \min_{i \in \{1, 2, \dots, N\}}t^i: \inf_{s \in [0, t]}\bar{\mu}^N_s \leq \zeta_t} \wedge T$$
where $\zeta$ is a deterministic, continuous and strictly increasing function of $t$ with $\zeta_0 \in (0, \E[\iota])$.
\end{definition}
The initial threshold $\zeta_0$ should take up most of the threshold value. It is strictly positive to ensure the inevitability of the burst as the average inventory approaches $0$. We require $\zeta_0 < \E[\iota]$ because the bubble should not burst the moment it starts. We explain in the remarks below the two properties of the endogenous burst time that follow from the definition.
\begin{remark}[Increasing fragility]\label{Remark:endoburst-fragility}
The system should become increasingly vulnerable to extensive selling as the bubble grows, so we require the threshold to be strictly increasing. We also use the running minimum of $\bar\mu^N$, instead of the mean trajectory itself, to allow the market to ``remember'' the time when buying pressure is at its lowest, enabling it to react to the bubble based on the weakest moment. On the other hand, we expect that in most scenarios the average inventory will eventually become monotone decreasing due to the pressure of the burst and terminal cost, which we discuss in the next section. For a more detailed discussion on the monotonicity of average inventory, see \cite{CardaliaguetMFG18}. 
\end{remark}

\begin{remark}[Burst by T]\label{Remark:endoburst-T}
Notice that by this definition, the bubble will always burst at or before $T$. The model does not say anything about the asset after the burst. As such, our model is similar to the classical JLS bubble model \cite{JLS1, JLS2, JLS3, SornetteJohanssenBouchaud96}, in which a critical point marks the end of the bubble and potentially a transition point to a new regime. Also, requiring the bubble to burst \emph{endogenously} by $T$ is not too restrictive in the present setting.
In fact, we impose a terminal cost that encourages liquidation before $T$ , which by definition triggers the endogenous burst.
\end{remark}

An \emph{exogenous} burst, sometimes called a sunspot event, refers to a commonly observable event that triggers the cascading price adjustments from outside the system. We model it by a \emph{random time} $\tau$ on $\R_+$ with distribution $\nu_\tau$ that is independent from the rest of the model. In other words, there is no information within the system that can help anticipate this event. 

Whichever burst takes place first, the price crashes. Therefore, the actual burst time is $$\tau^* \ce \bar{\tau}^N(\mu^N) \wedge \tau.$$ 
The bursting event consumes a fraction of the bubble component, which Abreu and Brunnermeier \cite{AbreuBrunnermeier03} refer to as the ``size'' of the bubble, or the ``loss amplitude'' in the JLS model \cite{JLS4}. We model the size of the bubble by a continuous, monotone increasing deterministic function $\beta: \R_+ \to [0, 1]$. Note that in the traditional JLS model, the loss amplitude is assumed to be a constant fraction \cite{JLS4}. At the time of burst $\tau^*$, the price $P$ drops by $\beta_{\tau^*}\gamma_{\tau^*}$ and thus, trader $i$ experiences the loss of $$X^i_{\tau^*}\beta_{\tau^*}\gamma_{\tau^*} = X^i_{\tau^*}\beta_{\tau^*}\int^{\tau^*}_{0}b(t, F^N_{\bT}(t; \bt))dt.$$
Note that the bubble component does not depend on entry time. In other words, the late joiners face the same level of risk as the early joiners. Adding this jump into the price dynamics, we have
\begin{equation*}
    P_t = P^+_t(1-D_t^*)  + Q_tD_t^* + \gamma_{\tau^*}(1-\beta_{\tau^*})D^*_t,
\end{equation*}
where $D^*_t = \ind{\tau^* \leq t}$ is jump process of the bursting event, which gives us
\begin{equation}
    \begin{split}\label{Price-Dynamics}
            dP_t & = (1-D_t^*)dP^+_t - P_t^{+}dD_t^* + D_t^*dQ_t + Q_tdD_t^* + \gamma_{\tau^*}(1-\beta_{\tau^*})dD^*_t\\
            & = (1-D_t^*)dP^+_t + D_t^*dQ_t - \gamma_{\tau^*}\beta_{\tau^*}dD^*_t.
    \end{split}
\end{equation} 
\subsubsection{Objective and Equilibrium}
Player $i$'s cash process is modeled by $$dK^i_t = -\alpha^i_t[P_t + \kappa \alpha_t^i]dt$$where $\kappa > 0$ is the linear temporary price impact parameter \cite{AlmgrenandChriss01}. Under the usual self-financing condition, the wealth $V^i$ of this player follows
\begin{align*}
    dV^i_t & = dK^i_t + X^i_tdP_t + P_tdX^i_t\\
    & = \pa{-\kappa(\alpha_t^i)^2 + X_t^ib(t, F^N_{\bT}(t; \bt))\ind{t < \tau^*} +X_t^i\qv{\rho, \theta_t^N}_{F^N_\T}}dt - X^i_t\gamma_t\beta_tdD^*_t\\
    &\quad + \sigma_0X_t^idW_t^0 + \sigma P_t dW_t^i.
\end{align*}
The players are allowed to continue trading until $T$, even if the burst has already happened. For a fixed $\phi > 0$, we impose a quadratic running inventory cost $\phi (X_t^i)^2$ which Cartea \textit{et al.} \cite{Carteaetal17} refer to as \emph{ambiguity aversion}. Since the goal for the traders is to liquidate everything by $T$, we also impose a quadratic terminal inventory penalty $c(X^i_T)^2$ with $c > 0$ to encourage selling. Adding these costs to the negative of increase in wealth, we have the total cost of player $i$: $$-(V_T^i - V_{t^i}^i) + \int_{t^i}^T\phi (X_t^i)^2dt + c(X_T^i)^2.$$

\subsubsection{Probabilistic Setup}\label{Subsubsection:ProbSetup}
We formalize the elements mentioned in the previous sections. Consider a single probability space $(\Omega, \F, \P)$ supporting several mutually independent random elements including the exogenous burst time $\tau \sim \nu_\tau$, an infinite sequence of $\{\iota^i\}_{i \in \N} \iid \nu$ and Brownian motions $\{W^i\}_{i \in \N}$. For $N \in \N$, let $\mathbb{F}_N$ be the $\P$-completed natural filtration of the first $N$ Brownian motions initially enlarged by the $(\iota^1, \dots, \iota^N)$. Let $\mathbb{G}_N$ denote the progressively enlargement of $\mathbb{F}_N$ by $\ind{t \leq \tau}$. Under $\mathbb{G}$, stopping time $\tau$ is totally inaccessible\footnote{Details on the construction of $\mathbb{G}_N$ will be given in section \ref{Subsection:PEforExoBurst}.}. This is important since accessible stopping times could be utilized to create arbitrage opportunities, as noted by \citet{Protter2013}. To incorporate varying entry time, consider the product probability space $(\Omega \times [0, \eta]^N, \F \otimes \B([0, \eta]^N), \P \otimes \nu^N)$. Then the sequence of i.i.d entry times $\{\T^i\}_{i \in \N}$ are simply the canonical processes on the respective coordinate. 

We now define admissibility of the control process. Given a filtration $\mathbb{H}$ on $(\Omega, \F, \P)$, let $\mathscr{P}(\mathbb{H})$ denote the $\sigma$-algebra of $\mathbb{H}$-progressively measurable subsets of $[0, T] \times \Omega$, and denote by $\A_{\mathbb{H}}$ the $\mathscr{P}(\mathbb{H})$-measurable, $A$-valued processes. For a given vector $\bt \in [0, \eta]^N$, we call its $i$'th coordinate $t^i$.
\begin{definition}\label{Definition:AdmissibleControlN}
    Given $t^* \in [0, \eta]$, define $\A(t^*; \mathbb{G}_N)$ as the set of all admissible controls with entry time $t^*$, that is:
    $$\A(t^*; \mathbb{G}_N) \ce \{\beta \in \A_{\mathbb{G}_N}: \beta_t = 0 \text{ for } t \in [0, t^*)\}.$$
    For $i = 1, \dots, N$, an admissible strategy $\alpha^i: [0, T] \times \Omega \times [0, \eta]^N \to A$ is a $\mathscr{P}(\mathbb{G}_N) \otimes \B([0, \eta]^N)$-measurable function such that for $\nu^N$-almost every $\bt \in [0, \eta]^N$, we have $\alpha^i(\bt) \ce \alpha^i(\cdot, \cdot, \bt) \in \A(t^i; \mathbb{G}_N)$. Let $\A_{i;\mathbb{G}_N}$ denote all such strategies.
\end{definition}
A profile of varying-entry strategies $\balpha = (\alpha^1, \dots, \alpha^N)$ is collectively admissible if it is in $\A^N_{\mathbb{G}_N} \ce \prod_{i = 1}^N\A_{i; \mathbb{G}_N}$. When the underlying filtration is clear, we abbreviate the notation by dropping $\mathbb{G}_N$ and simply use $\A$ or $\A(t^*)$ for progressively measurable processes on $[0, T] \times \Omega$, and we use $\A_i$ or $\A^N$ for jointly measurable processes on $[0, T] \times \Omega \times [0, \eta]^N$.
\subsubsection{Varying Entry Equilibrium}
Suppose the random entry times $\bT = (\T^1, \dots, \T^N) \sim\nu^N$ have realizations $\bt = (t^1, \dots, t^N) \in [0, \eta]^N$. Let $\balpha = (\alpha^1, \dots, \alpha^N) \in \A^N$ denote the strategy profile of the $N$ traders. We assume that the traders are risk neutral. The objective that player $i$ tries to minimize is the expected total cost
\begin{equation}\label{Nobj}
    J^{N, i}(\balpha; \bt)\ce \E\sqbra{-(V_T^i - V_{t^i}^i) + \int_{t^i}^T\phi (X_t^i)^2dt + c(X_T^i)^2}
\end{equation}
over $\A(t^i)$, the set of admissible strategies for her entry time. 

\begin{remark}
    Our model tacitly assumes that there are noise traders in the market right from the beginning ensuring supply and demand on the asset. 
    Therefore, only the ``sophisticated'' investors $i\in \{1, \dots,N\}$ influence the growth of the bubble, and we do not impose a net-zero inventory condition.
    In particular, as often done in the literature on optimal execution (see e.g. \cite{AlmgrenandChriss01,CardaliaguetMFG18,bookAlgoandHFT15} and the references therein), the price process is not formed at equilibrium, but rather pre-specified, albeit controlled.
\end{remark}

Let $\mathcal{P}(A)$ denote the space of probability measures on $A$. We define the running costs before burst $f^+:[0, T]\times \R \times [0, 1] \times \mathcal{P}(A) \times A \to \R$ and after burst $f^-:[0, T]\times \R \times \mathcal{P}(A) \times A \to \R$ as
\begin{align}
\label{running_cost_f+}f^+(t,x,r, q, a) &= \kappa a^2 + \phi x^2 -\qv{\rho, q}_{F^N_\T}x -b(t, r)x \\
\label{running_cost_f-}f^-(t,x, q, a) &= \kappa a^2 + \phi x^2- \qv{\rho, q}_{F^N_\T}x,
\end{align}
At the time of burst $\tau^* \leq T$, the player lose an additional $X^i_{\tau^*}\beta_{\tau^*}\gamma_{\tau^*}$.  Since $A$ is bounded and $b, g$ continuous, $X^i_t$ and $P_t$ are square integrable on $[0, T]$. In particular, $\int X^idW^0$ and $\int PdW^i$ are martingales on $[0, T]$. Therefore, we can rewrite (\ref{Nobj}) as 
\begin{equation}\label{eq:Objective_N}
\begin{split}
J^{N,i}(\balpha; \bt) &\ce \E\left [ \int_{t^i}^{\tau^*}f^+(t, X^i_t, F^N_\T(t; \bt), \theta^N_t, \alpha^i_t(\bt))dt \right.\\
& \left .+ \int_{\tau^*}^T f^-(t, X_t^i, \theta_t^N, \alpha_t^i(\bt))dt + X^i_{\tau^*}\beta_{\tau^*}\gamma_{\tau^*}+ c(X_T^i)^2 \right].   
\end{split}
\end{equation}
Observe that integrating from $t^i$ is equivalent to integrating from $0$ because both $\alpha^i_t$ and $X^i_t$ are enforced to be $0$ before $t^i$, and $f^+(t, 0, r, q, 0) = f^-(t, 0, q, 0) = 0$ for all $(t, r, q) \in [0, T] \times \R \times \mathcal{P}(A)$. Similarly, in the trivial case where $\tau^* < t^i$ (i.e. the bubble bursts endogenously before entry), we have $J^{N,i} \equiv 0$. 

Given $\balpha \in \A^N$ and $\beta \in \A(t^i)$, denote by $\balpha^{\beta;-i}$ the strategy profile $$(\alpha^1, \dots, \alpha^{i-1}, \beta, \alpha^{i+1}, \dots, \alpha^N) \in \A^N.$$ 
We now define the Nash equilibrium for varying entry times.
\begin{definition}\label{Definition:NplayerNE*}
A strategy profile $\hat{\balpha} = (\hat{\alpha}^1, \dots, \hat{\alpha}^N) \in \A^N$ is a Nash equilibrium of the $N$-player game with \emph{varying entry time} if for $\nu^N$-almost every $\bt \in [0, \eta]^N$:
$$J^{N, i}(\hat{\balpha}; \bt) = \inf_{\beta \in \A(t^i)}J^{N,i}(\hat{\balpha}^{\beta;-i}; \bt) \text{ for every } i \in \{1, 2, \dots, N\}.$$ 
\end{definition}

In this paper we are particularly interested in the case where the number of agents is very large. Given that such problems are both mathematically and numerically intractable for large $N$, we consider the MFG counterpart which we now describe.

\subsection{The Limiting MFG}\label{Subsection:MFGModel} 
As $N\to \infty$, intuitively the individual impact becomes weaker. As a result, MFGs considers a single representative player who is playing against a \emph{distribution} of players. Let $(\Omega, \F, \P)$ be some probability space that supports a standard Brownian motion $W$ and mutually independent random variables $\iota, \tau$ with respective laws $\lambda_0, \nu$ representing the initial state and exogenous burst time, respectively.

Just as in Section \ref{Subsubsection:ProbSetup}, consider the product probability space $(\Omega \times [0, \eta], \F \otimes \B([0, \eta]), \P \otimes \nu)$. We naturally extend $\iota, \tau, W$ on the product space by putting $$\iota(\omega, t^*) = \iota(\omega), \quad \tau(\omega, t^*)= \tau(\omega), \quad W(\omega, t^*) = W(\omega)$$ so that the initial inventory, exogenous burst and Brownian motion retain their original laws. Define $\T(\omega, t^*) = t^*$ the random entry time of the representative player, which is independent from the other variables. Let $\mathbb{F}$ be the $\P$--completion of the natural filtration of the process $(\iota, W_t)_{t \in [0, T]}$ and let $\mathbb{G}$ be the smallest filtration containing $\mathbb{F}$ and making $\tau$ a totally inaccessible stopping time. Following Definition \ref{Definition:AdmissibleControlN}, we define admissibility of the MFG control.
\begin{definition}\label{Definition:AdmissibleControl}
An admissible strategy $\alpha: [0, T] \times \Omega \times [0, \eta] \to A $ for varying entry times is a $\mathscr{P}(\mathbb{G}) \otimes \B([0, \eta])$--measurable function such that for $\nu$--almost every $t^* \in [0, \eta]$, we have $\alpha({t^*}) \ce \alpha(\cdot, \cdot, t^*) \in \A(t^*)$. Let $\A^*$ denote all such strategies.
\end{definition}

The formulation of the control problem in the $N$-player game is known as the `` strong'' formulation. The associated MFG can be formulated similarly and naturally leads to solving a coupled system of forward-backward stochastic differential equations \cite{FBSDE13,CarmonaBookI, CarmonaBookII}. While the analysis of such systems is typically quite involved, it would be further complicated in our setting due to the dependence of $\tau^*$ in $\mu$ and the presence of random time $\tau$. Therefore, we choose to adopt the ``weak'' formulation of the game that is more suitable in the present case. It is well-known that under mild conditions, strong and weak formulations of stochastic control problems coincide (see e.g. \cite{WeakFormulationEquivalence22, LudoEquivalence20, KarouiNguyenJeanblanc87}).
In the context of MFGs, the weak formulation was first analyzed in the works of Carmona and Lacker \cite{CarmonaLacker15} and Campi and Fischer \cite{CampiFischer18}, from whom we borrow some intuition and ideas. We also refer to Possama\"i and Tangpi \cite{Possamai-Tangpi} for more recent developments. 

Let us define the driftless inventory 
\begin{equation}\label{eq:driftlessInv_model}
X_t \ce \iota + \sigma (W_t - W_\T) \text{ for } t \geq \T \quad \text{and} \quad X_t \ce 0 \text{ for } t \in [0, \T).    
\end{equation}
For $\alpha \in \A^*$, define the probability measure\footnote{ $\Q^1 \sim \Q^2$ denote that the two probability measures $\Q^1$ and $\Q^2$ are equivalent.} $\P^\alpha \sim \P \otimes \nu$ by  $$\frac{d\P^\alpha}{d\P \otimes \nu} \ce \mathcal{E}\pa{\int_0^\cdot \sigma^{-1}\alpha_sdW_s}_T,$$
and by $\mathcal{E}(\cdot)$ we denote the stochastic exponential of a martingale. In particular, if $M$ is a continuous $(\P, \mathbb{G})$-martingale starting from zero with quadratic variation $\qv{M}$, we have
\begin{equation*}
    \mathcal{E}(M)_t \ce \exp\pa{M_t - \frac{1}{2}\qv{M}_t}, \quad t \in [0, T].
\end{equation*}
By Girsanov's theorem and the boundedness of $A$, the process $W^\alpha$ defined by $$W^\alpha_t \ce W_t - \intt \sigma^{-1}\alpha_sds$$ is a $\P^\alpha$-Wiener process. The state process $X$ satisfies under $\P^\alpha$ the dynamics $$dX_t = \alpha_tdt + \sigma dW^\alpha_t \text{ for } t \geq \T \quad \text{and} \quad X_t = 0 \text{ for } t \in [0, \T).$$

Let $\X \ce C\pa{[0, T], \R}$ denote the space of all continuous functions from $[0, T]$ to $\R$ equipped with the sup--norm $\norm{\boldsymbol{x}}_\infty :=\sup_{t\in [0,T]}|\boldsymbol{x}_t|$ for $\boldsymbol{x}\in \X$. Let $D([0, T], \R)$ denote the space of all c\`adl\`ag functions from $[0, T]$ to $\R$. For $t^* \in {[0, \eta]}$, define 
\begin{equation*}
    \X^{t^*} \ce \Big\{\bx(t^*) \in D([0, T], \R): \bx_t(t^*)= 0 \text{ on } [0,t^*), \text{ continuous on } [t^*, T]\Big\}\text{ and } \X^{*} \ce \bigcup_{t^* \in [0, \eta]} \X^{t^*}.
\end{equation*}
For each $\bx(t^*) \in \X^*$, we require $t^*$ to be the largest value such that $\bx(t^*) \in \X^{t^*}$ to avoid redundancies. Suppose $t_1, t_2 \in {[0, \eta]}$. Let $\bx(t_1), \by(t_2) \in \X^{*}$. Notice that $d(\bx(t_1), \by(t_2) )= ||\bx(t_1) - \by(t_2)||_\infty$ is not a good metric on $\X^*$ because it does not allow two processes to be close unless $t_1 = t_2$. For each $\bx(t^*) \in \X^*$, we first define its continuous counterpart in $\X$ as
 $$\bar{\bx}_t(t^*) \ce \begin{cases}
    \bx_{t^*}(t^*) & t \in [0, t^*)\\
    \bx_t(t^*) & t \geq t^*.
\end{cases}
$$
We then define a more suitable metric on $\X^* \subset D([0, T], \R)$ to be 
\begin{equation}\label{supmetric}
    d_{\X^*}(\bx(t_1), \by(t_2)) \ce \|\bar{\bx}(t_1) - \bar{\by}(t_2)\|_\infty + |t_1 - t_2|.
\end{equation}
The metric space $(\X^*, d_{\X^*})$ where $X$ takes values is separable and complete (see Lemma \ref{Lemma:MetricSpace}). We sometimes also use the notation $\norm{\bx}_{\infty}$ for $\bx \in \X^*$ to mean $d_{\X^*}(\bx, \bzero)$.

Viewing the MFG as the $N \to \infty$ limit of the $N$--player game, we work with the natural limits of the empirical CDFs and empirical measures. In particular, the empirical laws $\mu^N$ and $\theta^N$ should be replaced by the law of $X$ and $\alpha$. For a Polish space $E$, let $\B(E)$ denote the Borel subsets of $E$, and let $\mathcal{P}(E)$ denote the set of all probability measures on $\B(E)$. Unless specified otherwise, we will equip $\mathcal{P}(E)$ with the topology of weak convergence of measures. 
Then $\mathcal{P}(E)$ is also a Polish space. Define the space of probability measures of inventory processes and controls as follows:
\begin{gather}
    \label{Definition:M}\M \ce \cbra{\mu \in \mathcal{P}(\X^*): \int_{\X^*} d_{\X^*}(\bx, \mathbf{0})\mu(d\bx) < \infty} \text{ and }\\
    \label{Definition:Theta}\Theta \ce \cbra{\theta \in \mathcal{P}([0, T] \times A) \text{ whose first projection is Lebesgue measure}}.
\end{gather}

Note that the disintegration theorem gives us a $dt$-almost everywhere unique family of probability measures with $\theta_t \in \mathcal{P}(A)$ such that $\theta(dt, da) = \theta_t(da)dt$. Suppose we are now given $(\theta, \mu) \in \Theta \times \M$ (as opposed to a given set of strategies and inventories of the other players in the finite player game). For $\mu \in \M$, let $\mu_t$ denote the image of $\mu$ under the $t$-projection $\X^* \ni \omega \mapsto \omega_t \in \R$. 

The empirical CDF of entry is also replaced by the CDF of $\nu$, simplifying the bubble component to $\gamma_t = \intt b(s, F_\T(s))ds$ which is no longer stochastic. Again, we assume that $F_\T(0) > 0$ to ensure that the bubble does not burst immediately. Intuitively, there should be initial players who are already trading the bubble asset in the beginning. Following \eqref{empirical_impact} and \eqref{empirical_meanmu}, we define the average impact and inventory among players in the game at time $t \in [0, T]$ as
\begin{equation}\label{eq:MFG_averages}
 \qv{\rho, \theta_t}_{F_\T} \ce \frac{1}{F_\T(t)}\int_A \rho(a)\theta_t(da), \quad \bar{\mu}_t \ce \frac{1}{F_\T(t)}\int_\R x\mu_t(dx), \quad (\theta, \mu) \in \Theta \times \M.
\end{equation}
We also define the endogenous burst time with respect to this average inventory $\bar{\mu}$, in analogy to the $N$-player version in Definition \ref{Def:N-playerendoburst}.
\begin{definition}\label{Def:MFGendoburst}
The endogenous burst time for the MFG is defined as $$\bar{\tau}(\mu) \ce \inf\cbra{t > 0: \inf_{s \in [0, t]}\bar{\mu}_s \leq \zeta_t} \wedge T$$
where $\zeta$ is a deterministic, continuous and strictly increasing function of $t$ with $\zeta_0 \in (0, \E[\iota])$.
\end{definition}
\noindent The actual burst time of the MFG is 
$$\tau^* \ce \bar{\tau}(\mu) \wedge \tau.$$
Let us recall from \eqref{Price-Dynamics} that the price process follows
\begin{equation*}
    P_t = P_0 + \intt b(s, F_\T(s))(1-D^*_{s})ds+\intt \qv{\rho, \theta_s}_{F_\T}ds +\sigma_0W_t^0 - \intt \beta_s\gamma_sdD^*_s.
\end{equation*}
Therefore, under self-financing condition the wealth process $V_t$ follows 
\begin{equation*}
    dV_t = \sqbra{-\kappa\alpha_t^2 + X_t\pa{b(t, F_\T(t))(1-D^*_{t})+ \qv{\rho, \theta_t}_{F_\T}}}dt-X_t\beta_t\gamma_tdD^*_t +\sigma_0X_tdW_t^0 + \sigma P_tdW^{\alpha}_t.
\end{equation*}
Given $(\theta, \mu) \in \Theta \times \M$, we then have
\begin{align}
    \notag
    J^{\theta, \mu}(\alpha) &\ce \E^{\P^\alpha}\sqbra{-(V_T - V_{0}) + \int_{\T}^T\phi (X_t)^2dt + c(X_T)^2} \\
    \label{Definition:Modelobj}
    &=\E^{\P^\alpha}\sqbra{g(X, \tau^*) + \int_\T^T f(t,X_t, \theta_t, \tau^*, \alpha_t)dt}
\end{align}
where the terminal cost $g(X, \tau^*)$ is the $\G_T$-measurable random variable 
\begin{equation}\label{terminalcost}
    g(X, \tau^*) \ce X_{\tau^*}\beta_{\tau^*}\gamma_{\tau^*}+ cX_T^2
\end{equation}
and $f: (t, x, q, \tau^*, a) \in [0, T] \times \R \times \mathcal{P}(A) \times \R_+  \times A \to \R$ is given by
\begin{equation}\label{runningcost}
    f(t,x, q, \tau^*, a) \ce \kappa a^2 + \phi x^2 - x\sqbra{b(t, F_\T(t))(1-D^*_{t})+ \qv{\rho, q}_{F_\T}}.
\end{equation}

\begin{definition}\label{Definition:ModelNE}
An equilibrium for the bubble riding MFG is a triplet $(\hat{\alpha}, \theta, \mu) \in \A^* \times \Theta \times \M$ such that $\hat{\alpha}$ minimizes the objective $J^{\theta,\mu}$ in \eqref{Definition:Modelobj} over $\A^*$ given $(\theta, \mu)$, and that $(\theta, \mu)$ satisfies the fixed point condition $(\theta_t, \mu) = (\theta_{t; \text{MFG}}, \mu_{\text{MFG}})\ dt$-a.e. where
   \begin{equation*}
       \mu_{\text{MFG}} \ce \P^{\hat{\alpha}}\circ \pa{X}^{-1} \text{ and }\ \theta_{t;\text{MFG}} \ce \P^{\hat{\alpha}}\circ \pa{\hat{\alpha}_t}^{-1} \text { for almost every } t \in [0, T].
   \end{equation*}
\end{definition}

\begin{remark}
Before the representative player enters, both her state $X$ and control $\alpha$ are kept at $0$. At entry, the player has initial state $\iota$ which is $\mathcal{F}_0$--measurable. Note that in the case of fixed entry time, the flow of measures $\mu$ of the state in equilibrium should satisfy $\mu_0 = \lambda_0$ by construction. 
This is no longer the case as the initial distribution is now ``diluted'' by the crowd of traders who have not entered the game and whose controls and states are kept at $0$.
\end{remark}
\begin{remark}\label{Remark:CommonNoise}
 One might argue that an admissible control should not only be $\mathbb{G}$-progressive, but also progressively measurable with respect to the common noise $W^0$. Our definition of admissible strategies $\A^*$ assumes that players do not take into account the common noise. 
   %We are assuming that the model does not include common noise, which
   Considering common noise would drastically complicate the setup (see our accompanying paper \cite{TangpiWang23}).
   This measurability requirement of $\A^*$ is however not too restrictive.
     Indeed, since the running cost $f$, terminal cost $g$, and the dynamics of $X$ do not depend on the common noise, and additionally the players are risk-neutral, we show in Proposition \ref{Prop:CommonNoise} that any MFG equilibrium without common noise is a MFG equilibrium with common noise in our special setting.
     Therefore, it suffices to only consider $\mathbb{G}$-progressive controls. See Section \ref{Subsection:CommonNoise} for a more detailed discussion.
\end{remark}

\begin{assumption}{BT1}\label{Assumption:BT1model}
    The exogenous burst satisfies $\P(\tau > T) > 0$, and the compensator process\footnote{For more details, see Section \ref{Subsection:PEforExoBurst}.} $K$ for the jump process $D_t = \ind{\tau \leq t}$ takes the form $$dK_t = k_tdt$$ where the $\mathbb{F}$-predictable, non-negative intensity process $k$ is bounded on $[0, T]$.
\end{assumption}

\begin{remark}\label{Remark:BT1model}
Here we assume existence of a density for $\tau$, making $D$ a single jump Cox process. We take this standard assumption from \cite{KharroubiLim14}. If $ \tau$ has density function $f_{\tau}(t) = r_t\exp(\intt -r_sds)$ for a function $r_t$ non-negative on $[0, \infty)$, then Assumption \eqref{Assumption:BT1model} is satisfied with $k_t = r_t$. The burst intensity process $(k_t)_{t \in [0, T]}$ is assumed to be known by the players. 
    It represents the players' prior regarding the occurrence of an exogenous burst. 
\end{remark}

\begin{assumption}{MA}\label{Assumption:MA}~
\begin{enumerate}[label=(MA\arabic*)]
    \item  \label{Assumption:MA1} The set $A \subset \R$ is a compact interval that contains $0$.
    \item \label{Assumption:MA2}  The permanent price impact function $\rho: A \to \R$ is Borel measurable, locally bounded with $\rho(0) = 0$. The loss amplitude $\beta: [0, T] \to \R_+$ is continuous. The bubble trend function $b: [0, T] \times [0, 1] \to \R_+$ is Lebesgue integrable.
    \item \label{Assumption:MA3} The initial inventory $\iota$ has all moments.
\end{enumerate}
\end{assumption}
We state our main result for the bubble model.
\begin{theorem}\label{Theorem:ModelExistence}
    In addition to Assumptions \eqref{Assumption:BT1model} and \eqref{Assumption:MA}, also assume that $F_\T(0) > 0$ and $F_\T(\cdot)$ is continuous on $[0, \eta]$. Then there exists an equilibrium $(\hat{\alpha}, \theta, \mu)$ to the bubble riding MFG. Moreover, the optimal strategy can be decomposed as
\begin{equation}\label{eq:decomp.alpha.Theorem0}
        \hat{\alpha}_t(t^*) = \hat{\alpha}_t^+(t^*) \ind{t \leq \tau^*} + \hat{\alpha}^-_t(t^*,\tau^*) \ind{t > \tau^*}\quad dt\otimes\P\otimes\nu\text{--a.s.}
\end{equation}
    where the pre-burst control $\hat{\alpha}^+$ is $\mathscr{P}(\mathbb{F})\otimes \B([0, \eta])$-measurable, and the post-burst control $\hat{\alpha}^-$ is $\mathscr{P}(\mathbb{F})\otimes \B([0, \eta]) \otimes \B([0, T])$--measurable.
\end{theorem}

Before going any further, let us briefly comment on Theorem \ref{Theorem:ModelExistence} and its assumptions. Being an MFG equilibrium implies that the optimal control $\hat{\alpha}$ is $\mathscr{P}(\mathbb{G})\otimes \B{[0, \eta]}$-measurable.
Adapting existing fixed point arguments to the present setting allows to obtain a MFG equilibrium $\hat\alpha$ in the form of a $\mathbb{G}$--predictable process.
However, the filtration $\mathbb{G}$ is not a model input.
Equation \eqref{eq:decomp.alpha.Theorem0} allows us to decompose $\hat\alpha$ into two parts: before the burst occurs, the player follows the strategy $\hat\alpha^+$; after the burst, she uses $\hat\alpha^-(\cdot)$ which depends on the burst time. Both strategies are based only on the market information, i.e. the filtration $\mathbb{F}$.
Regarding the model assumptions,
\ref{Assumption:MA1} is typically assumed in MFGs, see e.g. \cite{CarmonaLacker15,CampiFischer18}.
\ref{Assumption:MA2} includes mild regularity conditions on the model inputs, and \ref{Assumption:MA3} is assumed to simplify the exposition.
Assumption \ref{Assumption:BT1} is discussed in Remark \ref{Remark:BT1model}. In addition, we assume that there exist initial players who enter at time $0$, who represent the force required to trigger the displacement phase of the bubble.

\begin{remark}\label{Remark:Approximation}
Regarding the link between the $N$--player game described in Section \ref{sec:Nplayer.game} and the MFG, it is well known that MFG equilibria in Theorem \ref{Theorem:ModelExistence} approximates the \emph{weak formulation} of the N-player game (see e.g. \cite[Theorem 4.2]{CarmonaLacker15}). In particular, a MFG equilibrium can be used to construct a so-called $\varepsilon$-approximate Nash equilibrium. We provide further details in Section \ref{Subection:Approximation}.
\end{remark}

Since the MFG considered in this paper appears not to have been studied in the literature so far (both for fixed and varying entry times), we propose to analyze it in a general framework. The proof of Theorem \ref{Theorem:ModelExistence} is given in Section \ref{Section:ModelRevisit} by applying Theorem \ref{Theorem:MFGExistence}, the generalized result.
The techniques developed below can undoubtedly apply to other examples of games where players enter the game at varying times or are subject to an unpredictable random shock. To better illustrate the interplay between the equilibrium strategies and the two types of bursts, we now provide some numerical observations on the fixed entry time equilibrium liquidation strategy under different settings.
\begin{figure}
    \centering
    \captionsetup[subfigure]{justification=centering}
    \begin{subfigure}[ht]{0.49\textwidth}
        \centering
        \includegraphics[width=\textwidth]{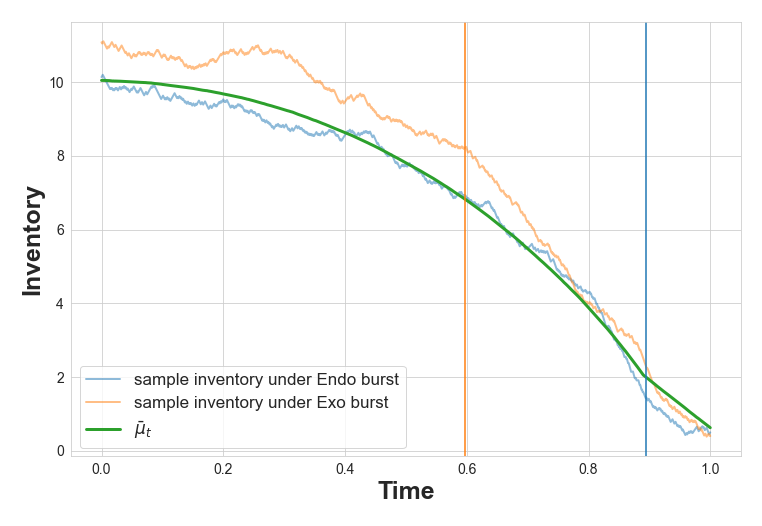}
        \caption{Inventory $X$ with Initial Distribution $N(10, 2)$}
        \label{fig:DefaultA}
    \end{subfigure}
        \hfill
    \begin{subfigure}[ht]{0.49\textwidth}
        \centering
        \includegraphics[width=\textwidth]{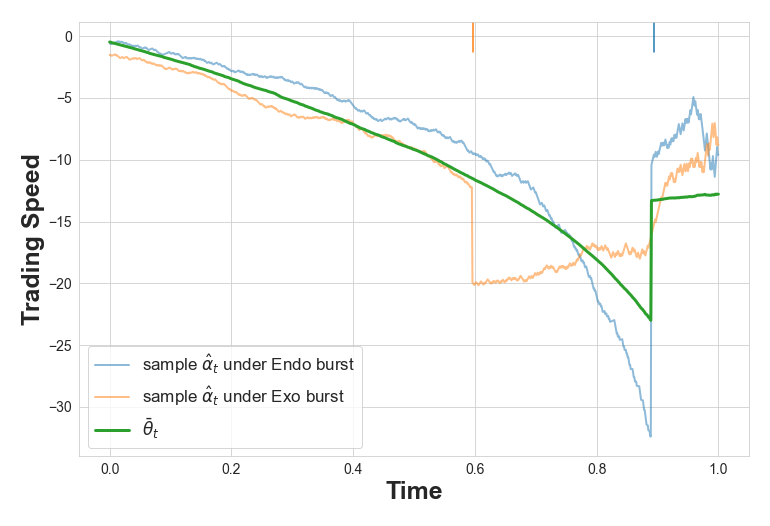}
        \caption{Signed Trading Speed $\alpha$}
        \label{fig:DefaultB}
    \end{subfigure}

    \caption{Equilibrium Strategies in Default Setting\\ Objective Value $J^{\theta, \mu}(\hat{\alpha})= -81.9$}
    \label{fig:Default}
\end{figure}

Anticipating a more detailed discussion on the simulation results in Section \ref{Subsection:Numerical}, notice that the presence of the bubble slows down the trading speed initially, as players attempt to ride the bubble first and collectively attack it later, which can be observed through the increasing selling speed before the endogenous burst (see endogenous burst sample path Figure \ref{fig:Default}). This bubble riding behavior leads to a concave-shaped inventory, in contrast to a convex inventory curve found in standard optimal execution MFGs (see \cite[Figure 3]{CardaliaguetMFG18}, \cite[Figure 6.2]{bookAlgoandHFT15} or our Figure \ref{fig:NoBubble} below).
On the other hand, when an early exogenous burst occurs, in which case the traders are caught off guard, we see a sudden increase in selling speed to avoid terminal penalty (see exogenous burst sample path in Figure \ref{fig:Default}). It turns out that the level of fear among the traders towards the likelihood of an exogenous shock plays a crucial role in the shape of their equilibrium strategies. For a more detailed discussion on the model results, the reader may skip the more technical sections below and go directly to Section \ref{Section:ModelRevisit}.

%!TEX root = main.tex

\section{General Framework for Fixed Entry Time}\label{Section:PE}
In this section, we temporarily fix the entry time and generalize the model inputs. Specifically, we set $\nu$ as the Dirac measure concentrated at $0$. 

\subsection{Progressive Enlargement of Filtration}\label{Subsection:PEforExoBurst}
Before we begin studying the general framework, let us say a few words on the appropriate filtered probability space on which $\tau$ is a totally inaccessible stopping time. The suitable filtration is obtained by progressive enlargement. Let us recall the filtered probability space $(\Omega, \F, \mathbb{F}, \P)$ from Section \ref{Subsection:MFGModel}. To construct a filtration under which $\tau$ is a totally inaccessible stopping time, define on $\R_+$ the jump process for exogenous burst time:
\begin{equation*}
    D_t \ce \ind{ \tau \leq t}.
\end{equation*}
Let $(\mathcal D_t)_{t \in [0,T]}$ denote the natural filtration of the exogenous burst time $(D_t)_{t\in [0, T]}$. Define the progressively enlarged filtration 
\begin{equation*}
    \G_t \ce \F_t \vee \mathcal D_t, \qquad \mathbb G = (\G_t)_{t \in [0, T]}.
\end{equation*}
Note that $\mathbb{G}$ is the smallest filtration which contains $\mathbb{F}$ and such that $\tau$ is a $\mathbb{G}$-stopping time. Since $\mathbb{F}$ is just the $\P$-completed Brownian filtration, by martingale representation theorem we have that any $\mathbb{F}$-martingale is continuous. By \cite{Azema93}, random time $\tau$ is a $\mathbb{G}$-totally inaccessible stopping time. Since we assume that $\tau$ is independent from $W$ and $\iota$, it follows from \cite[Proposition 1.3]{CarmonaBookII} that $\mathbb{F}$ is immersed in $\mathbb{G}$. In other words, the so-called (H)-hypothesis is satisfied: any square integrable $(\mathbb F, \P)$--martingale is a square integrable $(\mathbb G, \P)$-martingale. In particular, $W$ remains a ($\mathbb G, \P$)--Wiener process by L\'evy's characterization theorem. From now on, we work with the $\P$-completed filtered probability space $(\Omega, \F, \mathbb G, \P)$. Note that for any $\mathbb{G}$--predictable process $h$, there exists an $\mathbb{F}$-predictable process $f$ such that $h_t\ind{t \leq \tau} = f_t\ind{t \leq \tau}$. This process is unique as long as $\P(\tau > T) > 0$ (see \cite[Page 186 (a)]{bookDellacherie92}). Since $D$ is a $\mathbb{G}$--submartingale, by Doob--Meyer decomposition we can find a unique, $\mathbb{F}$--predictable, increasing compensator process $K$ with $K_0= 0$ and such that 
\begin{equation}\label{compensator MG}
M_t = D_t - \intt (1- D_{s-})dK_s
\end{equation}
is a $(\P, \mathbb{G})$--martingale, where $D_{t-} \ce \ind{\tau < t}$. Since $\T = 0$ almost surely, the set of admissible controls in Definition \ref{Definition:AdmissibleControl} effectively collapses to just $\A$, the set of $A$-valued, $\mathbb{G}$-progressively measurable processes. 

\subsection{Preliminaries, Assumptions, and Existence Result}\label{Subsection:weakfomulation}
Let us recall $\M$ and $\Theta$ introduced in \eqref{Definition:M} and \eqref{Definition:Theta}. Equip $\M$ with the topology induced by the Wasserstein metric
$$\W(\mu, \mu') \ce \inf_{\pi \in \Pi(\mu, \mu')}\int_{\X\times\X}\norm{\bx - \by}_\infty d\pi(\bx, \by)$$
where $\Pi(\mu, \mu')$ is the set of couplings of $(\mu,\mu')$.
Let $\bar\tau$ be a continuous function mapping $\M$ to $[0,T]$.
Our generalization of the game introduced in Section \ref{Subsection:MFGModel} is to solve the following MFG:
For any $(\theta,\mu) \in \Theta\times \mathcal{M}$, let $\hat\alpha$ solve
\begin{equation}
\label{eq:Def.MFG.general}
    \begin{cases}
        \inf_{\alpha\in \A}\E^{\P^{\alpha}}\Big[g(X, \bar\tau(\mu), \tau) + \int_0^Tf(s, X,\theta,\bar\tau(\mu),\tau, \alpha_s )\,ds \Big]\\
        dX_t = \sigma(t, X)\,dW_t\quad \frac{d\P^{\alpha}}{d\P} = \mathcal{E}(\int_0^\cdot \sigma^{-1}(s,X)b(s, X, \alpha_s)\,dW_s)_T,\quad \tau^*=\tau\wedge \bar\tau(\mu).
    \end{cases}
\end{equation}
A MFG equilibrium (for fixed entry time) is a triplet $(\hat\alpha, \theta,\mu) \in \A\times \Theta\times \mathcal{M}$ that solves the above system such that $\P^{\hat\alpha}\circ X^{-1} = \mu$ and $\P^{\hat\alpha}\circ \hat\alpha^{-1}_t = \theta_t$ for almost every $t \in [0, T]$. 

Observe that unlike the specific case of the bubble riding game, here we allow the coefficients of the game to be path--dependent.
In fact, we consider the following conditions:
\begin{assumption}{SD}\label{Assumption:SD}~
\begin{enumerate}[label={(SD\arabic*)}]
    \item \label{Assumption:SD1} The control space $A \subset \R$ is a compact interval that contains $0$;
    \item \label{Assumption:SD2} The function $b:[0, T] \times \X \times A \to  \R$ is continuous in $a$ for fixed $(t, \bx)$ and satisfies $b(t, \bx, \cdot) = b(t, \bx_{\cdot \wedge t}, \cdot)$ for all $(t, \bx)$. Moreover, there exists a constant $\ell_b > 0$ such that for all $(t, \bx, \bx', a) \in [0, T]\times \X \times \X \times A$,
    \begin{align*}
        \abs{b(t, \bx, a) - b(t, \bx', a)} &\leq \ell_b\pa{\norm{\bx_{\cdot \wedge t} - \bx'_{\cdot \wedge t}}_{\infty}}\\
        \abs{b(t, \bx, a)} &\leq \ell_b(1 + \norm{\bx_{\cdot \wedge t}}_\infty + |a|);
    \end{align*}
    \item \label{Assumption:SD3} The function
    $\sigma :[0, T] \times \X \to \R$
    satisfies $\sigma(t, \bx) = \sigma(t, \bx_{\cdot \wedge t})$ for all $(t, \bx)$, and there exists a constant $\ell_\sigma > 0$ such that for all $(t,\bx,\bx') \in [0, T] \times \X \times \X$,
    \begin{align*}
            |\sigma(t, \bx)- \sigma(t, \bx')| & \leq \ell_\sigma\norm{\bx_{\cdot \wedge t} - \bx'_{\cdot \wedge t}}_{\infty}\\
            |\sigma(t, \bx)| &\leq \ell_\sigma(1 + \norm{\bx_{\cdot \wedge t}}_{\infty});
    \end{align*}
    \item \label{Assumption:SD4} The distribution $\lambda_0$ of initial state $\iota$ has all moments. 
    For all $(t, \bx) \in [0, T] \times \X$ we have $\sigma(t, \bx) > 0$. Moreover, $\sigma^{-1}(t, \bx)b(t, \bx, a)$ is uniformly bounded in all variables. 
    \end{enumerate}
\end{assumption}
It is well known that under Assumption \ref{Assumption:SD3}, the driftless SDE
\begin{equation}\label{DriftlessState}
    dX_t = \sigma(t, X)dW_t, \qquad X_0 = \iota
\end{equation}
admits a unique strong solution (see e.g. \citet[Chapter V.3 Theorem 7]{bookprotter2005}), and $X$ is always referred to as the unique solution to \eqref{DriftlessState} throughout this section. 
Let us recall that by Girsanov's theorem, for all $\alpha\in \A$ the process $W^{\alpha}$ defined by 
\begin{equation*}
    W^\alpha_t \ce W_t - \intt \sigma^{-1}(s, X)b(s, X, \alpha_s)ds
\end{equation*} 
is a $(\P^\alpha, \mathbb{G})$--Brownian motion, and under $\P^\alpha$ the driftless state $X$ weakly solves the state equation $$dX_t = b(t, X, \alpha_t)dt + \sigma(t, X)dW_t^\alpha,\quad X_0 = \iota.$$
We make the following assumptions on $\tau$ and $\bar\tau$:
\begin{assumption}{BT} \label{Assumption:BT}\
\begin{enumerate}[label={(BT\arabic*)}]
    \item \label{Assumption:BT1} The exogenous burst satisfies $\P(\tau > T) > 0$. The compensator process $K$ defined in (\ref{compensator MG}) takes the form $$dK_t = k_tdt$$ where the $\mathbb{F}$-predictable, non-negative intensity process $k$ is bounded on $[0, T]$. 
    \item \label{Assumption:BT2} The endogenous burst time $\bar{\tau}: \M \to [0, T]$ is continuous.
\end{enumerate}
\end{assumption}

\noindent Now we specify the assumptions on the running cost $f$ and terminal cost $g$.

\begin{assumption}{C}\label{Assumption:C}~
\begin{enumerate}[label={(C\arabic*)}]
    \item \label{Assumption:C1} The running cost function $f: [0, T] \times \X \times \mathcal{P}(A) \times [0, T] \times \R_+ \times A \to \R$ is (jointly) Borel measurable and satisfies $f(t, \bx, \cdot, \cdot, \cdot) = f(t, \bx_{\cdot \wedge t}, \cdot, \cdot, \cdot)$. It can be decomposed as 
    \begin{equation*}
        f(t, \bx, q, s, \tau, a) = f^0(t, \bx_{\cdot \wedge t}, q, s, a) \ind{0 \leq t < \tau} + f^1(t,\bx_{\cdot \wedge t}, q, a) \ind{\tau \leq t}.
    \end{equation*}
    Furthermore, $f^0$ and $f^1$ satisfy the following separability properties:
    \begin{equation*}
        \begin{split}
            f^0(t, \bx, q, s, a) & = f_a(t, \bx, a) + f_b(t, \bx, q) + f_c(t, \bx, q)\ind{0 \leq t < s} \\
            f^1(t, \bx, q, a) & = f_a(t, \bx, a) + f_b(t, \bx, q).
        \end{split}
    \end{equation*}
    The functions 
    $f_a(t, \bx, \cdot)$, $f_b(t, \bx, \cdot)$ and $f_c(t, \bx, \cdot)$ are continuous for each $(t, \bx)$. In addition, there exists $p \geq 0$ and $\ell_f > 0$ such that for all $(t, \bx, q) \in [0, T] \times \X \times \mathcal{P}(A)$:
    $$|f_a(t, \bx, a)| + |f_b(t, \bx, q)| + |f_c(t, \bx, q)| \leq \ell_f\pa{1 + \norm{\bx_{\cdot \wedge t}}_\infty^p +\int_A |a|^p q(da)}.$$
    
    \item \label{Assumption:C2} 
    The terminal cost function $g:\X\times [0,T] \times \R_+ \to \R$ is (jointly) Borel measurable and can be decomposed as
    \begin{equation*}
        g(\bx, s, \tau) = g^0(\bx, s)\ind{\tau > T} + g^1(\bx,s \wedge \tau)\ind{\tau \leq T}.
    \end{equation*}
   Furthermore, the functions $g^0, g^1: \X\times [0, T] \to \R$ are continuous in the time variable, and there exists $p \geq 0 $ and $\ell_g > 0$ such that for all $(\bx, t) \in \X \times [0, T]$: $$|g^0(\bx, t)| + |g^1(\bx, t)| \leq \ell_g(1 + \norm{\bx}_{\infty}^p).$$

    \item \label{Assumption:C3} For each $(t, \bx, q,s, \tau,z) \in [0, T] \times \X \times \mathcal{P}(A) \times [0, T] \times \R_+ \times \R$, there is a unique $\hat{a} \in A$ such that 
    \begin{equation*}
        \hat a = \argmin_{a\in A}H(t, \bx, q,s, \tau, z, a),
    \end{equation*}
    where the Hamiltonian $H$ given by
    \begin{equation}\label{hamiltonian}
    H(t, \bx, q,s, \tau, z, a) = f(t, \bx,q, s, \tau, a) + b(t, \bx, a)\sigma^{-1}(t, \bx)z.
    \end{equation}    
\end{enumerate}
\end{assumption}
\noindent We denote the minimized Hamiltonian for each $(t, \bx,\theta,s, \tau, z)$ by
\begin{equation}\label{minhamiltonian}
    h(t, \bx,q,s, \tau, z) = \inf_{a \in A} H(t, \bx, q, z,s, \tau, a).    
\end{equation}
\begin{remark}\label{Remark:AssumptionC}
By \cite[Lemma 2.1 and Remark 2.1]{KharroubiLim14}, any $\mathbb{G}$-progressively measurable (resp. $\mathbb{G}$-predictable) process can be decomposed into before-and-after-$\tau$ components which are $\mathbb{F}$-progressively measurable (resp. $\mathbb{F}$-predictable). The simplifying condition \ref{Assumption:C1} additionally assumes that the running cost after burst does not explicitly depend on the burst time. 

The function $f_c$ is assumed to be an uncontrolled component that marks the change in cost dynamics after $\tau^*$. Consequently, the unique control in \ref{Assumption:C3} does not explicitly depend on $\bar{\tau}$ or $\tau$. Together with the separability property in \ref{Assumption:C1}, we implicitly require that $\hat{a}$ only depends on $t, \bx$ and $z$. 
Assumptions \ref{Assumption:SD1} and \ref{Assumption:C1} guarantee that the minimized Hamiltonian is always attained by $\hat{a}(t, \bx, z)$. We hence get the decomposition of $h$ as $h(t, \bx,q, \bar{\tau}, \tau, z) = h^0(t,\bx, q, \bar{\tau}, z)\ind{0\le t<\tau }  + h^1(t, \bx, q, z)\ind{\tau \leq t}$ with
\begin{equation}\label{decomposedminHam}
\begin{split}
    h^0(t,\bx, q, \bar{\tau}, z) & \ce H^0(t, \bx, q, \bar{\tau}, z, \hat{a}(t, \bx, z))\\
    h^1(t, \bx, q, z) & \ce H^1(t, \bx, q, z, \hat{a}(t, \bx, z))
\end{split}
\end{equation}
where
\begin{equation}\label{label:Hdecomposition}
    \begin{split}
    H^0(t, \bx, q, \bar{\tau}, z, a) & \ce f^0(t, \bx, q, \bar{\tau}, a) + b(t, \bx, a)\sigma^{-1}(t, \bx)z\\
    H^1(t, \bx, q, z, a) & \ce f^1(t, \bx, q, a) + b(t, \bx, a)\sigma^{-1}(t, \bx)z.
    \end{split}
\end{equation}
and \begin{equation*}
    H(t, \bx, q, \bar{\tau}, \tau, z, a) = H^0(t, \bx, q, \bar{\tau}, z, a)\ind{0 \leq t < \tau} + H^1(t,\bx,q,z, a)\ind{\tau \leq t}.
\end{equation*}
Note that under Assumption \ref{Assumption:SD4}, the functions $H^0$, $H^1$, $h^0$, and $h^1$ are all Lipschitz in $z$. 
\end{remark}
Because we assume that $f, H$ and $h$ depend on $\bar{\tau}$ and $\tau$ in a very specific way, we can omit these from the notation when there is no confusion. 
In this general setting, the existence theorem now takes the following form:
\begin{theorem}\label{Theorem:MFGExistence0}
Under Assumptions (\ref{Assumption:SD}), (\ref{Assumption:BT}), and (\ref{Assumption:C}), there exists a MFG equilibrium $(\hat{\alpha}, \theta, \mu)$ for fixed entry time. Moreover, the optimal control can be decomposed as 
$$\hat{\alpha}_t = \hat{\alpha}^+_t \ind{t \leq \tau^*} + \hat{\alpha}_t^-(\tau^*) \ind{t > \tau^*}$$
where $\hat{\alpha}^+$ and $\hat{\alpha}^-$ are $\mathscr{P}(\mathbb{F})$ and $\mathscr{P}(\mathbb{F}) \otimes \B([0, T])$-measurable, respectively.
\end{theorem}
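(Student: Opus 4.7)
\textbf{Proof plan for Theorem \ref{Theorem:MFGExistence0}.}

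The plan is to combine a Schauder--Tychonoff fixed point argument in the weak formulation (in the spirit of \citet{CarmonaLacker15} and \citet{CampiFischer18}) with the progressive enlargement decomposition results of \citet{KharroubiLim14} to obtain the $\mathbb{F}$-predictable before/after-burst decomposition. The reason the two ingredients have to be intertwined is that the candidate optimizer produced by the fixed point is a priori only $\mathbb{G}$-predictable, while the statement insists on exhibiting $\mathbb{F}$-predictable components.

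\emph{Step 1: Set up the Schauder--Tychonoff domain.} Fix $(\theta,\mu)\in\Theta\times\M$. Using Assumption \ref{Assumption:SD} together with $\sigma^{-1}b$ bounded (\ref{Assumption:SD4}), the measure $\P^\alpha$ with density $\mathcal E(\int_0^\cdot \sigma^{-1}b(s,X,\alpha_s)\,dW_s)_T$ is a probability measure with uniformly (in $\alpha$) bounded exponential moments. Standard moment estimates combined with (\ref{Assumption:SD4}) give uniform bounds on $\E^{\P^\alpha}[\|X\|_\infty^p]$ and on the integrability of $f$ and $g$ (Assumption \ref{Assumption:C}), so the relevant domain is the set $\mathcal Q\subset\Theta\times\M$ of pairs whose marginals have uniformly bounded moments of some $p'>p$. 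By a Prokhorov/Jakubowski tightness argument $\mathcal Q$ is compact convex in the product of the weak and $\W$-topologies.

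\emph{Step 2: Optimization for fixed $(\theta,\mu)$ and the before/after-burst decomposition.} Given $(\theta,\mu)\in\mathcal Q$, the endogenous burst $\bar\tau(\mu)$ is a fixed number by \ref{Assumption:BT2}. On $(\Omega,\mathbb G,\P)$, consider the BSDE
\begin{equation*}
Y_t = g(X,\bar\tau(\mu),\tau) + \int_t^T h\bigl(s,X,\theta_s,\bar\tau(\mu),\tau,Z_s\bigr)\,ds - \int_t^T Z_s\,dW_s - \int_t^T U_s\,dM_s,
\end{equation*}
where $h$ is the minimized Hamiltonian \eqref{minhamiltonian} and $M$ is the compensated jump martingale in \eqref{compensator MG}. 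By the Lipschitz-in-$z$ property noted in Remark \ref{Remark:AssumptionC}, existence and uniqueness of $(Y,Z,U)$ in $\mathbb G$-adapted square-integrable spaces follow from standard BSDE theory with a single jump (see \citet{KharroubiLim14}). The candidate optimizer is then $\hat\alpha_t=\hat a(t,X,Z_t)$, which is verified by a Girsanov comparison argument to minimize the $\P^\alpha$-cost. To obtain the decomposition, I follow \citet[Theorem 3.2]{KharroubiLim14}: writing $Y_t=Y^0_t\mathbf 1_{t<\tau}+Y^1_t(\tau)\mathbf 1_{t\ge\tau}$ and similarly for $Z,U$, the pair $(Y,Z,U)$ splits into
\begin{align*}
Y^1_t(s) &= g^1(X,s\wedge\bar\tau(\mu)) + \int_t^T h^1(r,X,\theta_r,Z^1_r(s))\,dr - \int_t^T Z^1_r(s)\,dW_r,\qquad s\in[0,T],\\
Y^0_t &= g^0(X,\bar\tau(\mu))\mathbf 1_{\{\tau>T\}}\text{-analog} + \int_t^T\!\!\bigl[h^0(r,X,\theta_r,\bar\tau(\mu),Z^0_r)+k_r(Y^1_r(r)-Y^0_r)\bigr]\,dr - \int_t^T Z^0_r\,dW_r,
\end{align*}
each of which is an $\mathbb F$-BSDE (the second uses \ref{Assumption:BT1} so that the compensator rate $k$ is $\mathbb F$-predictable). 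By the separability property of Assumption \ref{Assumption:C1} the minimizer $\hat a$ is independent of $\tau,\bar\tau$, hence $\hat\alpha^+_t:=\hat a(t,X,Z^0_t)$ and $\hat\alpha^-_t(s):=\hat a(t,X,Z^1_t(s))$ are $\mathbb F$-predictable for each $s$ with Borel dependence on $s$ by measurable selection. This yields the required decomposition, and also provides an explicit, $\mathcal Q$-continuous map.

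\emph{Step 3: Fixed point.} Define $\Phi:\mathcal Q\to\Theta\times\M$ by $\Phi(\theta,\mu):=(\mathrm{Law}_{\P^{\hat\alpha}}(\hat\alpha_\cdot\otimes dt),\,\P^{\hat\alpha}\circ X^{-1})$ where $\hat\alpha$ is the optimizer from Step 2. Stability of BSDEs with jumps yields continuity of $(Z^0,Z^1(\cdot))$ in $(\theta,\mu)$, which combined with continuity of $\hat a$ (a consequence of Berge's maximum theorem under \ref{Assumption:C3}) and continuity of $\bar\tau$ (\ref{Assumption:BT2}) gives continuity of $\Phi$ for the product of weak and $\W$ topologies. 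The moment bounds of Step 1 show $\Phi(\mathcal Q)\subset\mathcal Q$. Schauder--Tychonoff now produces a fixed point $(\theta,\mu)$, and the corresponding $\hat\alpha$ is the sought MFG equilibrium, with the decomposition inherited from Step 2.

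\emph{Main obstacle.} The delicate point is Step 2: obtaining the decomposition so that the after-burst piece $\hat\alpha^-$ is an $\mathbb F$-predictable process parametrized measurably by the burst time $s$, rather than merely a $\mathbb G$-predictable object. This requires carefully invoking the progressive enlargement results (\citet{KharroubiLim14}, \citet{Pham10}) so that the $\mathbb G$-BSDE splits into a genuine parametrized family of $\mathbb F$-BSDEs, and then transporting continuity in $(\theta,\mu)$ uniformly in the parameter $s$ so that Step 3 goes through. The separability imposed in \ref{Assumption:C1} is exactly what allows the minimizer $\hat a$ to be common to both regimes, which is essential for the clean decomposition.
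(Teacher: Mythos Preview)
Your plan is essentially the paper's proof: BSDE with minimized Hamiltonian on $\mathbb{G}$, comparison to verify optimality, Kharroubi--Lim decomposition into $\mathbb{F}$-BSDEs, then Schauder--Tychonoff on $\Theta\times\M_K^0$ with continuity coming from BSDE stability and Berge's theorem. The domain you call $\mathcal Q$ is the paper's $\Theta\times\M_K^0$ (with $\Theta$ enlarged to measures on $[0,T]\times\mathcal P(A)$ for compactness), and your Steps 2--3 match Section~\ref{Subsection:MFGExistence0} closely.

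There is, however, one genuine gap in Step~2 that you should not gloss over. The Kharroubi--Lim decomposition you invoke gives
\[
Z_t = Z^0_t\,\mathbf 1_{t\le\tau} + Z^1_t(\tau)\,\mathbf 1_{t>\tau},
\]
i.e.\ a split at the \emph{exogenous} time $\tau$, because $\tau$ is the random time used to enlarge $\mathbb{F}$ to $\mathbb{G}$. But the theorem asserts a split at $\tau^*=\tau\wedge\bar\tau(\mu)$. On the event $\{\tau>\bar\tau(\mu)\}$ these differ, and you must show that $Z^0_t=Z^1_t(\tau)$ on $(\bar\tau(\mu),\tau]$ so that the before/after pieces can be re-indexed by $\tau^*$. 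This is where the separability in \ref{Assumption:C1} does real work: because $f^0$ and $f^1$ coincide once $t\ge\bar\tau(\mu)$ (the only difference is the $f_c$-term, which is switched off), and because $g^0(\cdot,\bar\tau(\mu))=g^1(\cdot,\bar\tau(\mu))$ by \ref{Assumption:C2}, the $\mathbb{F}$-BSDEs for $(Y^0,Z^0)$ and $(Y^1,Z^1)$ agree on $(\bar\tau(\mu),T]$, whence uniqueness forces $Z^0=Z^1$ there. The paper isolates this as a separate lemma (Lemma~\ref{Lemma:tau*decomp}); without it your decomposition delivers $\hat\alpha^+\mathbf 1_{t\le\tau}+\hat\alpha^-(\tau)\mathbf 1_{t>\tau}$, not the claimed $\tau^*$-version.
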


%!TEX root = main.tex

\section{Existence Result for Fixed Entry Time: Proof of Theorem \ref{Theorem:MFGExistence0}}\label{Section:FP}

Throughout this section, we assume (\ref{Assumption:SD}), (\ref{Assumption:BT}), (\ref{Assumption:C}) and that $\T=0$.
The method used to prove Theorem \ref{Theorem:MFGExistence0} is based on BSDE arguments as customary in probabilistic approaches to stochastic control problems.
In the present case, we analyze the BSDEs on the \emph{enlarged filtration} $\mathbb{G}$. 
Therefore, it would be helpful to review relevant results of BSDE theory under progressive enlargement of filtration, notably well-posedness, comparison and a representation of solutions with respect to equations written in the small filtration $\mathbb{F}$. 
These results are slight extensions of results from \citet{KharroubiLim14}.
We provide the proofs in the appendix for completeness. 

\subsection{BSDE Results on Progressive Enlargement of Filtration}
\label{sec:BSDE-progress}
We begin by introducing a few notations for various spaces and norms. For a filtration $\mathbb H$, define the following spaces of processes on $[s, t] \subseteq [0, T]$:
\begin{itemize}
    \item Let $\mathcal{S}^2_{\mathbb H}[s,t]$ denote the space of $\R$-valued $\mathbb H$-progressively measurable, c\`{a}dl\`{a}g processes $Y$ on $[s,t]$ satisfying $$||Y||_{\mathcal{S}_\mathbb{H}^2} \ce \E\sqbra{\sup_{u \in [s, t]}\abs{Y_u}^2}^{\frac{1}{2}} < \infty.$$
    \item Let $\mathcal{H}^2_{\mathbb{H}}[s,t]$ denote $\R$-valued $\mathbb{H}$-predictable processes $Z$ on $[s, t]$ satisfying $$||Z||_{\mathcal{H}_\mathbb{H}^2} \ce \E\sqbra{\int_s^t |Z_u|^2du}^{\frac{1}{2}} < \infty.$$ 
    \item Let $\mathcal{H}_{\mathbb{H}, D}^2[s,t]$ denote $\R$-valued $\mathbb{H}$-predictable processes $U$ on $[s, t]$ satisfying $$||U||_{\mathcal{H}^2_{\mathbb{H},D}} \ce \E\sqbra{\int_s^t |U_u|^2dD_u}^{\frac{1}{2}} < \infty.$$ 
\end{itemize}
We drop $[s, t]$ from notation when considering the whole interval $[0, T]$. Let us recall the definition of $M$ in (\ref{compensator MG}). Consider a generic type of BSDEs on the enlarged filtration $\mathbb{G}$:
\begin{equation}\label{BSDE ProEnl0}
Y_t = \xi + \int_t^T G(s,Z_s)ds - \int_t^T Z_sdW_s - \int_t^T U_sdM_s
\end{equation}
where $\xi$ is a $\G_T$-measurable random variable, and $G$ is a $\mathscr{P}(\mathbb G)$-measurable function mapping $[0,T]\times \Omega\times \R$ to $\R$.

\begin{remark}\label{Remark:convention}
Note that on $\{t > \tau\}$ we have $$M_t = D_t - \intt (1-D_{s^-})dK_s = 1 - \int_0^\tau (1-D_{s^-})dK_s$$ which is constant. In fact it is equal to $M_0 = 0$ because $M$ is a $(\P, \mathbb{G})$ martingale. Additionally, since the driver $G$ does not depend on $U$, we can require that any solution to \eqref{BSDE ProEnl0} satisfies $$U_t = 0 \text{ for } t \in (\tau \wedge T, T].$$
\end{remark}

We first tackle the existence and uniqueness of solutions to BSDE (\ref{BSDE ProEnl0}). Let $\xi^0, \xi^1, G^0, G^1$ be the corresponding components from the respective decomposition of $\xi$ and $G$ in Remark \ref{Remark:AssumptionC}. By Assumption \ref{Assumption:BT1}, we can rewrite the BSDE \eqref{BSDE ProEnl0} as
\begin{equation}\label{BSDE ProEnl1}
Y_t = \xi + \int_t^T \sqbra{(G^0(s, Z_s)+U_sk_s)\ind{0\leq s < \tau} + G^1(s, Z_s, \tau)\ind{\tau \leq s}}ds - \int_t^T Z_sdW_s - \int_t^T U_sdD_s.
\end{equation}
We now impose conditions on the terminal condition and the driver.
\begin{assumption}{PEA}\label{Assumption:PEA}\ 
\begin{enumerate}[label={(PEA\arabic*)}]
    \item \label{Assumption:PEA1} The random variable $\xi^0 \in L^2((\Omega, \F_T, \P); \R)$. The mapping $\R_+ \ni \upeta \mapsto \xi^1(\upeta) \in L^2((\Omega, \F_T, \P); \R))$ is continuous, and
    $\sup_{\upeta \in [0, T]}\E\sqbra{|\xi^1(\upeta)|^2} < \infty$; \label{AssumptionPE:PEA1}
    \item \label{Assumption:PEA2}The function $G^0$ (resp. $G^1(\cdot)$) is Lipschitz continuous in $z$ uniformly in $t$ (resp. $t$ and $\tau$) with Lipschitz constant $\ell_G > 0$. 
    Moreover, they satisfy the integrability condition $$\sup_{\upeta \in [0, T]}\E\sqbra{\intT \abs{G^0(s, 0)}^2 + \abs{G^1(s, 0, \upeta)}^2ds} < \infty.$$ \label{AssumptionPE:PEA2}
\end{enumerate}
\end{assumption}
We are now ready to state the main BSDE results needed for our purpose.
\begin{theorem}[Existence and Uniqueness]\label{Theorem:BSDEmain}
    Under Assumptions \ref{Assumption:BT1} and \eqref{Assumption:PEA}, there exists a unique solution $(Y, Z, U) \in \mathcal{S}^2_{\mathbb G} \times \mathcal{H}_\mathbb{G}^2 \times \mathcal{H}_{\mathbb{G}, D}^2$ to the BSDE (\ref{BSDE ProEnl0}). 
    Moreover, $(Y,Z,U)$ satisfies the following decomposition for $\P\otimes dt$ almost every $(\omega,t)$:
    \begin{equation} \label{BSDEsolution}
    \begin{cases}
        Y_t = Y_t^0\ind{t < \tau} + Y_t^1(\tau)\ind{\tau \leq t}\\
        Z_t = Z_t^0\ind{t \leq \tau} + Z_t^1(\tau)\ind{\tau < t}\\
        U_t = (Y_t^1(t) - Y_t^0)\ind{t \leq \tau}
    \end{cases}
    \end{equation}
    where $(Y^0, Z^0)$ and $(Y^1(\upeta), Z^1(\upeta))$ are unique solutions in $\S^2_\mathbb{F} \times \H^2_\mathbb{F} \times \S^2_\mathbb{F}[\upeta \wedge T, T] \times \H^2_\mathbb{F}[\upeta \wedge T, T]$ to the following iterative system of Brownian BSDEs:
    \begin{equation}\label{BSDEsolutionBSDE}
    \begin{cases}
        Y_t^1(\upeta) = \xi^1(\upeta) + \int_t^T G^1(s, Z_s^1(\upeta), \upeta)ds - \int_t^TZ_s^1(\upeta)dW_s & \text{ for } \upeta \wedge T \leq t \leq T\\
        Y_t^0 = \xi^0 + \int_t^T\pa{G^0(s, Z^0_s) + k_sY_s^1(s) - k_sY_s^0}ds - \int_t^TZ^0_sdW_s & \text { for } 0 \leq t \leq T.
    \end{cases}
    \end{equation} 
    In particular, $Y^1(\cdot)$ and $Z^1(\cdot)$ are $\mathcal{P}(\mathbb{F}) \otimes \B(\R_+)$-measurable process.
\end{theorem}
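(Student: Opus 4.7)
The plan is to follow the methodology of Kharroubi--Lim~\cite{KharroubiLim14}: first construct a candidate $(Y,Z,U)$ from solutions of the Brownian BSDE system \eqref{BSDEsolutionBSDE} written in the small filtration $\mathbb{F}$, then verify it solves the $\mathbb{G}$-BSDE \eqref{BSDE ProEnl0}, and finally establish uniqueness via standard $\mathbb{G}$-BSDE a priori estimates.

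For the construction, I would first solve, for each fixed $\upeta \in [0,T]$, the after-burst Brownian BSDE in \eqref{BSDEsolutionBSDE} for $(Y^1(\upeta), Z^1(\upeta))$ on $[\upeta\wedge T, T]$. By \ref{Assumption:PEA1}--\ref{Assumption:PEA2}, the terminal $\xi^1(\upeta)$ lies in $L^2(\F_T)$ uniformly in $\upeta$ and the driver $G^1(\cdot, \cdot, \upeta)$ is uniformly Lipschitz in $z$ with uniformly square-integrable $G^1(\cdot, 0, \upeta)$; Pardoux--Peng's classical theorem then produces a unique solution in $\mathcal{S}^2_{\mathbb{F}}[\upeta\wedge T, T]\times \mathcal{H}^2_{\mathbb{F}}[\upeta\wedge T, T]$. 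The $L^2$-continuity of $\upeta \mapsto \xi^1(\upeta)$ together with standard BSDE stability estimates yields continuity of $\upeta \mapsto (Y^1(\upeta), Z^1(\upeta))$ in those norms, which is enough to make $(s, \omega, \upeta) \mapsto Y^1_s(\upeta)(\omega)$ jointly measurable and hence the diagonal $s \mapsto Y^1_s(s)$ an $\mathbb{F}$-progressively measurable process. With the boundedness of $k$ from \ref{Assumption:BT1}, the map $s \mapsto k_s Y^1_s(s)$ is in $\mathcal{H}^2_{\mathbb{F}}$, so that the second Brownian BSDE in \eqref{BSDEsolutionBSDE}, with terminal $\xi^0 \in L^2$ and Lipschitz driver $(y,z) \mapsto G^0(s, z) + k_s Y^1_s(s) - k_s y$, admits a unique solution $(Y^0, Z^0) \in \mathcal{S}^2_{\mathbb{F}} \times \mathcal{H}^2_{\mathbb{F}}$ again by Pardoux--Peng.

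The main obstacle is verifying that the triple defined by \eqref{BSDEsolution} actually solves \eqref{BSDE ProEnl0} on $\mathbb{G}$. I would rewrite $\int_t^T U_s\, dM_s = \int_t^T U_s\, dD_s - \int_t^T U_s(1-D_{s-})k_s\, ds$ and analyze the integral equation separately on the events $\{t < \tau\}$ and $\{\tau \leq t\}$. On $\{t<\tau\}$, the compensator contribution $-\int_t^{T\wedge \tau} k_s(Y^1_s(s) - Y^0_s)\, ds$ combines with the $Y^0$-driver $G^0(s, Z^0_s) + k_s Y^1_s(s) - k_s Y^0_s$ to reproduce exactly $G^0(s, Z_s)\ind{s < \tau} + U_s(1-D_{s-})k_s$, matching \eqref{BSDE ProEnl1}. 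On $\{\tau \leq t\}$ the $Y^1(\tau)$-equation contributes the $G^1(s, Z_s, \tau)\ind{\tau \leq s}$ piece, while the jump $Y_\tau - Y_{\tau-} = Y^1_\tau(\tau) - Y^0_\tau = U_\tau$ accounts for the $\int U\, dD$ term. Hypothesis~\ref{Hypothesis:H} ensures that $W$ remains a $\mathbb{G}$-Brownian motion so the stochastic integrals are well-posed, and the target integrability $(Y, Z, U) \in \mathcal{S}^2_{\mathbb{G}} \times \mathcal{H}^2_{\mathbb{G}} \times \mathcal{H}^2_{\mathbb{G}, D}$ follows from the $\mathcal{S}^2_{\mathbb{F}}$--$\mathcal{H}^2_{\mathbb{F}}$ bounds on the components, the boundedness of $k$, and $[M]_t = D_t$.

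For uniqueness, given two candidate solutions on $\mathbb{G}$ I would apply It\^o's formula to $|Y_t - Y'_t|^2$, exploit the orthogonality of the $W$- and $M$-martingale parts, the Lipschitz constant of $G$ in $z$, and Gronwall's inequality to conclude $(Y, Z, U) = (Y', Z', U')$ in the respective norms; alternatively, \cite[Lemma 2.1]{KharroubiLim14} decomposes any $\mathbb{G}$-solution into before-and-after-$\tau$ Brownian components, reducing uniqueness on $\mathbb{G}$ to uniqueness of \eqref{BSDEsolutionBSDE}, which was already obtained in the construction step.
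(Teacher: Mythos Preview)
Your proposal is correct and follows essentially the same route as the paper: existence is obtained by solving the recursive Brownian system \eqref{BSDEsolutionBSDE} via Pardoux--Peng (with Lemma~\ref{integrabilityY1Z1} supplying the integrability needed for the $Y^0$-equation) and then invoking the Kharroubi--Lim construction to assemble the $\mathbb{G}$-solution, while uniqueness is reduced to uniqueness of the Brownian components through the decomposition lemma of~\cite{KharroubiLim14}. One minor difference: for uniqueness the paper proceeds indirectly through the comparison principle (Proposition~\ref{Prop:BSDEcomparison}) rather than your direct It\^o-plus-Gronwall estimate on $|Y-Y'|^2$; your approach is the more standard one and works here because the driver does not depend on $U$, whereas the paper's detour through comparison is tailored to the structure of~\cite{KharroubiLim14} and has the side benefit of yielding Proposition~\ref{Prop:BSDEcomparison} along the way.
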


\begin{proposition}[Comparison Principle]\label{Prop:BSDEcomparison}
    Let $(\xi,G), (\xi', G')$ be two sets of coefficients of the BSDE (\ref{BSDE ProEnl0}) that satisfy Assumption \eqref{Assumption:PEA}. Let $(Y, Z, U), (Y', Z', U')$ be their respective solutions in $\mathcal{S}^2_{\mathbb G} \times \mathcal{H}_\mathbb{G}^2 \times \mathcal{H}_{\mathbb{G}, D}^2$. If $\xi \leq \xi'$ $\P$--a.s. and for all $(t, z) \in [0, T]\times \R$ we have $G(t,z) \leq G'(t,z)$ $\P$--a.s., then $$Y_t\ \leq\ Y'_t \quad \P\text{--}\as \text{ for all } t \in [0, T].$$
\end{proposition}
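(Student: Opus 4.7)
The plan is to reduce the $\mathbb{G}$-level comparison to two applications of the classical comparison principle for Lipschitz Brownian BSDEs, by leveraging the decomposition of solutions provided by Theorem \ref{Theorem:BSDEmain}.

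First, I would translate the hypotheses $\xi \leq \xi'$ and $G(t,z) \leq G'(t,z)$ from the enlarged filtration $\mathbb{G}$ into corresponding inequalities on the $\mathbb{F}$-components $(\xi^0, \xi^1, G^0, G^1)$ and $(\xi'^0, \xi'^1, G'^0, G'^1)$. Using the decomposition $\xi = \xi^0\ind{\tau > T} + \xi^1(\tau)\ind{\tau \leq T}$ and restricting to $\{\tau > T\}$, a set of positive probability by \ref{Assumption:BT1}, combined with the independence of $\tau$ from $\F_T$ furnished by the (H)-hypothesis, yields $\xi^0 \leq \xi'^0$ $\P$-a.s. Disintegrating against the density of $\tau$ on $[0,T]$ gives $\xi^1(\upeta) \leq \xi'^1(\upeta)$ $\P$-a.s. for Lebesgue-almost every $\upeta$, and the $L^2$-continuity in $\upeta$ from \ref{Assumption:PEA1} upgrades this to every $\upeta$. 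An identical argument applied pointwise in $(t,z)$ yields $G^0(t,z) \leq G'^0(t,z)$ and $G^1(t,z,\upeta) \leq G'^1(t,z,\upeta)$ $\P$-a.s. on the respective supports.

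Next, I would apply the classical Brownian comparison theorem (El Karoui-Peng-Quenez) to the after-burst equation in \eqref{BSDEsolutionBSDE}. For each fixed $\upeta$, both $(Y^1(\upeta), Z^1(\upeta))$ and $(Y'^1(\upeta), Z'^1(\upeta))$ solve Brownian BSDEs on $[\upeta \wedge T, T]$ with square-integrable terminal conditions and drivers Lipschitz in $z$ (by \ref{Assumption:PEA2}); the comparison hypotheses from the previous step then give $Y^1_t(\upeta) \leq Y'^1_t(\upeta)$ for all $t \in [\upeta \wedge T, T]$, $\P$-a.s. In particular, $Y^1_s(s) \leq Y'^1_s(s)$ for almost every $s$.

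Finally, I would compare the before-burst equation. Viewed as a BSDE in $(Y^0, Z^0)$, its driver $\tilde G(s, y, z) := G^0(s, z) + k_s Y^1_s(s) - k_s y$ is Lipschitz in $(y,z)$ because $k$ is bounded by \ref{Assumption:BT1} and $G^0$ is Lipschitz in $z$. Defining $\tilde G'$ analogously with primes, the previous step together with $G^0 \leq G'^0$ and $k_s \geq 0$ gives $\tilde G(s,y,z) \leq \tilde G'(s,y,z)$ pointwise in $(y,z)$, while $\xi^0 \leq \xi'^0$ supplies the terminal comparison. A second application of the Brownian comparison principle yields $Y^0_t \leq Y'^0_t$ for all $t$, $\P$-a.s. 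Reassembling via the decomposition $Y_t = Y^0_t\ind{t < \tau} + Y^1_t(\tau)\ind{\tau \leq t}$ (and similarly for $Y'$) then concludes $Y_t \leq Y'_t$ $\P$-a.s. for every $t \in [0,T]$. The main technical delicacy is the disintegration in the first step, which hinges on (H)-hypothesis and the absolute continuity of $\nu_\tau$; once the $\mathbb{F}$-level inequalities are secured, the remaining two comparisons are routine.
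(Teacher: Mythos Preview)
Your proposal is correct and follows essentially the same route as the paper: decompose the $\mathbb{G}$-solution into its $\mathbb{F}$-components via Theorem~\ref{Theorem:BSDEmain}, apply the classical Brownian comparison first to the after-burst BSDE for $Y^1(\upeta)$, then feed this into the before-burst equation and compare $Y^0$ with the augmented driver $G^0(s,z)+k_sY^1_s(s)-k_sy$. The only cosmetic difference is that the paper outsources the second comparison to \cite[Theorem~4.1]{KharroubiLim14} and works at the level of $\tau$ rather than carrying out the disintegration in $\upeta$ explicitly as you do; your added care on the disintegration step (via the (H)-hypothesis and absolute continuity of $\nu_\tau$) is exactly what is needed to make that reduction rigorous.
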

\noindent The proofs are postponed to the appendix. Therein, we also prove a stability result which is useful for the proofs in the next section (see Proposition \ref{Prop:BSDEstability} in Appendix). 

\subsection{Spaces of Measures and Topologies}

We shall prove Theorem \ref{Theorem:MFGExistence0} by formulating the equilibria as fixed points to a mapping. In this part we introduce the appropriate domains of the mapping and analyze some of their properties.
\orange{
\begin{definition}\label{Definition:Mk0}
Let $K > 0$ be the uniform bound of $\sigma^{-1}b$ in Assumption \ref{Assumption:SD4}. Define $\M_K^0$ as the set of laws $\{\P^{\alpha} \circ X^{-1}: \alpha \in \A\}$. 
We equip $\M_K^0$ with the Wassertein distance.
\end{definition}}

Let us recall that $X$ is the unique solution to \eqref{DriftlessState}. Standard arguments give the following estimates, which we state as a lemma for later references.
\begin{lemma}\label{Lemma:fandXintegrability}
    Under Assumptions \eqref{Assumption:SD} and \eqref{Assumption:C}, for $p \geq 1$, we have $\E\Big[\|X\|^p_\infty\Big] < \infty$. Additionally, 
    \begin{equation*}
             \sup_{t \in [0, T], q \in \mathcal{P}(A), a \in A}\E\sqbra{\abs{f_a(t, X, a)}^2 + \abs{f_b(t, X, q)}^2 + \abs{f_c(t, X, q)}^2} < \infty.
     \end{equation*} 
     
\end{lemma}
\begin{lemma}\label{Lemma:Mk0PreCompact}
For every $K \geq 0$, the set $\M_K^0$ is convex and relatively compact in $\M$. 
\end{lemma}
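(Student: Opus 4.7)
The plan is to prove relative compactness via Girsanov's theorem and Kolmogorov's tightness criterion, and convexity by a randomization argument. For relative compactness, the key observation is that by Girsanov, for any $\alpha \in \A$ the law of $\widetilde{X}^\alpha$ under $\P$ equals the law of the driftless $X$ under $\P^\alpha = \mathcal{E}(\int_0^\cdot \sigma^{-1}(s,X)b(s,X,\alpha_s)\,dW_s)_T \cdot \P$. Since $|\sigma^{-1}b| \leq K$ uniformly by \ref{Assumption:SD4}, the density $\rho_\alpha := d\P^\alpha/d\P$ has all $L^p$ moments bounded uniformly in $\alpha \in \A$ (via standard estimates on the stochastic exponential of a bounded integrand). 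Combining via Cauchy--Schwarz with the moment bounds on $X$ from Lemma \ref{Lemma:fandXintegrability} yields $\sup_{\alpha \in \A} \E[\|\widetilde{X}^\alpha\|_\infty^p] < \infty$ for every $p \geq 1$.

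For tightness in $\mathcal{P}(\X)$ I would estimate increments directly on the SDE \eqref{label:tildeXform0}. Using the linear growth of $b$ and $\sigma$ from \ref{Assumption:SD2}--\ref{Assumption:SD3}, Burkholder--Davis--Gundy, and the uniform moment bound above, I obtain $\E[|\widetilde{X}^\alpha_t - \widetilde{X}^\alpha_s|^4] \leq C|t-s|^2$ with $C$ independent of $\alpha$. Kolmogorov--Centsov then gives tightness in $C([0,T],\R)$, and the common initial law $\lambda_0$ closes the argument. The upgrade to relative compactness in the Wasserstein topology on $\M$ follows from the uniform integrability of $\{\|\widetilde{X}^\alpha\|_\infty : \alpha \in \A\}$ supplied by the uniform $L^p$ bounds.

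For convexity, given $\mu^{\alpha^1}, \mu^{\alpha^2} \in \M_K^0$ and $\lambda \in (0,1)$, the idea is to introduce a $\{0,1\}$--valued random variable $\xi$ with $\P(\xi = 1) = \lambda$ that is $\G_0$--measurable and independent of $(W, \iota, \tau)$ (enlarging the probability space if necessary and noting that neither $\mathcal{M}_K^0$ as a subset of $\mathcal{P}(\X)$ nor the structural assumptions are affected by such an extension), and to set $\alpha_t := \alpha_t^1 \ind{\xi = 1} + \alpha_t^2 \ind{\xi = 0}$. Then $\alpha$ is $A$--valued and $\mathbb{G}$--progressive, so $\alpha \in \A$, and by strong uniqueness of \eqref{label:tildeXform0} the corresponding state satisfies $\widetilde{X}^\alpha = \widetilde{X}^{\alpha^1}\ind{\xi=1} + \widetilde{X}^{\alpha^2}\ind{\xi=0}$ $\P$--a.s. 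Integrating bounded continuous test functions yields $\P\circ(\widetilde{X}^\alpha)^{-1} = \lambda\mu^{\alpha^1} + (1-\lambda)\mu^{\alpha^2}$, which is the desired convex combination.

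The main obstacle I anticipate is the convexity step: the randomization requires a source of randomness orthogonal to $(W, \iota, \tau)$ at time $0$, which implicitly demands that $(\Omega, \F, \mathbb{G}, \P)$ be rich enough, or that one argues cleanly that an explicit enlargement does not enlarge $\M_K^0$ as a subset of $\mathcal{P}(\X)$ because strong uniqueness pins the pushforward down to a functional of $(\iota, W)$ alone on each event $\{\xi = i\}$. The compactness half is more routine, though the path-dependence of $b$ and $\sigma$ does force a Gronwall step to close the uniform moment bounds.
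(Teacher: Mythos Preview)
Your relative compactness argument is correct and essentially mirrors the paper's: Girsanov plus the uniform bound $|\sigma^{-1}b|\le K$ gives uniform $L^p$ control on the Girsanov densities, Cauchy--Schwarz against the moment bounds of Lemma~\ref{Lemma:fandXintegrability} yields $\sup_{\alpha}\E[\|\widetilde X^\alpha\|_\infty^p]<\infty$, a fourth-moment increment estimate gives tightness via Kolmogorov--Chentsov, and the uniform integrability upgrades weak to Wasserstein precompactness. (The paper obtains the increment bound by transferring to the driftless $X$ under $\P^\alpha$ rather than working directly on \eqref{label:tildeXform0}; both routes are fine.)

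The convexity argument, however, has the gap you yourself flagged, and your proposed resolution is circular. Enlarging $(\Omega,\F,\mathbb G,\P)$ by an independent $\G_0$--measurable Bernoulli $\xi$ enlarges $\A$, and with it $\M_K^0$: every control on the enlarged space decomposes as $\alpha^1\ind{\xi=1}+\alpha^0\ind{\xi=0}$ with $\alpha^0,\alpha^1$ in the original $\A$, so the enlarged $\M_K^0$ is precisely the set of binary mixtures of elements of the original $\M_K^0$. Your randomization therefore shows only that this mixture set is closed under convex combination, which is trivial. The assertion that ``$\M_K^0$ as a subset of $\mathcal{P}(\X)$ is not affected'' by adding $\xi$ is exactly the convexity you are trying to prove; the observation that strong uniqueness makes $\widetilde X^\alpha$ a functional of $(\iota,W)$ on each $\{\xi=i\}$ only re-derives the law as a mixture and does not produce a control in the \emph{original} $\A$.

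The paper avoids this by staying inside the given space. Since $\mu^i=\P^{\alpha^i}\circ X^{-1}$, it suffices to exhibit $\alpha\in\A$ with $\frac{d\P^\alpha}{d\P}=\lambda\frac{d\P^{\alpha^1}}{d\P}+(1-\lambda)\frac{d\P^{\alpha^2}}{d\P}$. Writing $M_t=\E\big[\lambda\frac{d\P^{\alpha^1}}{d\P}+(1-\lambda)\frac{d\P^{\alpha^2}}{d\P}\,\big|\,\F_t\big]$ and applying It\^o's formula to $\log M$, one finds
\[
M_T=\mathcal E\Big(\int_0^\cdot\sigma^{-1}(s,X)\big[\tilde\lambda_s\,b(s,X,\alpha^1_s)+(1-\tilde\lambda_s)\,b(s,X,\alpha^2_s)\big]\,dW_s\Big)_T
\]
for some predictable $\tilde\lambda_s\in(0,1)$. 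Because $A$ is an interval and $a\mapsto b(t,\bx,a)$ is continuous (Assumptions \ref{Assumption:SD1}--\ref{Assumption:SD2}), the range $\{b(t,\bx,a):a\in A\}$ is an interval, hence the bracketed convex combination equals $b(s,X,\alpha_s)$ for some measurable selection $\alpha\in\A$. This is the step your randomization cannot replace.
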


\begin{proof}
We first show convexity. Take $\lambda \in (0, 1)$ and $\mu^1, \mu^2 \in \M_K^0$ with corresponding controls $\alpha^1, \alpha^2 \in \A$. Let us recall that $X$ is the driftless state process \eqref{DriftlessState}.
Since $\mu^1= \P^{\alpha^1}\circ X^{-1}$ and $\mu^2 = \P^{\alpha^2}\circ X^{-1}$ (with $\P^{\alpha}$ defined in \eqref{eq:Def.MFG.general}), it suffices to show that for each $\lambda \in (0,1)$ there is $\alpha \in \A$ such that
\begin{equation*}
    \frac{d\P^{\alpha}}{d\P} = \lambda \frac{d\P^{\alpha^1}}{d\P} + (1-\lambda)\frac{d\P^{\alpha^2}}{d\P}.
\end{equation*}
Define the (true, strictly positive) martingales $M^{\alpha^i}_t \ce \E[\frac{d\P^{\alpha^i}}{d\P}|\mathcal{F}_t], t \in [0, T]$ for $i \in \{1, 2\}$, and let $M_t \ce \E[\lambda \frac{d\P^{\alpha^1}}{d\P} + (1-\lambda)\frac{d\P^{\alpha^2}}{d\P} |\mathcal{F}_t] = \lambda M^{\alpha^1}_t + (1-\lambda)M^{\alpha^2}_t$. Applying It\^o's formula to $\log(M_t)$ yields
\begin{align*}
    M_t & = \exp\pa{\intt\sigma^{-1}(s, X)\frac{\lambda M^{\alpha^1}_sb(s, X, \alpha^1_s) + (1 - \lambda)M^{\alpha^2}_sb(s, X, \alpha^2_s)}{M_s}dW_s - \frac{1}{2}\intt \frac{\qv{M}_s}{M^2_s}ds}\\
    & = \mathcal{E}\pa{\int_0^\cdot\sigma^{-1}(s, X)\sqbra{\tilde{\lambda}_sb(s, X, \alpha_s^1) + (1-\tilde{\lambda}_s)b(s, X, \alpha_s^2)}dW_s}_t
\end{align*}
where $\tilde{\lambda}_s := \frac{\lambda M^{\alpha^1}_s}{M_s} \in (0, 1)$. Since $b$ is continuous in $a$ and $A$ is an interval, the image of $A$ under $b(t, \bx, \cdot)$ is also an interval in $\R$ for fixed $(t, \bx)$, hence also convex. Then $\P\otimes dt$ almost everywhere, the set $$\cbra{a \in A: b(t, X, a) = \tilde{\lambda}_t b(t, X, \alpha_t^1) + (1-\tilde{\lambda}_t)b(t, X, \alpha_t^2)}$$ is closed and non-empty. Therefore, a standard measurable selection argument allows us to find $\alpha \in \A$ such that $$M_T = \mathcal{E}\pa{\int_0^\cdot\sigma^{-1}(s, X)b(s, X, \alpha_s)dW_s}_T = \frac{d\P^{\alpha}}{d\P}$$ which implies convexity of $\M_K^0$.

It remains to show that $\M_K^0$ is relatively compact. We first show this in the weak topology, or equivalently, tightness by Prokhorov's theorem (see e.g. \cite[Theorem 16.3]{bookKallenberg02}). Take a sequence $\mu^n \in \M_K^0$ and let $\alpha^n$ be the corresponding control. For each $n \in \N$, define 
\begin{equation*}
    M^n_t = M_t(\alpha^n) \ce \mathcal{E}\pa{\int_0^{\cdot} \sigma^{-1}(s, X)b(s, X, \alpha^n_s) dW_s}_t.
\end{equation*}
By uniformly boundedness of $\sigma^{-1}b$, the process $M^n$ is a $({\P,\mathbb{G}})$--martingale. 
We define $\P^n$ by $\frac{d\P^n}{d\P} := M^n_T$. Therefore, for $n \in \N$ and $0 \leq s \leq t \leq T$, by Cauchy-Schwarz inequality,
\begin{equation*}
    \E^{\P^n}\sqbra{\abs{X_t - X_s}^4} \leq \sqrt{\E\sqbra{\pa{M_T^n}^{2}}} \sqrt{\E\sqbra{\abs{X_t - X_s}^8}}.
\end{equation*}
Standard arguments using Gr\"{o}nwall's inequality give $$\E\sqbra{(M^n_T)^{2}} \leq \exp(K^2T).$$
By BDG inequality, Assumption \ref{Assumption:SD3} and Lemma \ref{Lemma:fandXintegrability}, allowing $C > 0$ to change from line to line, we have:

\begin{align*}
    \E\big[\abs{X_t - X_s}^8\big] & \leq C\E\sqbra{\abs{\int_s^t|\sigma(u, X)|^2 du}^4} \leq C\E\sqbra{\abs{\int_s^t1 + \norm{X_{\cdot \wedge u}}^2_\infty du}^4}\\
    & \leq C(t-s)^4 + C\E\Big[(t-s)^4\sup_{u \in [s, t]}|X_u|^8\Big] \leq C(t-s)^4.
\end{align*}
\orange{Hence we have
$$
    \E^{\P^n}\sqbra{\abs{X_t - X_s}^4}\leq C(t-s)^2.
$$
}
By Kolmogorov-Chentsov tightness criterion (see  e.g. \cite[Corollary 16.9]{bookKallenberg02}), $\M_K^0$ is tight, hence weakly pre-compact in $\M$. 
Now using \cite[Theorem 6.9]{villani09}, Wasserstein pre-compactness follows if we show the uniform integrability property
\begin{equation}\label{label:uniformintegrability0}
    \lim_{R \to \infty}\sup_{\mu \in \M^0_K}\int_{\{\bx\in \X: \norm{\bx}_\infty > R\}} \norm{\bx}_\infty \mu(d\bx) =0.
\end{equation}
By definition of $\M_K^0$, it is equivalent to take the supremum over $\A$. For each $\alpha \in \A$, define $\P^{\alpha} \sim \P$ again with density $M_T(\alpha)$. Applying Cauchy-Schwarz inequality yields
\begin{equation*}
    \sup_{\alpha \in \A}\E^{\P^\alpha}\sqbra{\norm{X}_\infty^2} \leq \sup_{\alpha \in \A}\sqrt{\E\sqbra{\pa{M_T(\alpha)}^{2}}} \sqrt{\E\sqbra{\norm{X}_\infty^4}} < \infty.
\end{equation*}
Then (\ref{label:uniformintegrability0}) follows by dominated convergence theorem. 
\end{proof}

\begin{remark}\label{Remark:Theta}
For the control processes, a natural choice of spaces to work with would be $\Theta$ defined in (\ref{Definition:Theta}). However, the control processes do not have the required regularity for us to directly consider measures on $A$. 
It turns out to be more favorable if we look at $\mathcal{P}([0, T] \times \mathcal{P}(A))$ instead. We are now considering the space of \emph{relaxed controls} which contains the standard controls as Dirac measures. Therefore, a MFG equilibrium in Definition 
\ref{Definition:ModelNE} can be equivalently stated in terms of this larger space as well. It is worth noting that by considering relaxed controls, Campi and Fischer \cite{CampiFischer18} proved that $\M_K^0$ is indeed compact.
\end{remark}
We slightly abuse the notation and redefine $\Theta$ as the subset of $\mathcal{P}([0, T] \times \mathcal{P}(A))$ whose first projection is Lebesgue measure $dt$ on $[0, T]$. Again, any $\theta \in \Theta$ can be characterized, with $dt$ a.s. uniqueness, by $\{\theta_t \in \mathcal{P}(\mathcal{P}(A))\}_{t \in [0, T]}$ such that $\theta(dt, dq) = \theta_t(dq)dt$. We naturally extend any bounded measurable function $F: \mathcal{P}(A)\to \R$ to $\underline{F}: \mathcal{P}(\mathcal{P}(A)) \to \R$ by 
\begin{equation}\label{label:extendfunctions}
\underline{F}(\theta) \ce \int_{\mathcal{P}(A)}\theta(dq)F(q).    
\end{equation}
In particular, $\underline{F}(\delta_q) = F(q)$ for $q \in \mathcal{P}(A)$. Let us recall that Assumption \ref{Assumption:C1} ensures separability between $a$ and $q$ in the cost $f$. Therefore, we can drop the underline to simplify notation. Endow $\Theta$ with the stable topology, which is the weakest topology making the map $\theta \mapsto \int\phi d\theta$ continuous, for each bounded measurable function $\phi: [0, T] \times \mathcal{P}(A) \to \R$ that is continuous in the measure variable for each $t$. Since $A$ is convex, compact and metrizable, so is $\Theta$.

\subsection{Proof of Theorem \ref{Theorem:MFGExistence0}}\label{Subsection:MFGExistence0}
Now we come back to the proof of Theorem \ref{Theorem:MFGExistence0}. We shall go through the following four steps:
\begin{enumerate}
    \item For each given $(\theta, \mu)$, we first solve the optimization step in \eqref{eq:Def.MFG.general} using BSDEs and obtain some optimal control $\hat{\alpha}^{\theta, \mu} \in \A$.
    \item We show that the mapping $(\theta, \mu) \mapsto \hat{\alpha}^{\theta, \mu}$ is continuous in Lemma \ref{Lemma:alphaL2conv0}.
    \item We introduce a mapping whose fixed points correspond to MFG equilibria and show that it is continuous.
    \item Finish the proof by applying a fixed point theorem (Theorem \ref{Theorem:Brouwer}) and obtain the decomposition of optimal control into before and after burst time parts.
\end{enumerate}
\noindent\emph{Step 1: Obtaining Optimal Control.}\\
Let us recall the Hamiltonian $H$ defined by \eqref{hamiltonian} and the minimized Hamiltonian $h$ defined by \eqref{minhamiltonian}.
We first fix measure inputs $(\theta, \mu) \in \Theta \times \M$. For each $\alpha \in \A$, we consider the BSDE:
\begin{equation}\label{BSDEforH}
Y^\alpha_t = g(X,\bar\tau(\mu), \tau) + \int_t^T H(s, X, \theta_s, \bar\tau(\mu), \tau, Z^\alpha_s, \alpha_s)ds - \int_t^T Z^\alpha_sdW_s - \int_t^T U^\alpha_sdM_s
\end{equation}
where $M$ is the martingale defined by the compensator process for exogenous burst in \eqref{compensator MG}. 
Observe that this equation admits a unique solution $(Y^\alpha, Z^\alpha, U^\alpha)$.
This follows by Theorem \ref{Theorem:BSDEmain} since $H$ is Lipschitz--continuous in $z$, and
Assumptions \eqref{Assumption:C}, \eqref{Assumption:SD} together with Lemma \ref{Lemma:fandXintegrability} guarantee that Assumptions \eqref{Assumption:PEA} are satisfied. 
Let us recall $\P^\alpha \sim \P$ defined in \eqref{eq:Def.MFG.general}.
Changing the probability measure to $\P^\alpha$ yields
\begin{equation*}
    Y^\alpha_t = g(X,\bar\tau(\mu), \tau) + \int_t^T f(s, X, \theta_s, \bar\tau(\mu), \tau, \alpha_s)ds - \int_t^T Z^\alpha_s dW^\alpha_s -  \int_t^T U^\alpha_s dM_s,
\end{equation*}
and by martingale property, we have
\begin{align*} 
    J^{\theta,\mu}(\alpha) &= \E^{\P^{\alpha}}\bigg[g(X, \bar\tau(\mu), \tau) + \int_0^Tf(s, X,\theta,\bar\tau(\mu),\tau, \alpha_s )\,ds \bigg]\\
                          & = \E^{\P^{\alpha}}\sqbra{Y_0^\alpha} = \E\sqbra{Y_0^\alpha}.
\end{align*}
The last equality follows from the fact that $\P^{\alpha}$ agrees with $\P$ at $t = 0$. 
Now let $(\widehat{Y}^{\theta, \mu},\widehat{Z}^{\theta, \mu}, \widehat{U}^{\theta, \mu})$ be the unique solution (again obtained thanks to Theorem \ref{Theorem:BSDEmain}) of the BSDE
\begin{equation}
\label{eq:WeakMFG_BSDE}
    \widehat Y^{\theta, \mu}_t = g(X,\bar\tau(\mu), \tau) + \int_t^T h(s, X, \theta_s,  \bar\tau(\mu), \tau, \widehat Z^{\theta, \mu}_s)ds - \int_t^T \widehat Z^{\theta, \mu}_sdW_s - \int_t^T \widehat U^{\theta, \mu}_sdM_s.
\end{equation}
Since $h \leq H$ pointwise, we get that $\widehat{Y}^{\theta, \mu}_t \leq Y^\alpha_t$ $\P$--a.s. by Proposition \ref{Prop:BSDEcomparison}. 
Recalling Assumption \ref{Assumption:C3}, we can find $\hat{\alpha}^{\theta, \mu} \in \A$ where
    \begin{equation}
    \label{eq:charact.hat.alpha}
        \hat{\alpha}^{\theta, \mu}_t = \hat{a}(t, X, \widehat{Z}^{\theta, \mu}_t) = \argmin_{a \in A} H(t, X, \theta_t, \bar\tau(\mu), \tau, \widehat{Z}^{\theta, \mu}_t, a)\quad \P\text{--a.s.}
    \end{equation}
Since $h(s, X, \theta_s,  \bar\tau(\mu), \tau, Z^{\hat{\alpha}^{\theta, \mu}}_s) = H(s, X, \theta_s,  \bar\tau(\mu), \tau, Z^{\hat\alpha}_s, \hat{\alpha}^{\theta, \mu}_s)$, we have $$(\widehat Y^{\theta, \mu}, \widehat Z^{\theta, \mu}, \widehat U^{\theta, \mu}) = (Y^{\hat{\alpha}^{\theta, \mu}}, Z^{\hat{\alpha}^{\theta, \mu}},U^{\hat{\alpha}^{\theta, \mu}}).$$
In particular, for any other $\alpha \in \A$:
$$J^{\theta, \mu}(\hat{\alpha}) =  \E^{\P^{\hat{\alpha}^{\theta, \mu}}}\sqbra{Y_0^{\hat{\alpha}^{\theta, \mu}}} = \E\sqbra{Y_0^{\hat{\alpha}^{\theta, \mu}}} \leq \E\sqbra{Y_0^{\alpha}} =\E^{\P^{\alpha}}\sqbra{Y_0^{{\alpha}}} = J^{\theta, \mu}(\alpha).$$
Therefore $\hat{\alpha}^{\theta, \mu}$ is optimal for each $(\theta, \mu) \in \Theta \times \M$.\\

\noindent\emph{Step 2: Proving Continuity of $(\theta, \mu) \mapsto \hat{\alpha}^{\theta, \mu}$.}\\
We first show some properties on the function $\hat{a}$.
\begin{lemma}\label{Lemma:alphacont}
    The unique minimizer $\hat a$ of the Hamiltonian in \ref{Assumption:C3} satisfies $\hat{a}_t = \hat{a}(t, \bx, z)$ for a function $\hat a$ that is jointly measurable in all variables. For each $(t, \bx) \in [0, T] \times \X$, the function $z \mapsto \hat{a}(t, \bx, z)$ is continuous in $z$. 
    Moreover, the mapping $\mathcal{P}(A) \times \R \ni (q, z) \mapsto h(t, \bx, q, \bar\tau, \tau,  z)$ is also continuous for fixed $\bar\tau, \tau \in [0, T]\times \R_+$.
\end{lemma}

\begin{proof}
    The separability condition in \ref{Assumption:C1} ensures that the unique minimizer $\hat{\alpha}$ is a function of $(t, \bx, z)$. 
    The continuity conditions in \ref{Assumption:C1} ensure that for each $(t, \bx)$, the Hamiltonian is jointly continuous in $(q,z, a)$. 
    Since $A$ is compact, we can apply Berge's maximum theorem (e.g. \cite[Theorem 17.31]{bookInfDimAnalysis06}) to conclude that for each $(t, \bx, \bar\tau, \tau)$, $\hat{a}(t, \bx, \cdot)$ is continuous in $z$, and $h(t, \bx, \cdot,  \bar\tau, \tau, \cdot)$ is continuous. 
    In particular, $\hat{a}$ is a Carath{\'e}odory function, which is jointly measurable \cite[Lemma 4.51]{bookInfDimAnalysis06}.
\end{proof}

\begin{lemma}\label{Lemma:alphaL2conv0}
    The mapping $\Theta \times \M \ni (\theta, \mu) \mapsto \hat{\alpha}^{\theta, \mu}$ is continuous in  $\H^2_\mathbb{G}$.
\end{lemma}
\begin{proof}
    Take a sequence $\mu^n \ciw \mu$ in  $\mathcal{M}$ and $\theta^n \to \theta$ weakly in $\Theta$. Let us recall that the MFG endogenous burst times are deterministic, as opposed to their $N$-player game counterparts. By Assumption \ref{Assumption:BT2}, we get $\bar{\tau}(\mu^n) \nto \bar{\tau}(\mu)$, and so $\tau^*_n \nto \tau^*$ where $\tau^*_n \ce \bar{\tau}(\mu^n)\wedge \tau$ and $\tau^* \ce \bar{\tau}(\mu) \wedge \tau$. Let $\hat\alpha^{\theta^n,\mu^n}$ and $\hat\alpha^{\theta,\mu}$ be the optimal controls corresponding to $(\theta,\mu)$ and $(\theta^n,\mu^n)$, respectively. Let $(Y, Z, U), (Y^n, Z^n, U^n) \in \mathcal{S}^2_{\mathbb G} \times \mathcal{H}^2_{\mathbb G} \times \mathcal{H}^2_{\mathbb{G}}(D)$ denote the unique solutions to \eqref{eq:WeakMFG_BSDE} with data $(\theta, \mu)$ and  $(\theta^n, \mu^n)$, respectively. 

   Suppose that $(Z^n)_n$ converges to $Z$ in $\mathcal{H}^2_{\mathbb{G}}$. Then, $(Z^n)_n$ converges to $Z$ in $dt \times \P$ measure. 
   By Lemma \ref{Lemma:alphacont} and bounded convergence theorem, it follows that $(\hat\alpha^{\theta^n, \mu^n})_n$ converges to $\hat{\alpha}^{\theta, \mu}$ in $\mathcal{H}^2_{\mathbb{G}}$. 
   Therefore, it suffices to show
    \begin{equation*}
        \|Z^n- Z\|^2_{\mathcal{H}^2_{\mathbb{G}}} = \E\bigg[\intT|Z_t^n - Z_t|^2dt\bigg] \nto 0.
    \end{equation*}
    Let us recall that by Theorem \ref{Theorem:BSDEmain} and Remark 
    \ref{Remark:AssumptionC}, we have $Y_t = Y_t^0\ind{t < \tau} + Y_t^1(\tau)\ind{\tau \leq t}$ and $Z_t = Z_t^0\ind{t \leq \tau} + Z_t^1(\tau)\ind{\tau < t}$, where $(Y^0,Z^0)$ and $(Y^1(\cdot), Z^1(\cdot))$ satisfy the Brownian BSDEs
    \begin{equation}\label{eq:BSDEsolutionBSDE.Hamiltonian}
    \begin{cases}
        Y_t^1(\upeta) = g^1(X, \upeta \wedge \bar{\tau}(\mu)) + \int_t^T h^1(s, X, \theta_s^n, Z^1_s(\upeta))ds - \int_t^TZ_s^1(\upeta)dW_s & t \in [\upeta \wedge T, T]\\
        Y_t^0 = g^0(X, \bar{\tau}(\mu)) + \int_t^T \Big(h^0(s, X, \theta_s, \bar{\tau}(\mu), Z_s^0)\\
        \hspace{6cm}+ k_sY_s^1(s) - k_sY_s^0\Big)ds - \int_t^TZ^0_sdW_s & t \in[0, T].
    \end{cases}
    \end{equation} 
    For $\upeta \in [0, T]$, define
    $$\Delta_n g^0 \ce g^0(X, \bar{\tau}(\mu^n)) - g^0(X, \bar{\tau}(\mu)), \quad \Delta_n g^1(\upeta) \ce g^1(X, \upeta \wedge \bar{\tau}(\mu^n)) - g^1(X, \upeta \wedge \bar{\tau}(\mu)),$$ 
    $$\Delta_n h^0_s \ce h^0(s, X, \theta_s^n, \bar{\tau}(\mu^n), Z^0_s) - h^0(s, X, \theta_s, \bar{\tau}(\mu), Z_s^0)$$
    $$\Delta_n h^1_s(\upeta) \ce h^1(s, X, \theta_s^n, Z^1_s(\upeta)) - h^1(s, X, \theta_s, Z_s^1(\upeta)).$$
    With the decomposition of minimized Hamiltonian and Proposition \ref{Prop:BSDEstability}, we have for some $C > 0$
    \begin{equation}\label{uglyequationZ}
    \begin{split}
        \|Z^n- Z\|^2_{\mathcal H^2_{\mathbb{G}}} & \leq C\E\bigg[|\Delta_n g^0|^2 + \intT \abs{\Delta_n h^0_s}^2ds\bigg]\\ 
        &\quad  + C \E\bigg[ |\Delta_n g^1(\tau)|^2 + \int_{\tau \wedge T}^T \abs{\Delta_n h^1_s(s)}^2ds\bigg]\\
        &\quad + C\E\bigg[ \intT|\Delta_n g^1(\upeta)|^2 d\upeta + \intT\int_{\upeta \wedge T}^T \abs{\Delta_n h^1_s(\upeta)}^2ds d\upeta\bigg].
    \end{split}
    \end{equation}
    We check convergence of each term on the right hand side. Under Assumptions \ref{Assumption:BT2} and \ref{Assumption:C2}, since $(\mu^n)_n$ converges to $\mu$, by bounded convergence theorem we have
    \begin{equation*}
        \E\bigg[\abs{\Delta_n g^0}^2 + \abs{\Delta_n g^1(\tau)}^2 + \intT \abs{\Delta_n g^1(\upeta)}^2d\upeta\bigg] \nto 0.
    \end{equation*}
    By \ref{Assumption:C1} and Lemma \ref{Lemma:alphacont}, we can express $\Delta_nh^0, \Delta_nh^1$ as
\begin{equation*}
    \begin{split}
        \Delta_nh^0_s & = f_b(s, X, \theta^n_s) - f_b(s, X, \theta_s) + f_c(s, X, \theta^n_s)\ind{0 \leq s < \bar{\tau}(\mu^n)} - f_c(s, X, \theta_s)\ind{0 \leq s < \bar{\tau}(\mu)}\\
        \Delta_nh^1_s(\upeta) & = f_b(s, X, \theta^n_s) - f_b(s, X, \theta_s).
    \end{split}
\end{equation*}
For the term involving $\Delta_nh_s^0$,
\begin{subequations}
\begin{align}
    \E\bigg[\intT \abs{\Delta_nh^0_s}^2ds\bigg] & \leq \E\bigg[\intT \abs{f_b(s, X, \theta^n_s) - f_b(s, X, \theta_s)}^2ds\bigg] \label{Deltah0_1}\\
    &\quad  + \E\bigg[\intT \abs{f_c(s, X, \theta^n_s)\ind{0 \leq s < \bar{\tau}(\mu^n)} - f_c(s, X, \theta_s)\ind{0 \leq s < \bar{\tau}(\mu)}}^2ds\bigg] \label{Deltah0_2}.
\end{align}
\end{subequations}

We call the term on the right hand side of \eqref{Deltah0_1} $C_n$. Denote by $I_n$ the interval $[\bar{\tau}(\mu^n)\wedge \bar{\tau}(\mu), \bar{\tau}(\mu^n)\vee \bar{\tau}(\mu))$. It is easily verified that $\abs{\ind{0 \leq t < \bar{\tau}(\mu^n)}-\ind{0 \leq t < \bar{\tau}(\mu)}}^2 = \ind{t \in I_n}$. Using triangle inequality on \eqref{Deltah0_2} we get
\begin{align*}
    \E\bigg[\intT \abs{\Delta_nh^0_s}^2ds\bigg] & \le C_n + 2\E\bigg[\intT \abs{f_c(s, X, \theta_s^n)\ind{0 \leq s < \bar{\tau}(\mu^n)} - f_c(s, X, \theta_s)\ind{0 \leq s < \bar{\tau}(\mu^n)}}^2ds\bigg]\\
    & \qquad \ \ + 2\E\bigg[\intT \abs{f_c(s, X, \theta_s)\ind{0 \leq s < \bar{\tau}(\mu^n)} - f_c(s, X, \theta_s)\ind{0 \leq s < \bar{\tau}(\mu)}}^2ds\bigg]\\
    & \le C_n + 2\E\bigg[\intT \abs{f_c(s, X, \theta_s^n) - f_c(s, X, \theta_s)}^2ds\bigg]\\
    & \qquad \ \ + 2\E\bigg[\int_{I_n}\abs{f_c(s,X, \theta_s)}^2ds\bigg].
\end{align*}
By Assumption \ref{Assumption:BT2}, Lemma \ref{Lemma:fandXintegrability} and bounded convergence theorem, all three terms converge to $0$ as $n \to \infty$. 

As for $\Delta_nh^1_s$, note that it does not depend on $\upeta$ by Remark \ref{Remark:AssumptionC}. 
Therefore we have
\begin{equation*}
    \begin{split}
        & \E \bigg[\int_{\tau \wedge T} ^T \abs{\Delta_nh^1_s(s)}^2ds + \intT \int_{\upeta \wedge T}^T\abs{\Delta_nh^1_s(\upeta)}^2ds d\upeta\bigg] \leq (T+1)\E\bigg[\intT \abs{\Delta_nh_s^1}^2 ds\bigg]\\
        & = (T+1)\E\bigg[\intT\abs{f_b(s, X, \theta^n_s) - f_b(s, X, \theta_s)}^2ds\bigg] = (T+1)C_n \nto 0.
    \end{split}
\end{equation*}
Since all terms in (\ref{uglyequationZ}) converge to $0$, we have $\E\sqbra{\intT|Z_t^n - Z_t|^2dt} \nto 0$.
This concludes the argument. 
\end{proof}

\noindent \emph{Step 3: Existence of Fixed Points.}\\
We are now ready to apply the fixed point theorem. Define the mapping $\Psi: \Theta\times \M \to \Theta \times \M_K^0$ by 
\begin{equation}
\label{eq:def.Psi}
    \Psi(\theta, \mu) := \pa{\delta_{\P^{\theta, \mu} \circ \pa{\hat\alpha_t^{\theta,\mu}}^{-1}}(dq)dt, \quad \P^{\theta,\mu} \circ X^{-1}},
\end{equation}
where $\P^{\theta, \mu}$ is short for $\P^{\hat{\alpha}^{\theta, \mu}}$ defined in \eqref{eq:Def.MFG.general}. Notice that by construction, the mapping $\Psi$ indeed takes values in $\Theta \times \M_K^0$ where $K$ is the bound on $\sigma^{-1}b$ in Assumption \ref{Assumption:SD4}. A fixed point of $\Psi$ corresponds to a MFG equilibrium for fixed entry time. 
Let us recall the following version of Brouwer's Fixed Point Theorem (see e.g. \cite[Corollary 17.56]{bookInfDimAnalysis06}):
\begin{theorem}[Brouwer-Schauder-Tychonoff]\label{Theorem:Brouwer}
Let $\mathcal K$ be a non-empty compact convex subset of a locally convex Hausdorff space, and let $\Psi: \mathcal K \to \mathcal K$ be a continuous function. Then the set of fixed points of $\Psi$ is compact and nonempty.
\end{theorem}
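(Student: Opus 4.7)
The plan is to split the two claims. The compactness of the fixed-point set is straightforward: since $\Psi$ is continuous and vector subtraction is continuous in the ambient locally convex Hausdorff space $X$, the map $x \mapsto \Psi(x) - x$ from $\mathcal{K}$ to $X$ is continuous. The fixed-point set is then the preimage of $\{0\}$, which is closed in $X$ by the Hausdorff property; hence it is closed in the compact set $\mathcal{K}$ and therefore itself compact.

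For nonemptiness I would follow the classical Schauder--Tychonoff finite-dimensional reduction. The topology on $X$ is generated by a separating family of continuous seminorms $\{p_\beta\}_{\beta \in B}$. For each finite subset $F \subset B$ and each $\epsilon > 0$, compactness of $\mathcal{K}$ produces $x_1,\dots,x_m \in \mathcal{K}$ whose $(F,\epsilon)$-neighborhoods $U_i := \{y : \max_{\beta \in F} p_\beta(y - x_i) < \epsilon\}$ cover $\mathcal{K}$. Using normality of compact Hausdorff spaces (via Urysohn's lemma), take a continuous partition of unity $\{\phi_i\}$ on $\mathcal{K}$ subordinate to $\{U_i\}$, and define the approximating map
\begin{equation*}
\Psi_{F,\epsilon}(x) := \sum_{i=1}^{m} \phi_i(\Psi(x))\, x_i.
\end{equation*}
Since each $x_i \in \mathcal{K}$ and $\mathcal{K}$ is convex, $\Psi_{F,\epsilon}$ takes values in the finite-dimensional simplex $S := \mathrm{conv}\{x_1,\dots,x_m\} \subset \mathcal{K}$. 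Restricted to $S$, it is a continuous self-map of a compact convex subset of a finite-dimensional space, so classical Brouwer's theorem yields $x_{F,\epsilon} \in S$ with $\Psi_{F,\epsilon}(x_{F,\epsilon}) = x_{F,\epsilon}$. A direct estimate using $\sum_i \phi_i \equiv 1$, subadditivity and positive homogeneity of each $p_\beta$, and the support property $\phi_i(y) > 0 \Rightarrow y \in U_i$ yields
\begin{equation*}
p_\beta\bigl(x_{F,\epsilon} - \Psi(x_{F,\epsilon})\bigr)
= p_\beta\bigl(\Psi_{F,\epsilon}(x_{F,\epsilon}) - \Psi(x_{F,\epsilon})\bigr) < \epsilon \quad \text{for all } \beta \in F.
\end{equation*}

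To conclude, I would pass to a limit along the directed set of pairs $(F,\epsilon)$ ordered by $F \subset F'$ and $\epsilon \geq \epsilon'$. Compactness of $\mathcal{K}$ provides a subnet $x_{F_\gamma,\epsilon_\gamma} \to x^* \in \mathcal{K}$; continuity of $\Psi$ and of each $p_\beta$, together with the approximation estimate, gives $p_\beta(x^* - \Psi(x^*)) = 0$ for every $\beta \in B$, and the separation property of the seminorm family then forces $\Psi(x^*) = x^*$, so the fixed-point set is nonempty. The main obstacle in this program is the construction of the approximating maps $\Psi_{F,\epsilon}$: one must verify that the partition-of-unity averaging produces a \emph{bona fide} continuous self-map of a finite-dimensional simplex inside $\mathcal{K}$ whose defect from $\Psi$ is quantitatively controlled in each chosen seminorm, so that the Brouwer fixed points of the approximations actually converge, along a subnet, to a genuine fixed point of $\Psi$ in the original infinite-dimensional setting.
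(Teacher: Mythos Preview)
Your proof is correct and follows the classical Schauder--Tychonoff finite-dimensional reduction argument. However, the paper does not actually prove this statement: it is quoted as a known result with a reference to \cite[Corollary 17.56]{bookInfDimAnalysis06}, so there is no ``paper's own proof'' to compare against. Your write-up is a faithful rendering of the standard textbook proof that the cited reference presumably contains.
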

We will use the following immediate corollary:
\begin{corollary}\label{Corollary:Brouwer}
Let $\mathcal{P}$ be a locally convex metric space and $\mathcal{K} \subset \mathcal{P}$ be convex and relatively compact in $\mathcal{P}$. If $\Psi: \mathcal{P} \to \mathcal{K}$ is continuous, then fixed points of $\Psi$ exist in $\mathcal{K}$.
\end{corollary}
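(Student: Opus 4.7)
The plan is to reduce Corollary \ref{Corollary:Brouwer} to Theorem \ref{Theorem:Brouwer} by passing to the closure of $\mathcal{K}$ in $\mathcal{P}$. The point is that Theorem \ref{Theorem:Brouwer} requires a self-map of a compact convex set, while here we have a map defined on all of $\mathcal{P}$ whose image lies in a merely relatively compact convex set $\mathcal{K}$. Taking the closure will upgrade relative compactness to compactness, and a simple argument will show that the fixed point obtained for the enlarged domain actually lies back in $\mathcal{K}$.

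Concretely, let $\overline{\mathcal{K}}$ denote the closure of $\mathcal{K}$ in $\mathcal{P}$. First I would verify that $\overline{\mathcal{K}}$ is compact, which is immediate from the relative compactness hypothesis on $\mathcal{K}$. Next, I would verify that $\overline{\mathcal{K}}$ is convex: this is a standard fact valid in any topological vector space (and in particular in our locally convex metric setting), since the closure of a convex set is convex, as can be seen by approximating two points in $\overline{\mathcal{K}}$ by nets in $\mathcal{K}$ and using continuity of addition and scalar multiplication. Finally, note that since $\mathcal{P}$ is a metric space it is Hausdorff, so the ambient space is a locally convex Hausdorff space as required by Theorem \ref{Theorem:Brouwer}.

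Now I would restrict $\Psi$ to $\overline{\mathcal{K}}$. Because $\Psi(\mathcal{P}) \subseteq \mathcal{K} \subseteq \overline{\mathcal{K}}$, the restriction
\[
\Psi|_{\overline{\mathcal{K}}}\colon \overline{\mathcal{K}} \longrightarrow \overline{\mathcal{K}}
\]
is a well-defined continuous self-map of the non-empty compact convex set $\overline{\mathcal{K}}$. Applying Theorem \ref{Theorem:Brouwer} yields an $x^* \in \overline{\mathcal{K}}$ such that $\Psi(x^*) = x^*$. To conclude I would observe that $x^* = \Psi(x^*) \in \Psi(\mathcal{P}) \subseteq \mathcal{K}$, so the fixed point actually lies in $\mathcal{K}$ as desired.

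There is no serious obstacle; the only subtlety worth a line of justification is the convexity of the closure $\overline{\mathcal{K}}$ (which relies on the linear structure of the ambient space, so one should at least implicitly assume $\mathcal{P}$ is a locally convex topological vector space, as is the case in the intended applications where $\mathcal{P} = \Theta \times \mathcal{M}$ sits inside a space of signed measures). All other ingredients — non-emptiness of $\overline{\mathcal{K}}$, compactness, Hausdorffness, continuity of the restriction, and the fact that the fixed point must belong to the image — are immediate.
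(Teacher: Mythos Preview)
Your proof is correct and follows essentially the same approach as the paper: take the closure $\overline{\mathcal{K}}$, restrict $\Psi$ to it, apply Theorem \ref{Theorem:Brouwer}, and observe that the resulting fixed point lies in $\mathcal{K}$ since $\Psi$ takes values there. If anything, you are more careful than the paper, which does not explicitly verify convexity of $\overline{\mathcal{K}}$ or Hausdorffness.
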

\begin{proof}
    Let $\overline{\mathcal{K}} \subseteq \mathcal{P}$ be the closure of $\mathcal{K}$ in $\mathcal{P}$, which by assumption is compact and convex. 
    Let $\Psi_{\overline{\mathcal{K}}}$ be the restriction of $\Psi$ to $\overline{\mathcal{K}}$. 
    Apply Theorem \ref{Theorem:Brouwer} to get a fixed point of $\Psi_{\overline{\mathcal{K}}}$. 
    Since the mapping only takes values in $\mathcal{K}$, the fixed point has to be in $\mathcal{K}$. Hence it is also a fixed point for $\Psi$.
\end{proof}
In light of Corollary \ref{Corollary:Brouwer} and Lemma \ref{Lemma:Mk0PreCompact}, to this end, we only need to show that $\Psi$ is sequentially continuous on $(\Theta, \M)$.

\begin{lemma}\label{Lemma:PsiContinuous0}
    The function $\Psi$ is (sequentially) continuous.
\end{lemma}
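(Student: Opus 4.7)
The plan is to fix a convergent sequence $(\theta^n, \mu^n) \to (\theta, \mu)$ in $\Theta \times \M$, let $\hat\alpha^n, \hat\alpha$ be the optimal controls characterized by \eqref{eq:charact.hat.alpha}, and argue coordinate-wise that $\Psi(\theta^n, \mu^n) \to \Psi(\theta, \mu)$. The first input I would invoke is Lemma \ref{Lemma:alphaL2conv0}, which delivers $\hat\alpha^n \to \hat\alpha$ in $\mathcal{H}^2_{\mathbb G}$; extracting a subsequence, I may in addition assume $\hat\alpha_t^n \to \hat\alpha_t$ in $\P$-measure for $dt$-almost every $t$.

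For the second coordinate, I would use Girsanov's theorem to rewrite $\P^{\hat\alpha^n} \circ X^{-1} = \P \circ (\widetilde X^{\hat\alpha^n})^{-1}$, where $\widetilde X^{\hat\alpha^n}$ is the unique strong solution of \eqref{label:tildeXform0}. A direct application of Lemma \ref{Lemma:mualphacont0} then gives Wasserstein convergence of this law to $\P^{\hat\alpha} \circ X^{-1}$ in $\M_K^0$, with no further work needed.

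For the first coordinate, denote $q_t^n := \P^{\hat\alpha^n} \circ (\hat\alpha_t^n)^{-1}$ and $q_t := \P^{\hat\alpha} \circ (\hat\alpha_t)^{-1}$. Stable convergence of the measures $\delta_{q_t^n}(dq)\,dt$ to $\delta_{q_t}(dq)\,dt$ reduces, via dominated convergence applied to each bounded measurable $\phi(t,q)$ that is continuous in $q$, to the statement that $q_t^n \to q_t$ weakly in $\mathcal{P}(A)$ for $dt$-a.e.\ $t$. Given $\psi \in C_b(A)$, using $\G_t$-measurability of $\hat\alpha_t^n$ and the $(\P, \mathbb{G})$-martingale property of $L^n := d\P^{\hat\alpha^n}/d\P$, I would rewrite
\[
\int_A \psi\, dq_t^n \;=\; \E\bigl[\psi(\hat\alpha_t^n)\, L_t^n\bigr].
\]
The hard part will be showing $L_t^n \to L_t$ in $L^1(\P)$. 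Here I would exploit the uniform bound on $\sigma^{-1} b$ from Assumption \ref{Assumption:SD4} together with continuity in $a$ from Assumption \ref{Assumption:SD2}: by It\^o's isometry and dominated convergence, $\int_0^\cdot \sigma^{-1}(s, X)b(s, X, \hat\alpha_s^n)\,dW_s$ converges in $L^2(\P)$ to its $\hat\alpha$-analogue and the corresponding quadratic-variation term converges analogously, so $L_t^n \to L_t$ in probability. Uniform $L^p$-bounds on $L^n$, coming from the bounded Girsanov integrand, upgrade this to $L^1$-convergence via Vitali's theorem. Combining this with boundedness of $\psi$ and the in-probability convergence $\psi(\hat\alpha_t^n) \to \psi(\hat\alpha_t)$ yields $\E[\psi(\hat\alpha_t^n) L_t^n] \to \E[\psi(\hat\alpha_t) L_t]$, hence the weak convergence $q_t^n \to q_t$.

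The principal obstacle is thus the change-of-measure step in the first coordinate: the Girsanov density couples $\hat\alpha^n$ to $L^n$, and one must ensure strong $L^1$-convergence of $L^n$ in order to pass to the limit under the varying laws $\P^{\hat\alpha^n}$. Once this is in place, the two coordinate-wise convergences combine to give sequential continuity of $\Psi$, and in conjunction with Corollary \ref{Corollary:Brouwer} and Lemma \ref{Lemma:Mk0PreCompact} furnish the fixed point underlying Theorem \ref{Theorem:MFGExistence0}.
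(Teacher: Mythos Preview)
Your proposal is correct and reaches the same conclusion, but the route you take for the first coordinate differs from the paper's. For the state law (second coordinate) your argument coincides with the paper: both rewrite $\P^{\hat\alpha^n}\circ X^{-1}$ as $\P\circ(\widetilde X^{\hat\alpha^n})^{-1}$ and invoke Lemma~\ref{Lemma:mualphacont0}. For the control law, however, the paper avoids your direct $L^1$-analysis of the Girsanov densities $L^n$. Instead it applies Pinsker's inequality and computes the relative entropy explicitly,
\[
\mathcal H(\P^{\theta,\mu}\,|\,\P^{\theta^n,\mu^n}) = \tfrac12\,\E^{\theta,\mu}\!\Big[\int_0^T \sigma(t,X)^{-2}\,\big|b(t,X,\hat\alpha_t^{\theta^n,\mu^n})-b(t,X,\hat\alpha_t^{\theta,\mu})\big|^2\,dt\Big],
\]
which goes to zero by Lemma~\ref{Lemma:alphaL2conv0} and bounded convergence. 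This yields total-variation convergence of $\P^{\theta^n,\mu^n}$ to $\P^{\theta,\mu}$ in one stroke, and the paper then closes with a triangle inequality through the intermediate term $\P^{\theta,\mu}\circ(\hat\alpha_t^{\theta^n,\mu^n})^{-1}$. Your approach instead establishes $L_t^n\to L_t$ in $L^1(\P)$ via convergence of the stochastic integral, convergence in probability of the exponential, and a Vitali/UI upgrade; since $d_{TV}(\P^{\hat\alpha^n},\P^{\hat\alpha})=\tfrac12\|L_T^n-L_T\|_{L^1}$, this is ultimately equivalent, just more hands-on. The entropy route is cleaner because the closed-form formula absorbs several steps, but your argument has the virtue of being elementary and not relying on the entropy identity. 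One small point: your subsequence extraction at the start should be framed as a subsequence--of--subsequence argument so that the final stable-topology convergence is concluded for the original sequence.
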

\begin{proof}
    Take a sequence $$\Theta \times \M \ni (\theta^n, \mu^n) \nto (\theta, \mu) \in \Theta \times \M.$$
    For the control variable, we follow the proof of \cite[Theorem 3.5]{CarmonaLacker15} and show a stronger convergence of $\Psi(\theta^n, \mu^n)$ in total variation to $\Psi(\theta, \mu)$. Let us recall that the total variation distance $d_{TV}$ for two probability measures $\nu, \nu'$ is defined by $$\dst{TV}{\nu, \nu'} := \sup_{\phi: |\phi| \leq 1}\int\phi d(\nu - \nu').$$ 
    Let $\P^{\theta, \mu}:= \P^{\hat\alpha^{\theta,\mu}}$  and  $\P^{\theta^n, \mu^n}:= \P^{\hat\alpha^{\theta^n,\mu^n}}$ be the respective equivalent measures defined in \eqref{eq:Def.MFG.general}. 
    By Pinsker's inequality, it suffices to show that
    \begin{equation}
        \mathcal  H(\P^{\theta, \mu} \ | \ \P^{\theta^n, \mu^n})\nto 0,
    \end{equation}
    where $\mathcal{H}(\nu_1 | \nu_2)$ denotes the relative entropy
    \[\mathcal{H}\pa{\nu_1 | \nu_2} := \begin{cases}
        \int \log \frac{d\nu_1}{d\nu_2}d\nu_1 & \text{if } \nu_1 \ll \nu_2\\
        +\infty & \text{otherwise}.
    \end{cases}\]
    By definition of $\P^{\theta^n, \mu^n}$ and $\P^{\theta, \mu}$, we have 
    $$
        \frac{d\P^{\theta^n, \mu^n}}{d\P^{\theta, \mu}} = \mathcal{E}\pa{\int_0^\cdot \pa{\sigma^{-1}(t, X)b(t, X, \hat{\alpha}_t^{\theta^n, \mu^n}) - \sigma^{-1}(t, X)b(t, X, \hat\alpha_t^{\theta, \mu})}dW_t }_T.
    $$
    Denote by $\E^{\theta, \mu}$ and $\E^{\theta^n, \mu^n}$ the expectation under $\P^{\theta, \mu}$ and $\P^{\theta^n, \mu^n}$, respectively. 
    Since $\sigma^{-1}b$ is bounded, we compute
    \begin{align*}
        \mathcal H(\P^{\theta, \mu} \ | \ \P^{\theta^n, \mu^n}) & = - \E^{\theta, \mu}\bigg[\log \frac{d\P^{\theta^n, \mu^n}}{d\P^{\theta, \mu}}\bigg]\\
        & = \frac{1}{2}\E^{\theta,\mu}\bigg[\intT\sigma(t, X)^{-2}\abs{b(t, X, \hat{\alpha}_t^{\theta^n, \mu^n}) - b(t, X, \hat{\alpha}_t^{\theta, \mu})}^2dt\bigg].
    \end{align*}
    By Lemma \ref{Lemma:alphaL2conv0}, continuity of $b$ in $a$, Assumption \ref{Assumption:SD4} and the fact that $\P \sim \P^{\theta,\mu}$, we can apply bounded convergence theorem and conclude that $\mathcal H(\P^{\theta, \mu} \ | \ \P^{\theta^n, \mu^n}) \nto 0$. In particular,
    \begin{equation}\label{Palpha conv}
        \dst{TV}{\P^{\theta^n, \mu^n}\circ \pa{\hat\alpha_t^{\theta^n, \mu^n}}^{-1}, \P^{\theta, \mu}\circ \pa{\hat\alpha_t^{\theta^n, \mu^n}}^{-1}} \leq \dst{TV}{\P^{\theta^n, \mu^n}, \P^{\theta, \mu}} \nto 0, \quad dt\text{-a.e.}
    \end{equation}
    On the other hand, Lemma \ref{Lemma:alphaL2conv0} implies
    $$
        \P^{\theta, \mu} \circ (\hat\alpha_t^{\theta^n,\mu^n})^{-1} \nto \P^{\theta, \mu} \circ (\hat\alpha_t^{\theta,\mu})^{-1} \text{ in }dt\text{-measure}.$$
    Together with (\ref{Palpha conv}) and triangle inequality, we get $$\P^{\theta^n, \mu^n}\circ \pa{\hat\alpha_t^{\theta^n, \mu^n}}^{-1} \nto \P^{\theta, \mu} \circ (\hat\alpha_t^{\theta,\mu})^{-1} \text{ in $dt$-measure},$$
    which implies $\delta_{\P^{\theta^n, \mu^n} \circ\pa{\hat\alpha_t^{\theta^n,\mu^n}}^{-1}}(dq)dt \nto \delta_{\P^{\theta, \mu} \circ\pa{\hat\alpha_t^{\theta,\mu}}^{-1}}(da)dt$  in $\Theta$ by bounded convergence theorem. 
    
    Note that for the state variable, we need Wasserstein convergence in $\M$, which is not implied by convergence in total variation metric. \orange{Let $M^n$ (resp. $M^\infty$) denote the stochastic exponential that defines the density for $\P^{\theta^n, \mu^n}$ (resp. $\P^{\theta, \mu}$) with respect to $\P$. Then we have
    \begin{align*}
        \W&\pa{\P^{\theta^n, \mu^n} \circ X^{-1}, \P^{\theta, \mu} \circ X^{-1}} = \sup_{\psi \in \mathrm{Lip}(\X, 1)}\pa{\E^{\theta^n , \mu^n}[\psi(X)] - \E^{\theta, \mu}[\psi(X)]}\\
        & = \sup_{\psi \in \mathrm{Lip}(\X, 1), \psi(0) = 0}\E\sqbra{(M^n_T - M^\infty_T)\psi(X)} \leq \E\sqbra{|M^n_T - M^\infty_T|\norm{X}_{\infty}}.
    \end{align*}
    Using the inequality $|e^{a} - e^{b}| \leq |a - b||e^{a} + e^{b}|$ and H\"{o}lder's inequality, we can find some $\tilde{p} > 2$ and $C>0$ such that 
    \begin{align*}
        \E\sqbra{|M^n_T - M_T|\norm{X}_{\infty}} & \leq C\E\bigg[\intT\sigma(t, X)^{-2}\abs{b(t, X, \hat{\alpha}_t^{\theta^n, \mu^n}) - b(t, X, \hat{\alpha}_t^{\theta, \mu})}^2dt\bigg]^{1/2}\\
        & \qquad \times \pa{\E\sqbra{|M^n_T + M^\infty_T|^{\tilde{p}}}\E\sqbra{\norm{X}_\infty^{\tilde{p}}}}^{1/\tilde{p}}.
    \end{align*}
    Applying Ito's formula and Gr\"{o}nwall's inequality under Assumption \ref{Assumption:SD4} yields for all $p \in \N$ that
    $\sup_{n \in \N \cup \{\infty\}}\E\sqbra{|M^n_T|^p} < \infty$. By continuity of $b$, $\H^2$-convergence of the optimal controls and Lemma \ref{Lemma:uniformboundXtilde}, we can conclude by bounded convergence theorem.} Continuity of $\Psi$ is then obtained.
    \end{proof}
    \noindent\emph{Step 4: Decomposition of Optimal Control.}\\
    To conclude the proof of Theorem \ref{Theorem:MFGExistence0},
    we can thus apply Corollary \ref{Corollary:Brouwer} with $\mathcal{P} = \Theta \times \M$ and $\mathcal{K} = \Theta \times \M_K^0$, 
    this yields a fixed point $(\theta, \mu)\in \Theta\times\M_K^0$ of the mapping $\Psi$ with corresponding optimal control $\hat{\alpha} \in \A$.
    Hence, $(\hat{\alpha}, \theta, \mu)$ is a MFG equilibrium for fixed entry time.
    Moreover, by \eqref{eq:charact.hat.alpha} this optimal control satisfies $\hat\alpha_t = \hat a(t, X, Z_t)$, where $Z$ solves the BSDE \eqref{eq:WeakMFG_BSDE}.

    Note that the desired decomposition in Theorem \ref{Theorem:MFGExistence0} is by the actual burst time $\tau^*$, not the exogenous burst time $\tau$, with which we have been working. However, Lemma \ref{Lemma:tau*decomp} implies that the two decompositions coincide. In particular, we have $$\hat{\alpha}_t = \hat{a}(t, X, Z_t^0)\ind{0 \leq t \leq \tau^*} + \hat{a}(t, X, Z_t^1(\tau^*))\ind{t > \tau^*}$$
    with $(Z^0,Z^1(\cdot))$ solving \eqref{eq:BSDEsolutionBSDE.Hamiltonian}. The measurability conditions follow from Theorem \ref{Theorem:BSDEmain}. Hence Theorem \ref{Theorem:MFGExistence0} is proved.
%!TEX root = main.tex

\section{Existence Result for Varying Entry Times}\label{Section:RET}

In this section, we look at MFGs with varying entry times. Let us recall that players enter the game in the time frame $[0, \eta]$ independently from each other with law $\nu$. The entry of the players creates a jump in the state process. Therefore, we need to generalize our setup, and then we state and prove the existence result in this setting. The game is still solved on the enlarged filtration $\mathbb{G}$. In other words, when the agent enters the game, she knows whether the burst of the bubble has occurred. We do not force entry before burst, and the model reduces to a standard optimal execution problem on $\{\T \geq \tau^*\}$. 

\subsection{MFG Setup with Varying Entry Times}\label{Subsection:RESetup}

Let us recall from Section \ref{Subsection:MFGModel} that we start with the product probability space $(\Omega \times [0, \eta], \F \otimes \B([0, \eta]), \P \otimes \nu)$ that carries independent variables $\iota$ and $\tau$ with respective laws $\lambda_0$ and $\nu_\tau$. It also carries an independent Brownian motion $W$, and $\mathbb{F}$ is the filtration generated by the Brownian motion and the initial state, completed by $\P$-null sets.
Using the same procedure described in Section \ref{Subsection:PEforExoBurst}, we progressively enlarge this filtration to $\mathbb{G}$ and make the exogenous burst time $\tau$ a totally inaccessible stopping time. 

Naturally, we extend the domains of $b, \sigma$, $f$ and $g$ to allow for state trajectories in $\X^*$. Assumptions \eqref{Assumption:SD}, \eqref{Assumption:BT}, and \eqref{Assumption:C} are still well--defined, except that the driftless state equation in \eqref{DriftlessState} becomes (and is still denoted) $X$, the unique strong solution on the product space $[0,T]\times \Omega\times [0,\eta]$ of 
    \begin{equation}\label{DriftlessState*}
        dX_t = \sigma(t, X)dW_t, \quad X_{\T} = \iota, \quad \text{and} \quad X_t = 0 \text{ for } t \in [0, \T).
    \end{equation}
By Lemma \ref{Lemma:uniformboundXtilde}, for all $p \geq 1$ we have $\nu$-almost surely
\begin{equation}
    \label{fandXintegrability1}\E^{\P}\sqbra{ \norm{X}_{\infty}^p}  < \infty
\end{equation}
and thus, by \ref{Assumption:C1}, $\nu$-almost surely we have
\begin{equation}
    \label{fandXintegrability2}
    \sup_{t \in [0, T], q \in \mathcal{P}(A), a \in A}\E^\P\sqbra{\abs{f_a(t, X, a)}^2 + \abs{f_b(t, X, q)}^2 + \abs{f_c(t, X, q)}^2} < \infty.
\end{equation}

Now, we describe the MFG in the current random entry time setting. Let us recall the set of admissible controls $\A^*$ from Definition \ref{Definition:AdmissibleControl}. Given $\alpha \in \A^*$, $(\theta, \mu)\in \Theta\times \M$ and $X$ satisfying \eqref{DriftlessState*}, we put
\begin{equation}\label{Definition:Palpha*}
    \frac{d\P^{\alpha}}{d\P\otimes \nu} := \mathcal{E}\Big(\int_0^\cdot \sigma^{-1}(s, X)b(s, X, \alpha_s)d W_s \Big)_T
 \end{equation} 
 and consider the cost
\begin{equation}\label{Definition:MFGobjectiveProductSpace}
    J^{\theta, \mu}(\alpha) \ce \E^{\P^{\alpha}}\sqbra{g(X,\bar\tau(\mu), \tau) + \int_\T^Tf(s, X, \theta_s, \bar{\tau}(\mu), \tau, \alpha_s)ds}.
\end{equation}
Note that the probability measure $\P^{\alpha}$, in contrast to Section \ref{Section:PE}, is now a measure on the product space $\Omega\times[0,\eta]$.
Let $X(t^*)$ denote the conditional process $X(\cdot, \cdot, t^*)$. By the disintegration theorem, there is a family of measures $\P^{\alpha; t^*}$ on $\Omega$
such that
\begin{equation}\label{Definition:MFGobjective}
    J^{\theta, \mu}(\alpha) = \E^\nu\Biggl[\underbrace{\E^{\P^{\alpha; t^*}}\bigg[g(X(t^*),\bar\tau(\mu), \tau) + \int_{t^*}^Tf(s, X(t^*), \theta_s, \bar{\tau}(\mu), \tau, \alpha_s(t^*))ds\bigg]}_{\eqqcolon J^{\theta,\mu, t^*}(\alpha(t^*))}\Biggr].
\end{equation}
Conditioning on $\T = t^*$, the problem can be solved similarly to the fixed--entry case. \orange{We make a note that the uniqueness of the disintegration and the product structure of our probability space together imply that the family of measures $\P^{\alpha;t^*}$ coincides $\nu$-a.s. with $\P^{\alpha(t^*)}$, the probability measures resulting from Girsanov transformation corresponding to $\alpha$ conditioned on entry time $\T = t^*$. In this section, we use these two notations interchangeably. }

We will solve the problem $t^*$-by-$t^*$ and collect our conditional strategies by marginalization. The equality \eqref{Definition:MFGobjective} shows that the optimality of $\alpha$ is preserved if $\alpha(t^*)$ is optimal for $\nu$-almost every $t^*$. The joint measurability condition for admissibility requires justification (see Lemma \ref{Lemma:Zmeasurability}). We make the following continuity assumption to guarantee admissibility of the patched solution.
\begin{assumption}{ET}\label{Assumption:ET}
In addition to Assumption (\ref{Assumption:C}), we further assume that for fixed $(t, q, a) \in [0, T] \times \mathcal{P}(A)\times A$, the decomposed running costs $f_a(t, \cdot, a)$, $f_b(t, \cdot, q)$, and $f_c(t, \cdot, q)$, as well as the decomposed terminal costs $g^0(\cdot, t)$ and $g^1(\cdot, t)$, are continuous in $\bx$ on $(\X^*, d_{\X^*})$.
\end{assumption}
\begin{remark}\label{Remark:a_hatCont}
    Let us recall from Lemma \ref{Lemma:alphacont} that the minimizer function $\hat{a}$ of the Hamiltonian $H$ in \eqref{hamiltonian} is a measurable function of $(t, \bx, z)$ and continuous in $z$. If we assume \eqref{Assumption:ET} in addition to Assumptions \eqref{Assumption:C} and \eqref{Assumption:SD}, specifically the continuity of $f_a$ in $\bx$, then $H$ is jointly continuous in $(\bx, z)$. Therefore, by Berge's maximum theorem \cite[Theorem 17.31]{bookInfDimAnalysis06}, the minimizer function $\hat{a}$ is also jointly continuous in $(\bx, z)$ for fixed $t \in [0, T]$.
\end{remark}

Given joint measurability of optimal control, by the disintegration theorem, the following definition of MFG equilibrium is equivalent to Definition \ref{Definition:ModelNE} but more suitable for our $t^*$-to-$t^*$ approach.

\begin{definition}\label{definition:MFGequilibrium}
A MFG equilibrium for random entry time is a triplet $(\hat{\alpha}, \theta, \mu) \in \A^* \times \Theta \times \M$ such that for $\nu$-almost each $t^* \in [0,\eta]$, $\hat{\alpha}(t^*)$ minimizes the conditional objective $J^{\theta,\mu, t^*}(\alpha(t^*))$ in \eqref{Definition:MFGobjective}, and that $(\theta, \mu)$ satisfies the fixed point condition $(\theta_t, \mu) = (\theta_{t; \text{MFG}}, \mu_{\text{MFG}})\ dt$-a.e. where
   \begin{equation}\label{Definition:muMFG}
       \mu_{\text{MFG}}(B) \ce \int_{[0,\eta]} \P^{\hat{\alpha}(t^*)}\circ \pa{X(t^*)}^{-1} (B)\nu(dt^*),  \quad B \in \B(\X^*),
   \end{equation}
   and
   \begin{equation}\label{Definition:thetaMFG}
      \theta_{t;MFG}(E) \ce \int_{[0,\eta]} \P^{\hat{\alpha}(t^*)}\circ \pa{\hat{\alpha}_t(t^*)}^{-1}(E) \nu(dt^*), \quad E \in \B(A), \quad t \in [0, T].
   \end{equation} 
\end{definition}

% {Recalling the bound $K>0$ for $\sigma^{-1}b$, we make the following additional local Lipschitz assumptions to ensure admissibility of the patched optimal control in Lemma \ref{Lemma:Zmeasurability}.
% \color{red}
% \begin{assumption}{ET}\label{Assumption:ET}
% We assume that for some $p \geq 1$, there are deterministic functions $$c_K: [0, T] \times\X^*\times \X^* \to \R_+, \quad c_f: [0, T] \times\X^*\times \X^* \to \R_+, \quad c_g: \X^*\times \X^* \to \R_+$$
% and integrable function $C: [0, T]: \to \R$, satisfying $c_K(t, \bx, \by) + c_f(t, \bx, \by) + c_g(\bx, \by)\leq C_t(1 + \norm{\bx}_\infty^p + \norm{\by}_{\infty}^p)$ for all $(t, \bx, \by)$. For every $(q, \upeta, a) \in \mathcal{P}(A) \times [0, T] \times A$ and all $(\bx, \by) \in \X^* \times \X^*$, the model parameter and cost functions satisfy
% \begin{align*}
% |\sigma^{-1}(t, \bx)b(t, \bx, a) - \sigma^{-1}(t, \by)b(t, \by, a)| & \leq c_K(t, \bx, \by)d_{\X^*}(\bx, \by) \wedge 2K\\
%     |f^0(t, \bx, q, \upeta, a) - f^0(t, \by, q, \upeta, a)| + |f^1(t, \bx, q, a) - f^1(t, \by, q, a)|& \leq c_f(t, \bx, \by)d_{\X^*}(\bx, \by)\\
%     |g^0(\bx, \upeta) -g^0(\by, \upeta)| + |g^1(\bx, \upeta) - g^1(\by, \upeta)| & \leq c_g(\bx, \by)d_{\X^*}(\bx, \by).
% \end{align*}
% \end{assumption}}
We are now ready to state the main result of this section.
\begin{theorem}\label{Theorem:MFGExistence}
    Under Assumptions (\ref{Assumption:SD}), (\ref{Assumption:BT}), (\ref{Assumption:C}) and (\ref{Assumption:ET}), there exists a MFG equilibrium $(\hat{\alpha}, \Theta, \mu) \in \A^* \times \theta \times \M$ with random entry time. Moreover, the optimal control can be decomposed as 
    $$
        \hat{\alpha}_t(t^*) = \hat{\alpha}_t^+(t^*) \ind{t \leq \tau^*} + \hat{\alpha}^-_t(t^*,\tau^*) \ind{t > \tau^*}\quad dt\otimes\P\otimes\nu\text{--a.s.}
    $$
    where the pre-burst control $\hat{\alpha}^+$ is $\mathscr{P}(\mathbb{F})\otimes \B([0, \eta])$-measurable, and the post-burst control $\hat{\alpha}^-$ is $\mathscr{P}(\mathbb{F})\otimes \B([0, \eta]) \otimes \B([0, T])$--measurable.
\end{theorem}

\subsection{Preliminaries and Regularity Results}
As we mentioned before, we solve the problem $t^*$-by-$t^*$ and verify the fixed point conditions \eqref{Definition:muMFG} and \eqref{Definition:thetaMFG} using regular conditional probabilities. For $t^* \in [0,\eta]$, and $\alpha(t^*) \in \A(t^*)$, for simplicity we denote $\alpha^{t^*}:= \alpha(t^*)$. Just as Definition \ref{Definition:Mk0} in the fixed-entry section, we define the space of laws of the inventory process, for both $X^{t^*}$ and $X$.

With $K \geq 0$ being the bound in Assumption \ref{Assumption:SD4}, \orange{define
\begin{equation}\label{MKt*Definition}
    \M_K^{t^*} \ce \cbra{\P^{\alpha^{t^*}} \circ (X^{t^*})^{-1}: \alpha^{t^*} \in \A(t^*)} \subset \mathcal{P}(\X^{t^*}),
\end{equation}}
which is consistent with the Definition \ref{Definition:Mk0} of $\M^0_K$ with $t^* = 0$. In light of Lemma \ref{Lemma:Mk0PreCompact}, the set $\M^{t^*}_K$ is convex and pre-compact in $\mathcal{P}(\X^{t^*})$ for each $t^* \in [0, \eta]$. 
Similarly, for laws of the unconditional state processes, \orange{define
\begin{equation}\label{Definition:MK}
    \begin{split}
    \M_K & \ce \cbra{\P^{\alpha} \circ X^{-1}: \alpha \in \A^*} \subset \calP(\X^*).
    \end{split}
\end{equation}}
Note that for each $\alpha \in \A^*$ we can obtain its corresponding $\mu^\alpha = \P^{\alpha} \circ X^{-1} \in \M_K$ also by marginalizing the conditional measures given $\T$, namely $$\mu^\alpha(B) \ce \int_{[0,\eta]} \mu^{\alpha, t^*}(B)\nu(dt^*)  \quad \forall B \in \B(\X^*),$$
where \orange{$\mu^{\alpha, t^*} \ce \P^{\alpha^{t^*}} \circ (X^{t^*})^{-1}\in \M^{t^*}_K$} is the probability kernel for the state variable.

For the law of control process, for $t^* \in [0, \eta]$ define
\begin{equation*}
    \Theta^{t^*} \ce \cbra{\theta^{t^*} \in \Theta: \theta^{t^*}_t = \delta_0 \text{ for } t \in [0, t^*)}
\end{equation*}
where $\delta_0$ is the Dirac measure concentrated at $0$.

\subsection{Proof of Theorem \ref{Theorem:MFGExistence}}

Fix some $(\theta, \mu) \in \Theta \times \M$. For a given entry time $\T = t^*$, consider the BSDE on $[t^*, T]$ under the enlarged filtration $\mathbb{G}$
\begin{equation}\label{WeakMFG_BSDE*}
    Y^{\hat{\alpha}^{t^*}}_t = g(X^{t^*}, \bar{\tau}(\mu), \tau) + \int_t^T h(s, X^{t^*}, \theta_s, \bar{\tau}(\mu), \tau, Z^{\hat{\alpha}^{t^*}}_s)ds - \int_t^T Z^{\hat{\alpha}^{t^*}}_sdW_s - \int_t^T U^{\hat{\alpha}^{t^*}}_sdM_s.
\end{equation}
By Theorem \ref{Theorem:BSDEmain}, this BSDE is uniquely solvable with solutions in the form of \eqref{BSDEsolution}. 
By \emph{Step 1} in Section \ref{Subsection:MFGExistence0}, with $\T = t^*$ fixed, the strategy $\hat{\alpha}^{t^*} \in \A(t^*)$ obtained as the unique minimizer of the Hamiltonian \eqref{hamiltonian} along the solution $Z^{\hat{\alpha}^{t^*}}$ minimizes the conditional objective $J^{\theta, \mu}(\alpha, t^*)$ defined in \eqref{Definition:MFGobjective}. We can combine the $t^*$--by--$t^*$ optimal strategies by defining $\hat{\alpha}(t^*) \ce \hat{\alpha}^{t^*}$.
In order for $\hat{\alpha}$ to be admissible, we need to check that it is jointly measurable in $(t, \omega)$ and $t^*$.
We start with the following estimates.
\begin{lemma} \label{Lemma:unifboundZt*}
    Let $(Y^{\hat{\alpha}^{t^*}}, Z^{\hat{\alpha}^{t^*}}, U^{\hat{\alpha}^{t^*}}) \in \mathcal{S}^2_{\mathbb G}[t^*, T] \times \mathcal{H}_\mathbb{G}^2[t^*, T] \times \mathcal{H}_{\mathbb{G}, D}^2[t^*, T]$ denote the unique solution of \eqref{WeakMFG_BSDE*}. Then, we have the uniform bound 
    $$\sup_{t^* \in [0,\eta]}\E\bigg[|Y_t^{\hat{\alpha}^{t^*}}|^2 + \int_{t^*}^T \big|Z_s^{\hat{\alpha}^{t^*}}\big|^2ds\bigg] < \infty.$$
\end{lemma}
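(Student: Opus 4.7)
The plan is to derive this uniform estimate by a standard BSDE a priori argument, but with careful tracking of constants so that they do not depend on $t^*\in[0,\eta]$. Fix $t^*$ and write $(Y,Z,U):=(Y^{\hat\alpha^{t^*}},Z^{\hat\alpha^{t^*}},U^{\hat\alpha^{t^*}})$ for brevity. Apply It\^o's formula to $e^{\lambda t}|Y_t|^2$ on $[t,T]$ (the factor $e^{\lambda t}$ with $\lambda$ large helps absorb a Lipschitz term later), which yields, after using $[M]_s=D_s+K_s^c$ and the bound $|\Delta Y_s|^2=|U_s|^2\Delta D_s$, an identity of the form
\begin{equation*}
e^{\lambda t}|Y_t|^2+\int_t^T e^{\lambda s}\bigl(|Z_s|^2+|U_s|^2\,dK_s\bigr)+\lambda\!\int_t^T\! e^{\lambda s}|Y_s|^2 ds= e^{\lambda T}|g(X^{t^*},\bar\tau(\mu),\tau)|^2+2\!\int_t^T\! e^{\lambda s}Y_s h(s,X^{t^*},\theta^{t^*}_s,Z_s)\,ds- \mathcal{M}_t,
\end{equation*}
where $\mathcal{M}$ collects the stochastic integrals against $W$ and $M$, which are genuine martingales on the closed interval since $Z,U$ lie in the appropriate $\mathcal H^2$ spaces.

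Next I would bound the driver. From Remark \ref{Remark:AssumptionC} and \ref{Assumption:C1}, $h$ is Lipschitz in $z$ with a constant $\ell_h$ depending only on $\ell_b,\ell_\sigma,K$, and
\begin{equation*}
|h(s,X^{t^*},\theta^{t^*}_s,0)|\le \ell_f\Bigl(1+\|X^{t^*}_{\cdot\wedge s}\|_\infty^{p}+\int_A|a|^p\theta^{t^*}_s(da)\Bigr)+\sup_{a\in A}|b(s,X^{t^*},a)\sigma^{-1}(s,X^{t^*})|\cdot 0,
\end{equation*}
the last factor vanishing since $z=0$; in any case the right-hand side is controlled by $C(1+\|X^{t^*}\|_\infty^{p})$ because $A$ is compact and $\theta^{t^*}_s$ is supported on $A$ (so $\int|a|^p\theta^{t^*}_s(da)\le\sup_{a\in A}|a|^p$). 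Using $2|Y_s h(s,\cdots,Z_s)|\le \lambda|Y_s|^2+\tfrac{2}{\lambda}|h(s,\cdots,0)|^2+\varepsilon|Z_s|^2+C_\varepsilon|Y_s|^2$ with $\varepsilon<1$ and $\lambda$ large, and taking expectations (martingale parts vanish), we absorb the $|Z|^2$ term on the left and obtain
\begin{equation*}
\E\bigl[|Y_t|^2\bigr]+\E\!\int_t^T|Z_s|^2 ds\le C\,\E\bigl[|g(X^{t^*},\bar\tau(\mu),\tau)|^2\bigr]+C\,\E\!\int_t^T\bigl(1+\|X^{t^*}\|_\infty^{2p}\bigr)ds+C\!\int_t^T\E[|Y_s|^2]\,ds,
\end{equation*}
and then Gronwall's inequality gives a bound depending only on the first two expectations on the right.

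The remaining task, and the one that encodes the uniformity in $t^*$, is to show these two quantities are bounded by a constant independent of $t^*\in[0,\eta]$. For the driftless state, \eqref{DriftlessState*} with Lipschitz $\sigma$ (Assumption \ref{Assumption:SD3}) and initial datum $\iota$ having all moments (\ref{Assumption:SD4}) gives, by a standard BDG--Gronwall argument on the shifted interval $[t^*,T]$,
\begin{equation*}
\sup_{t^*\in[0,\eta]}\E\bigl[\|X^{t^*}\|_\infty^{2p}\bigr]<\infty,
\end{equation*}
which also appeared as \eqref{fandXintegrability1}. For the terminal cost, the decomposition $g=g^0(X^{t^*},\bar\tau(\mu))\mathbf 1_{\tau>T}+g^1(X^{t^*},\tau^*)\mathbf 1_{\tau\le T}$ combined with \ref{Assumption:C2} (continuity in the time variable together with the $L^2$ property) and the continuity in the path variable from \ref{Assumption:ET}, plus the uniform moment bound above, yields $\sup_{t^*}\E[|g(X^{t^*},\bar\tau(\mu),\tau)|^2]<\infty$; more concretely the growth conditions implicit in the running-cost setup of \ref{Assumption:C1}, applied here to $g^0,g^1$ as in the bubble-riding instance, give polynomial growth in $\|X^{t^*}\|_\infty$ which the moment bound controls uniformly.

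The main obstacle I anticipate is precisely the uniformity of the $L^2$ bound on the terminal datum $g$, because \ref{Assumption:C2} is stated as mere square integrability rather than an explicit growth bound; one must exploit the continuity of $g^0,g^1$ in the time parameter together with the joint measurability and the uniform moment control on $X^{t^*}$ to obtain a bound independent of $t^*$ (either by invoking a dominated-convergence argument over the compact parameter set $[0,\eta]\times[0,T]$, or by appealing to the explicit form of $g$ that will be used in the application in Section \ref{Section:ModelRevisit}). Once this uniform terminal bound is in place, the BSDE estimate above closes the argument.
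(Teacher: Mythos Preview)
Your approach is correct but differs from the paper's. The paper does \emph{not} work directly with the $\mathbb{G}$-BSDE; instead it invokes the decomposition from Theorem~\ref{Theorem:BSDEmain} to split $(Y,Z)$ into the pre- and post-$\tau$ pieces $(Y^{0;t^*},Z^{0;t^*})$ and $(Y^{1;t^*}(\cdot),Z^{1;t^*}(\cdot))$, each solving a \emph{Brownian} BSDE on $[t^*,T]$. It then bounds $(Y^{1;t^*},Z^{1;t^*})$ uniformly in $t^*$ and $\upeta$ by replaying the proof of Lemma~\ref{integrabilityY1Z1} with the moment estimates \eqref{fandXintegrability1}--\eqref{fandXintegrability2}, and finally bounds $(Y^{0;t^*},Z^{0;t^*})$ by It\^o--Young--Gronwall on the second Brownian BSDE, using the already-obtained bound on $Y^{1;t^*}_s(s)$ to control the coupling term $k_s(Y^{1;t^*}_s(s)-Y^{0;t^*}_s)$. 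Your route---a single It\^o estimate on $e^{\lambda t}|Y_t|^2$ in the enlarged filtration, absorbing the jump part via $[M]=D$ and the compensator---is more direct and avoids invoking the decomposition machinery, at the price of handling the jump bracket carefully (note $[M]_s=D_s$, not $D_s+K_s^c$, since $K$ is continuous; what you actually get on the left after compensating is $\int|U_s|^2(1-D_{s-})k_s\,ds$, which is nonnegative and can be dropped). The paper's approach has the advantage that it reuses the iterative-BSDE structure already set up for the existence theorem and makes the role of Assumption~\ref{Assumption:BT1} (boundedness of $k$) more transparent through the coupling term.

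Your identification of the main obstacle---uniformity in $t^*$ of $\E[|g(X^{t^*},\bar\tau(\mu),\tau)|^2]$---is exactly right, and the paper faces the same issue: it is handled implicitly by appealing to the (PEA)-type hypothesis $\sup_\upeta\E[|\xi^1(\upeta)|^2]<\infty$ together with the uniform moment bound \eqref{fandXintegrability1}, rather than by an explicit growth condition on $g$. So your proposed resolution (compactness of the time-parameter set plus continuity from \ref{Assumption:C2} and \ref{Assumption:ET}, or the explicit polynomial form in the application) is in the same spirit.
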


\begin{proof} 
    We apply the results in Section \ref{Subsection:PEforExoBurst} for each $t^* \in [0,\eta]$. 
    The solution $Y^{\hat{\alpha}^{t^*}}$ has the form similar to \eqref{BSDEsolution}, namely
    \begin{equation*}
        Y^{\hat{\alpha}^{t^*}}_t = Y_{t}^{0; \hat{\alpha}^{t^*}}\ind{t < \tau} + Y_t^{1; \hat{\alpha}^{t^*}}(\tau)\ind{\tau \leq t} 
    \end{equation*}
    where for all $\upeta \in [t^*, T]$, $(Y^{0; \hat{\alpha}^{t^*}}, Z^{0;\hat{\alpha}^{t^*}})$ and $(Y^{1;\hat{\alpha}^{t^*}}(\upeta), Z^{1;\hat{\alpha}^{t^*}}(\upeta))$ is the unique solution in $\S^2_\mathbb{F} \times \H^2_\mathbb{F} \times \S^2_\mathbb{F}[\upeta, T] \times \H^2_\mathbb{F}[\upeta, T]$ to the following iterative system of Brownian BSDEs: 
    \begin{equation}\label{eq:IterativeBSDE}
    \begin{cases}
        Y_t^{1; \hat{\alpha}^{t^*}}(\upeta) = g^1(X^{t^*}, \tau^*) + \int_t^T h^1(s, X^{t^*}, \theta_s, Z_s^{1; \hat{\alpha}^{t^*}}(\upeta))ds \\
        \hspace{3cm} - \int_t^TZ_s^{1; \hat{\alpha}^{t^*}}(\upeta)dW_s, & \upeta \leq t \leq T\\
        Y_t^{0; \hat{\alpha}^{t^*}} = g^0(X^{t^*}, \bar{\tau}(\mu)) + \inttT h^0(s, X^{t^*}, \theta_s, \bar{\tau}(\mu),Z^{0; \hat{\alpha}^{t^*}}_s)ds\\      \hspace{3cm} + \inttT k_s(Y_s^{1; \hat{\alpha}^{t^*}}(s) - Y_s^{0; \hat{\alpha}^{t^*}})ds
    - \int_t^TZ^{0; \hat{\alpha}^{t^*}}_sdW_s, &  t^* \leq t \leq T.
    \end{cases}
    \end{equation} 
    We bound each term uniformly in $t^*$. 
    In light of \eqref{fandXintegrability1} and \eqref{fandXintegrability2}, adopting the proof of Lemma \ref{integrabilityY1Z1} yields a bound $C_Y$ for $\E[|Y^{1; \hat{\alpha}^{t^*}}_t(\upeta)|^2]$ and a bound $C_Z$ for $\E\sqbra{\int_{\upeta \wedge T}^T \abs{Z_s^{1; \hat{\alpha}^{t^*}}(\upeta)}^2ds}$, both of which are uniform in $t^*, t$ and $\upeta$.

    Now we bound the expressions involving $Y^{0;\hat{\alpha}^{t^*}}$ and $Z^{0;\hat{\alpha}^{t^*}}$. Let $C_K$ denote the bound of $k_s$ on $[0, T]$. Apply It\^{o}'s formula to $(Y_t^{0; \hat{\alpha}^{t^*}})^2$ and take expectation. Let us recall that due to boundedness of $A$, the driver $h^0$ has linear growth in $z$. So there is $C > 0$ that depends on the bounds in \eqref{fandXintegrability1} and \eqref{fandXintegrability2} such that for any $\varepsilon > 0$ we have 
    \begin{align*}
        \E\sqbra{\abs{Y^{0; \hat{\alpha}^{t^*}}_t}^2} & = \E\sqbra{|g^0(X^{t^*}, \bar{\tau}(\mu))|^2 + \int_{t^*}^T 2Y_s^{0; \hat{\alpha}^{t^*}}h^0(s, X^{t^*}, \theta_s, \bar{\tau}(\mu),Z^{0; \hat{\alpha}^{t^*}}_s)ds}\\
        & +\E\sqbra{\int_{t^*}^T 2Y_s^{0; \hat{\alpha}^{t^*}}\pa{k_sY_s^{1;\hat{\alpha}^{t^*}}(s) - k_sY_s^{0;\hat{\alpha}^{t^*}}}ds} - \E\sqbra{\int_{t^*}^T \abs{Z^{0; \hat{\alpha}^{t^*}}_s}^2ds} \\
        & \leq \E\sqbra{|g^0(X^{t^*}, \bar{\tau}(\mu))|^2 + \int_{t^*}^T\frac{1}{\varepsilon}\abs{Y_s^{0; \hat{\alpha}^{t^*}}}^2+ \varepsilon\abs{h^0(s, X^{t^*}, \theta_s, \bar{\tau}(\mu),Z^{0; \hat{\alpha}^{t^*}}_s)}^2ds}\\
        &  + \E\sqbra{\int_{t^*}^T\frac{1}{\varepsilon}\abs{Y_s^{0; \hat{\alpha}^{t^*}}}^2+ \varepsilon\abs{k_sY_s^{1;\hat{\alpha}^{t^*}}(s)}^2 - 2k_s\abs{Y_s^{0; \hat{\alpha}^{t^*}}}^2ds - \int_{t^*}^T \abs{Z^{0; \hat{\alpha}^{t^*}}_s}^2ds}\\
& \leq C + \E\sqbra{\pa{\frac{2}{\varepsilon}+2C_K}\int_{t^*}^T \abs{Y_s^{0; \hat{\alpha}^{t^*}}}^2 ds} + TC_K^2C_Y^2\\
&+C\varepsilon T + (C\varepsilon - 1) \E\sqbra{\int_{t^*}^T \abs{Z^{0; \hat{\alpha}^{t^*}}_s}^2ds}.
\end{align*}
Choose $\varepsilon$ small enough such that $C\varepsilon \leq 1$ to remove the last term. For example, take $\varepsilon = 1/C$ and apply Gr\"{o}nwall's inequality. We have $$\E\sqbra{\abs{Y^{0; \hat{\alpha}^{t^*}}_t}^2} \leq (C+T + TC_K^2C_Y^2)\exp\pa{2T(C+C_K)}.$$
Call this $C_{YY}$. Plug this back into the previous inequality, and choose any $\varepsilon < \frac{1}{C}$ we get $$(1-C\varepsilon)\E\sqbra{\int_{t^*}^T \abs{Z^{0; \hat{\alpha}^{t^*}}_s}^2ds} \leq C+\pa{\frac{2}{\varepsilon}+2C_K}TC_{YY} + TC_K^2C_Y^2 + C\varepsilon T.$$
Divide both sides by $(1-C\varepsilon) > 0$, get a uniform bound which we call $C_{ZZ}$. Putting these together, we have 
$$\sup_{t^* \in [0,\eta]}\E\sqbra{|Y_t^{\hat{\alpha}^{t^*}}|^2 + \int_{t^*}^T \abs{Z_s^{\hat{\alpha}^{t^*}}}^2ds} \leq C_Y + C_{YY} + C_Z + C_{ZZ} < \infty.
$$
\end{proof} 

\begin{lemma}\label{Lemma:Zmeasurability}
    The patched solution $\hat{\alpha}$ given by $\hat{\alpha}(\cdot, t^*) = \hat{\alpha}^{t^*}(\cdot)$ is admissible.
\end{lemma}
\begin{proof} 
    Since $\hat{\alpha}^{t^*} \in \A(t^*)$ for every $t^* \in [0, t^*]$, we only need to show $\hat{\alpha}$ is $\mathscr{P}(\mathbb{G})\otimes \B([0, \eta])$-measurable. We first prove a regularity result for the mapping $t^* \mapsto Z^{\hat{\alpha}^{t^*}}$. For each $t^*$, extend $Z^{\hat{\alpha}^{t^*}}$ to a process in $\mathcal{H}_{\mathbb{G}}^2$ by setting $Z^{\hat{\alpha}^{t^*}}_s = 0$ for $s \in [0, t^*)$. Then, we claim that for any sequence $t^*_n \downarrow t^*$, the process $Z^{\hat{\alpha}^{t^*_n}}$ converges to $Z^{\hat{\alpha}^{t^*_n}}$ in $\P \otimes dt$-measure.
    Abbreviate $Z^n = Z^{\hat{\alpha}^{t^*_n}}$ and $Z^* =  Z^{\hat{\alpha}^{t^*}}$, where $\hat{\alpha}^{t^*_n}$ (respectively $\hat{\alpha}^{t^*}$) is the minimizer of Hamiltonian corresponding to $t^*_n$ (respectively $t^*$). 
    We show that each subsequence of $\{Z^n\}_{n \in \N}$ has a further subsequence converging to $Z$, $dt \otimes \P$-almost everywhere. By Lemma \ref{Lemma:Xt*cont}, $X^{t^*_n}$ converges to $X^{t^*}$ in probability on $\X^*$, so there exists a subsequence that converges almost surely. Without loss of generality, we take $\{t^*_n\}$ to be this subsequence that gives almost sure convergence for $X^{t^*_n}$. We then want to show that
    $$\limn \norm{Z^n - Z^*}_{\mathcal{H}_{\mathbb{G}}^2} = \limn \E\bigg[\intT|Z_t^n - Z_t^*|^2dt\bigg] = 0.$$ 
    First note that the integrand is $0$ on $[0, t^*)$ and $Z_t^n = 0$ on $[t^*, t^*_n)$. 
    We break the integral into two pieces: 
    $$\E\bigg[\intT|Z_t^n - Z_t^*|^2dt\bigg] = \E\bigg[\int_{t^*}^{t^*_n}|Z_t^* |^2dt\bigg] + \E\bigg[\int_{t^*_n}^{T}|Z_t^n - Z_t^*|^2dt\bigg].$$
    The first term goes to $0$ by Lemma \ref{Lemma:unifboundZt*} and bounded convergence theorem. For the second term, we first use Assumption \eqref{Assumption:ET}, \eqref{Assumption:C} and dominating convergence theorem to deduce
    \begin{equation*}
        \E\sqbra{\intT \abs{h^0(t, X^{t^*}, \theta_t, \bar{\tau}(\mu), Z_t^*) - h^0(t, X^{t^*_n}, \theta_t, \bar{\tau}(\mu), Z_t^*)}^2dt} \nto 0.
    \end{equation*}
    Using similar arguments on $h^1, g^0, g^1$, we can applying the stability result in Proposition \ref{Prop:BSDEstability} to conclude $\limn \norm{Z^n - Z^*}_{\mathcal{H}_{\mathbb{G}}^2} = 0$.
    
    Now let us recall from Lemma \ref{Lemma:alphacont} and Remark \ref{Remark:a_hatCont} that $\hat{a}$ is continuous in $(\bx, z)$. Therefore, by continuous mapping theorem, every subsequence of $\hat{\alpha}^{t^*_n}$ also has a further subsequence converging almost surely to $\alpha^{t^*}$. Then, if we view $\hat{\alpha}$ as a function on $\tilde{\Omega} \times [0, \eta] \to \R$ where $\tilde{\Omega} = [0, T] \times \Omega$, for all $t^* \in [0, \eta]$ it satisfies both of the following conditions:
    \begin{enumerate}
        \item $\hat{\alpha}(\cdot, t^*)$ is $\mathscr{P}(\mathbb{G})$-measurable.
        \item $\hat{\alpha}(\cdot, t^*_n)$ converges to $\hat{\alpha}(\cdot, t^*)$ in $dt \otimes \P$-measure for any decreasing sequence $t^*_n \downarrow t^*$.
    \end{enumerate}
    By (a slight variation of) \cite[Lemma A.2]{TangpiWang23}, we can conclude that $\hat{\alpha}$ is in fact $\mathscr{P} \otimes \B([0, \eta])$ measurable. The only difference is that \cite[Lemma A.2]{TangpiWang23} requires left instead of right continuity. The proof can be easily adapted for the other direction. Admissibility now follows by definition.
\end{proof}

Lemma \ref{Lemma:Zmeasurability} justifies solving the problem $t^*$-by-$t^*$. As in Remark \ref{Remark:Theta}, we redefine $\Theta$ as the subset of $\mathcal{P}([0, T] \times \mathcal{P}(A))$ whose first marginal is Lebesgue measure $dt$ on $[0, T]$. 
The set $\Theta$ is equipped with the stable topology. 
The control part of the MFG equilibrium in Definition \ref{definition:MFGequilibrium} corresponds to Dirac measures in this larger space. Extend functions on $\mathcal{P}(A)$ to $\mathcal{P}(\mathcal{P}(A))$ as in (\ref{label:extendfunctions}).

We can now define the mapping $\Psi: \Theta \times \M \to \Theta \times \M_K$ by
$$\Psi(\theta, \mu) := \pa{\delta_{\theta_{t; MFG}}(dq)dt, \quad \mu_{MFG}}$$
where $\mu_{MFG}$ and $\theta_{t; MFG}$ are defined in (\ref{Definition:muMFG}) and (\ref{Definition:thetaMFG}) respectively. In light of Corollary \ref{Corollary:Brouwer}, for existence of fixed points, it suffices to show that $\M_K$ is convex and relatively compact in $\M$, and $\Psi$ is continuous on $\Theta \times \M$. 
\begin{lemma}\label{Lemma:MkPreCompact}
$\M_K$ is convex and relatively compact in $\M$ for each $K \geq 0$.
\end{lemma}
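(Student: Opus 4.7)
The plan is to reduce everything to the fixed entry time case (Lemma \ref{Lemma:Mk0PreCompact}) by working conditionally on $\T=t^*$, using the fact that every $\mu \in \M_K$ is a $\nu$--mixture of conditional laws $\mu^{\cdot,t^*}\in \M_K^{t^*}$, and each $\M_K^{t^*}$ is itself convex and relatively compact by direct adaptation of Lemma \ref{Lemma:Mk0PreCompact} (with $[t^*, T]$ replacing $[0,T]$). The only genuinely new ingredient is keeping measurability and tightness uniform in $t^*$.

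For convexity, given $\mu^\alpha,\mu^\beta\in \M_K$ with $\alpha,\beta \in \A^*$ and $\lambda\in(0,1)$, I would construct, for each $t^*\in[0,\eta]$, the control $\gamma^{t^*}\in \A(t^*)$ produced by the proof of Lemma \ref{Lemma:Mk0PreCompact} applied on $[t^*,T]$: namely the one for which $b(\cdot,\cdot,\gamma^{t^*}_s)=\tilde\lambda_s b(\cdot,\cdot,\alpha^{t^*}_s)+(1-\tilde\lambda_s)b(\cdot,\cdot,\beta^{t^*}_s)$ with $\tilde\lambda_s$ an explicit ratio of Dol\'eans--Dade martingales. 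Because $\alpha,\beta \in \A^*$ are Borel in $t^*$ and the map $a\mapsto b(t,\bx,a)$ is continuous with image an interval of $\R$, a measurable selection (Jankov--von Neumann or the explicit formula above) delivers $\gamma \in \A^*$ with $\mu^{\gamma,t^*}=\lambda\mu^{\alpha,t^*}+(1-\lambda)\mu^{\beta,t^*}$ for $\nu$--a.e.\ $t^*$, and then integrating against $\nu$ gives $\mu^\gamma=\lambda\mu^\alpha+(1-\lambda)\mu^\beta\in\M_K$.

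For relative compactness, take a sequence $(\mu^n) \subset \M_K$ with associated controls $\alpha^n$ and processes $\widetilde X^n$ on $\Omega\times[0,\eta]$ solving \eqref{Xtildeform}. The plan has two steps. First, transferring to the $\P^n$ under which the drift is removed exactly as in Lemma \ref{Lemma:Mk0PreCompact} and using the uniform bound on $\sigma^{-1}b$, I would establish the uniform Kolmogorov--Chentsov estimates
\begin{equation*}
    \E^{\P\otimes\nu}\bigl[|\widetilde X^n_t-\widetilde X^n_s|^4 \mathbf 1_{\{\T\le s\}}\bigr]\le C(t-s)^2,\qquad \E^{\P\otimes\nu}\bigl[\|\widetilde X^n\|_\infty^4\bigr]\le C,
\end{equation*}
with $C$ independent of $n$. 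Second, to upgrade this to tightness in $(\X^*,d_{\X^*})$, I would work with the continuous extension $\bar{\widetilde X^n}$ appearing in the definition \eqref{supmetric} of $d_{\X^*}$: the pre-entry flat extension at $\widetilde X^n_{t^*}=\iota$ (resp.\ $0$) preserves the increment bound above on all of $[0,T]$, and tightness of the entry-time marginal on $[0,\eta]$ is automatic since that marginal is $\nu$ for every $n$. Kolmogorov--Chentsov on the continuous coordinate together with $[0,\eta]$--tightness on the $t^*$ coordinate then yields tightness of $(\mu^n)$ on $(\X^*,d_{\X^*})$. Wasserstein relative compactness follows by \cite[Theorem 6.9]{villani09} once uniform integrability
\begin{equation*}
    \lim_{R\to\infty}\sup_n \int_{\{d_{\X^*}(\bx,\mathbf 0)>R\}}d_{\X^*}(\bx,\mathbf 0)\,\mu^n(d\bx)=0
\end{equation*}
is established, which is immediate from the uniform $L^4$ bound on $\|\widetilde X^n\|_\infty$ plus the trivial bound $|t^*|\le\eta$, mirroring \eqref{label:uniformintegrability0}.

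The main obstacle is the interplay between the jump of $\widetilde X^n$ at the random time $\T$ and the metric $d_{\X^*}$, which forbids a direct application of Kolmogorov--Chentsov on $D([0,T],\R)$ with the Skorokhod topology. The resolution, which is the key observation of the argument, is that $d_{\X^*}$ is already tailor-made to replace that jump by a constant extension on $[0,t^*)$, so the Kolmogorov--Chentsov moment control transfers to $\bar{\widetilde X^n}$ with no loss and tightness holds jointly in the trajectory and the entry time $t^*$.
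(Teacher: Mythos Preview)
Your proposal is correct, and the convexity argument matches the paper's (reduction to the conditional case, Lemma~\ref{Lemma:Mk0PreCompact}); you are in fact more explicit than the paper about the $t^*$--measurability of the combined control.

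For weak relative compactness the two arguments share the same first move---the Kolmogorov--Chentsov estimate on the continuous extension $\overline{X}^n$---but then diverge. You exploit that $d_{\X^*}$ is, by definition, the sum of the sup--norm on the continuous extension and the Euclidean distance on the entry time, so tightness of the pair $(\overline{X}^n,\T)$ in $C([0,T])\times[0,\eta]$ (product of a Kolmogorov--Chentsov tight family with a compact interval) transfers directly to $(\X^*,d_{\X^*})$ via the isometric embedding; completeness of $(\X^*,d_{\X^*})$ ensures the limit stays in the space. The paper instead passes through the Skorokhod space $D([0,T],\R)$: after extracting a weakly convergent subsequence for $\overline{X}^n$, it invokes \cite[Theorem~13.5]{billingsleyBook99} and verifies the ``minimum--of--two--increments'' condition \eqref{eq:ConvCondition}, using that the single jump at $\T$ forces at least one of the two increments to lie entirely on one side of the jump and hence obey the continuous moment bound. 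Your route is shorter and sidesteps the Skorokhod machinery altogether; the paper's route is more classical but costs an extra page. Both finish with the same uniform--integrability upgrade to Wasserstein via \cite[Theorem~6.9]{villani09}.
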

\begin{proof}
    Convexity of $\M_K$ follows from the same argument used in the proof of Lemma \ref{Lemma:Mk0PreCompact} but applied on the product space. We prove relative compactness in the weak topology first. Let us recall from \eqref{supmetric} the metric space $(\X^*, d_{\X^*})$. \orange{Define the continuous counterpart $\overline{X}$ of the uncontrolled state process $X$ by $$\overline{X}_t \ce \iota \ind{t < \T} + X_t\ind{t \geq \T}.$$}
    Take a sequence $\mu_n \in \M_K$. By definition of $\M_K$ (see \eqref{Definition:MK}), let $\alpha^n \in \A^{*}$ be the corresponding control to $\mu_n$. 
    Let us recall from the proof of Lemma \ref{Lemma:Mk0PreCompact} that there exists a constant $C > 0$ independent of $t^*$ such that for all $n \in \N$ and all $t^*$ with $t^* \leq s \leq t \leq T$, we have
    \orange{\begin{equation*}
    \E^{\P^{\alpha^n(t^*)}}\Big[\big|\overline{X}^{t^*}_t - \overline{X}^{t^*}_s\big|^4\Big] = \E^{\P^{\alpha^n(t^*)}}\Big[\big|X^{t^*}_t - X^{t^*}_s\big|^4\Big] \leq C(t-s)^2.
    \end{equation*}}
    But observe that if $s \leq t^* \leq t \leq T$, we still have
    \begin{equation*}
    \E^{\P^{\alpha^n(t^*)}}\Big[\big|\overline{X}^{t^*}_t - \overline{X}^{t^*}_s\big|^4\Big] = \E^{\P^{\alpha^n(t^*)}}\Big[\big|X^{t^*}_t - X^{t^*}_{t^*}\big|^4\Big] \leq C(t-t^*)^2 \leq C(t-s)^2.
    \end{equation*} 
    Therefore the inequality holds in general for all $t^*$, which implies that
    \orange{\begin{equation*}
        \E^{\P^{\alpha^n}}\Big[\big|\overline{X}_t - \overline{X}_s\big|^4\Big] \leq C(t-s)^2,
    \end{equation*}
    so Kolmogorov--Chentsov tightness criterion (e.g. \cite[Corollary 16.9]{bookKallenberg02}) is satisfied. By Prohorov's theorem (see e.g. \cite[Theorem 16.3]{bookKallenberg02}) we have that the sequence $\P^{\alpha^n} \circ \overline{X}^{-1}$ is relatively compact in the weak topology. We can then extract a weakly converging subsequence indexed by $\{n_m\}_{m \geq 1}$.}
    
    \orange{Now we apply \cite[Theorem 13.5]{billingsleyBook99} to show that the subsequence $\mu_{n_m} \stackrel{m\to \infty}{\Longrightarrow} \mu_n$} where ``$\Longrightarrow$'' denotes weak convergence. The first condition on finite dimensional distributions convergence in \cite[Theorem 13.5]{billingsleyBook99} follows immediately from the convergence of the laws of the continuous counterparts. It remains to show that there exists $B \geq 0, A > 1/2$ and a non-decreasing, continuous function $\psi$ on $[0, T]$ such that for all $m \geq 1$, $\lambda > 0$ and $r \leq s \leq t \leq T$, one has
    \orange{\begin{equation}\label{eq:ConvCondition}
        \P^{\alpha^{n_m}}\pa{\abs{X_s - X_r} \wedge \abs{X_t - X_s} \geq \lambda} \leq \frac{1}{\lambda^{4B}}[\psi(t) - \psi(r)]^{2A}.
    \end{equation}}
    Observe that because there is only one jump, at least one of the two time segments will fall on the same side of the jump. Therefore, we always have \orange{$$\E^{\P^{\alpha^{n_m}}}\sqbra{\abs{X_s - X_r}^4 \wedge \abs{X_t - X_s}^4} \leq C(t-r)^2.$$}
    Then condition (\ref{eq:ConvCondition}) follows by Markov's inequality, with $A = 1, B = 1, \psi(t) = \sqrt{C}t.$
    
    It remains to show relative compactness in Wasserstein space. By \cite[Theorem 6.9]{villani09}, this follows from uniform integrability:
    \begin{equation}\label{label:uniformintegrability}
        \lim_{R \to \infty}\sup_{\mu \in \M_K}\int_{\{d_{\X^*}(\bx, \mathbf{0}) > R\}} d_{\X^*}(\bx, \mathbf{0}) \mu(d\bx) =0,
    \end{equation}
    which in turn follows from Lemma \ref{Lemma:uniformboundXtilde} and bounded convergence theorem.
\end{proof}

Now we show that $\Psi$ is sequentially continuous. Take a sequence $(\mu^n)$ converging to  $\mu$ in  $\mathcal{M}$ and $(\theta^n)$ converging to $\theta$ weakly in $\Theta$. Let $\alpha^{\theta, \mu} \in \A^*$ (resp. $\alpha^{\theta^n, \mu^n} \in \A^*$) denote the optimal control with inputs $(\mu, \theta)$ (resp. $(\mu^n, \theta^n)$). To simplify notation, again we let $\P^{\theta, \mu; t^*}$ (resp. $\P^{\theta^n, \mu^n; t^*}$) denote the probability measure $\P^{\alpha^{\theta,\mu}(t^*)}$ (resp. $\P^{\alpha^{\theta^n,\mu^n}(t^*)}$) defined in \eqref{eq:Def.MFG.general}.
We collect convergence results analogous to Lemma \ref{Lemma:alphaL2conv0} from the fixed entry time section.
\begin{lemma}\label{Lemma:conv}
The sequence of optimal strategies $\hat\alpha^{\theta^n, \mu^n; t^*} \nto \hat{\alpha}^{\theta, \mu; t^*}$ in $\H^2_\mathbb{G}$ uniformly in $t^*$.
Moreover, we also have the following convergence uniformly in $t^*$:
\begin{equation}
\label{Palphaconvinmeasure}\P^{\theta^n, \mu^n; t^*}\circ \pa{\hat\alpha_t^{\theta^n, \mu^n; t^*}}^{-1} \nto \P^{\theta, \mu; t^*}\circ \pa{\hat\alpha_t^{\theta, \mu; t^*}}^{-1} \text{ in } dt\text{-measure}.
\end{equation}
\end{lemma}
We omit the proof since it is essentially the same as that of Lemma \ref{Lemma:alphaL2conv0}. In particular, by (\ref{fandXintegrability1}), (\ref{fandXintegrability2}) and Lemma \ref{Lemma:uniformboundXtilde}, the bounds used in those proofs are uniform in $t^*$.
\begin{lemma}\label{Lemma:PsiContinuous}
The mapping $\Psi$ is (sequentially) continuous.
\end{lemma}
\begin{proof}
    We first show $\mu^n_{\text{MFG}} \ciw \mu_{\text{MFG}}$.
    Let us recall from \eqref{Definition:muMFG} that for all $B \in \B(\X^*)$,
   \begin{align*}
        \mu^n_{\text{MFG}}(B) & = \int_{[0,\eta]} \P^{\theta^n, \mu^n; t^*}\circ \pa{X^{t^*}}^{-1} (B)d\nu(t^*)\\
        \mu_{\text{MFG}}(B) & = \int_{[0,\eta]} \P^{\theta, \mu; t^*}\circ \pa{X^{t^*}}^{-1} (B)d\nu(t^*).
   \end{align*}
  \orange{Let $M^{n, t^*}$ (resp. $M^{\infty, t^*}$) denote the stochastic exponential that defines the density for $\P^{\theta^{n}, \mu^n; t^*}$ (resp. $\P^{\theta, \mu; t^*}$). A simple application of Gr\"{o}nwall's inequality yields $$\sup_{t^* \in [0,\eta], n \in \N^+}\E\sqbra{|M^{n, t^*}_T|^2} \leq \exp(K^2T)$$ where $K > 0$ is the bound of $|\sigma^{-1}b|$. Then we have
    \begin{align*}
        \W\pa{\mu^n_{\text{MFG}}, \mu_{\text{MFG}}} & = \sup_{\psi \in \mathrm{Lip}(\X^*, 1)}\pa{\int_{[0,\eta]}\E^{\theta^n , \mu^n; t^*}[\psi(X^{t^*})] - \E^{\theta, \mu; t^*}[\psi(X^{t^*})]d\nu(t^*)}\\
        & = \sup_{\psi \in \mathrm{Lip}(\X^*, 1), \psi(0) = 0}\int_{[0, \eta]}\E\sqbra{(M^{n, t^*}_T - M^{\infty, t^*}_T)\psi(X^{t^*})}d\nu(t^*) \\
        & \leq \int_{[0,\eta]}\sqrt{\E\sqbra{|M^{n, t^*}_T - M^{\infty, t^*}_T|^2}}\sqrt{\E\sqbra{\norm{X^{t^*}}_{\infty}^2}}d\nu(t^*) \nto 0,
    \end{align*}
where convergence follows from bounded convergence theorem.}

    Next, we turn to the control variables.
    Define, for a Borel subset $E$ of $A$,
    \begin{align*}
        \vartheta^n_{t}(E) & \ce \int_{[0,\eta]} \P^{\theta^n, \mu^n; t^*}\circ \pa{\hat{\alpha}^{\theta^n,\mu^n; t^*}_t}^{-1}(E) \nu(dt^*) \in \mathcal{P}(A)\\
        \vartheta_{t}(E) & \ce \int_{[0,\eta]} \P^{\theta, \mu; t^*}\circ \pa{\hat{\alpha}^{\theta,\mu; t^*}_t}^{-1}(E) \nu(dt^*) \in \mathcal{P}(A).
    \end{align*}
    Our goal is to show $\theta^n_{t; MFG}(dq, dt) \ce \delta_{\vartheta^n_{t}}(dq)dt$ converges to $\theta_{t; MFG}(dq, dt) \ce \delta_{\vartheta_{t}}(dq)dt$ in $\Theta$ with respect to the stable topology. 
    By bounded convergence theorem, it suffices to show that $\vartheta^n \nto \vartheta$ in $dt$ measure, or equivalently, every subsequence of $\vartheta^n_t$ has a further subsequence that weakly converges to $\vartheta_t$ for almost every $t$. Since $\theta^n$ and $\mu^n$ are arbitrary converging sequences, it suffices to find a converging subsequence. By \eqref{Palphaconvinmeasure} in Lemma \ref{Lemma:conv}, we can find a subsequence $(n_m)_{m \geq 1}$ such that for all $t^* \in [0,\eta]$, the conditional laws $\P^{\theta^{n_m}, \mu^{n_m}; t^*}\circ \pa{\hat{\alpha}^{\theta^{n_m},\mu^{n_m}; t^*}_t}^{-1}$ converge weakly in $\mathcal{P}(A)$ to $\P^{\theta, \mu; t^*}\circ \pa{\hat{\alpha}^{\theta,\mu; t^*}_t}^{-1}$ for almost every $t \in [0, T]$. For any $t \in [0, T]$ such that this weak convergence holds, for any open set $G \subset A$, by Fatou's lemma we have 
    \begin{align*}
        \liminf_{m\to \infty} \vartheta_t^{n_m}(G) & = \liminf_{m\to\infty}\int_{[0, \eta]}\P^{\theta^{n_m}, \mu^{n_m};t^*}\circ \pa{\hat{\alpha}^{\theta^{n_m},\mu^{n_m}; t^*}_t}^{-1}(G) \nu(dt^*)\\
        & \geq \int_{[0, \eta]}\liminf_{m\to\infty}\P^{\theta^{n_m}, \mu^{n_m}; t^*}\circ \pa{\hat{\alpha}^{\theta^{n_m},\mu^{n_m}; t^*}_t}^{-1}(G) \nu(dt^*)\\
        & \geq \int_{[0,\eta]} \P^{\theta, \mu; t^*}\circ \pa{\hat{\alpha}^{\theta,\mu; t^*}_t}^{-1}(G) \nu(dt^*) =  \vartheta_t(G)
    \end{align*}
    which is equivalent to weak convergence in $\mathcal{P}(A)$.
\end{proof}
We can now apply Corollary \ref{Corollary:Brouwer}. The pre-and-post burst decomposition of the optimal control given $t^*$ can be argued similarly to step 4 of the proof in Section \ref{Subsection:MFGExistence0}. Specifically, for each $t^* \in [0, \eta]$, we have $$\hat{\alpha}^{+}_t(t^*) \ce \hat{a}(t, X^{t^*}, Z^{0; t^*}) \quad \hat{\alpha}^{-}_t(t^*, \upeta) \ce \hat{a}(t, X^{t^*}, Z^{1; t^*}(\upeta))$$
where $Z^{0; t^*}$ and $Z^{1; t^*}(\cdot)$ are part of the unique solution to the iterative system of BSDEs \eqref{eq:IterativeBSDE} solved on $[t^*, T]$. By Theorem \ref{Theorem:BSDEmain}, $\hat{\alpha}^{+}(t^*)$ and $\hat{\alpha}^{-}(t^*, \cdot)$ are $\mathscr{P}(\mathbb{F})$ and $\mathscr{P}(\mathbb{F}) \otimes \B([0, T])$-measurable, respectively. Following the same argument as in Lemma \ref{Lemma:Zmeasurability}, we can also obtain the joint measurability requirements for the patched, decomposed components $\hat{\alpha}^{+}$ and $\hat{\alpha}^{-}$. Hence Theorem \ref{Theorem:MFGExistence} is proved.

Note that there is no explicit assumption on the distribution of entry times $\nu$. We now prove a simple consequence of imposing a continuity condition on $\nu$, which will be used in the following section when we revisit the bubble model. 
\orange{
\begin{lemma}\label{Lemma:Xcont}
If $\nu$ is atomless on $(0, \eta]$, then $t \mapsto \E^{\P^{\alpha}}[X_t]$ is continuous for every $\alpha \in \A^*$.
\end{lemma}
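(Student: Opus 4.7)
The plan is to write $\E^\nu[\widetilde X^\alpha_t](\omega) = \int_{[0,\eta]} \widetilde X^{t^*}_t(\omega)\,\nu(dt^*)$, using the convention $\widetilde X^{t^*}_t = 0$ on $[0,t^*)$, and to deduce continuity of $t \mapsto F_t(\omega) \ce \E^\nu[\widetilde X^\alpha_t](\omega)$ for $\P$-a.e.\ $\omega$ by dominated convergence inside this integral. Two ingredients are needed: a $\nu$-integrable dominating function in $t^*$ that does not depend on $t$, and pointwise-in-$t^*$ convergence of the integrand along every sequence $t_n \to t_0$ in $[0,T]$.

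For the dominating function I would invoke Lemma~\ref{Lemma:uniformboundXtilde} to obtain $\sup_{t^*\in[0,\eta]} \E[\|\widetilde X^{t^*}\|_\infty^2]<\infty$. Fubini then yields
\begin{equation*}
\Phi(\omega)\ce\int_{[0,\eta]}\|\widetilde X^{t^*}\|_\infty(\omega)\,\nu(dt^*)<\infty\quad \P\text{-a.s.},
\end{equation*}
and this $\Phi$ dominates $|\widetilde X^{t^*}_t(\omega)|$ uniformly in $t\in[0,T]$ and $t^*\in[0,\eta]$. For the pointwise convergence, fix $t_0\in[0,T]$ and $t_n\to t_0$. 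For each $t^*\in[0,\eta]\setminus\{t_0\}$ one of two cases holds: if $t^*<t_0$, then $t^*<t_n$ for large $n$ and path-continuity of the SDE solution on $[t^*,T]$ gives $\widetilde X^{t^*}_{t_n}(\omega)\to \widetilde X^{t^*}_{t_0}(\omega)$; if $t^*>t_0$, then $t_n<t^*$ for large $n$ and both values are zero. The exceptional point $t^*=t_0$ is $\nu$-null because $\nu$ is atomless on $(0,\eta]$; in the edge case $t_0=0$, a possible atom $\nu(\{0\})$ is harmless because $\widetilde X^0_{t_n}(\omega)\to \iota= \widetilde X^0_0(\omega)$ by continuity of $\widetilde X^0$ at $0$.

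With both ingredients in hand, dominated convergence on $([0,\eta],\nu)$ gives $F_{t_n}(\omega)\to F_{t_0}(\omega)$. Crucially, neither $\Phi$ nor the exceptional $\P$-null set depends on $t_0$ or the chosen sequence, so this in fact delivers continuity of $t\mapsto F_t(\omega)$ on $[0,T]$ off of a single $\P$-null set, yielding $\P$-a.s.\ continuity.

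The one subtle point, which I expect to be the main obstacle, is pooling the individual $\P$-null sets on which path-continuity of $\widetilde X^{t^*}$ fails, one for each $t^*\in[0,\eta]$, into a single $\P$-null set. I would handle this by passing to a jointly measurable version of the family $(\omega,t^*,t)\mapsto \widetilde X^{t^*}_t(\omega)$ whose paths in $t$ are continuous on $[t^*,T]$ for $\P\otimes\nu$-a.e.\ $(\omega,t^*)$; such a version is standard for parameter-dependent SDEs with Lipschitz coefficients under Assumption \eqref{Assumption:SD} (cf.\ \cite{bookprotter2005}). Fubini then ensures that for $\P$-a.e.\ $\omega$ the set of bad $t^*$'s is $\nu$-null, so that the above dominated convergence argument closes cleanly.
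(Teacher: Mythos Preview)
Your proposal is correct and follows essentially the same route as the paper: write the $\nu$-average as an integral over $t^*$, observe pointwise convergence of $t^*\mapsto\widetilde X^{t^*}_{t_n}$ to $t^*\mapsto\widetilde X^{t^*}_{t_0}$ for all $t^*\neq t_0$ (which is $\nu$-a.e.\ since $\nu$ is atomless on $(0,\eta]$, with $t_0=0$ handled by continuity), and conclude via a convergence theorem. The paper invokes bounded convergence with a somewhat terse justification of the uniform bound, whereas your use of a $\nu$-integrable dominator $t^*\mapsto\|\widetilde X^{t^*}\|_\infty(\omega)$ via Fubini, together with your explicit handling of the pooling of $\P$-null sets across $t^*$, is more careful and arguably cleaner.
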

\begin{proof}
    Take a sequence $t_n \to t \in [0, T]$. Without loss of generality, we can take $t > 0$. Since for each $t^*$, the process $X^{t^*}$ is a.s. continuous except at $t = t^*$, the atomless condition for $\nu$ implies
    \begin{align*}
        \limn \E^{\P^{\alpha}}\sqbra{X_{t_n}} &= \limn \int_{[0, \eta]}\E^{\P^{\alpha; t^*}}\sqbra{X^{t^*}_{t_n}}d\nu(t^*) = \int_{[0, \eta]} \E^{\P^{\alpha; t^*}}\sqbra{\limn X^{t^*}_{t_n}\ind{t^* \neq t}}d\nu(t^*)\\
        & = \int_{[0, \eta]} \E^{\P^{\alpha; t^*}}\sqbra{X^{t^*}_{t}\ind{t^* \neq t}}d\nu(t^*) = \int_{[0, \eta]}\E^{\P^{\alpha; t^*}}\sqbra{X^{t^*}_{t}}d\nu(t^*) = \E^{\P^{\alpha}}\sqbra{X_{t}}.
    \end{align*}
\end{proof}
}
%!TEX root = main.tex

\section{Revisiting Bubble Model}\label{Section:ModelRevisit}

Let us now come back to the bubble riding game introduced in Section \ref{Subsection:MFGModel} that motivates the MFG studied in Sections \ref{Section:PE} and \ref{Section:RET}.
In this final section of the article, we provide the proof of Theorem \ref{Theorem:ModelExistence} and discuss the numerical simulations alluded to in Section \ref{Subsection:MFGModel}. Then, we discuss the link between the MFG and the motivating, finite-player model.

\subsection{Revisiting Theorem \ref{Theorem:ModelExistence}}
The proof of Theorem \ref{Theorem:ModelExistence} is a direct application of Theorem \ref{Theorem:MFGExistence}. We state and prove the full version of the theorem below. Suppose the closed interval of allowed (or practical) trading speed is $A = [\underline A, \overline A]$, with $\underline A \leq 0 \leq \overline A$. Define $A' = [-2\kappa\sigma\overline A, -2\kappa \sigma\underline A]$. For $z \in \R$, denote by $z|_{A'}$ the projection of $z$ on the interval $A'$.
\begin{theorem}
\label{thm.bubble.random.entry}
    In addition to Assumptions \eqref{Assumption:MA} and \eqref{Assumption:BT1model}, assume that $F_\T(0) > 0$, and that $F_\T(\cdot)$ is continuous on $[0, \eta]$. Then, there exists a MFG equilibrium $(\hat{\alpha}, \theta, \mu)$ for the bubble riding game with varying entry time. 
    Moreover, the optimal control $\hat{\alpha}(\cdot)$ can be decomposed as 
    $$
        \hat{\alpha}_t(t^*) = \hat{\alpha}_t^+(t^*) \ind{t \leq \tau^*} + \hat{\alpha}^-_t(t^*,\tau^*) \ind{t > \tau^*}\quad dt\otimes\P\otimes\nu\text{--a.s.}
    $$
    where the pre-burst control $\hat{\alpha}^+$ is $\mathscr{P}(\mathbb{F})\otimes \B([0, \eta])$-measurable, and the post-burst control $\hat{\alpha}^-$ is $\mathscr{P}(\mathbb{F})\otimes \B([0, \eta]) \otimes \B([0, T])$--measurable.
    In particular, 
    \begin{equation}\label{eq:minimizerofHam}
        \hat\alpha^+_t(t^*) = -\frac{1}{2\kappa\sigma}(Z^{0; t^*}|_{A'})\quad \text{and}\quad \hat\alpha^-_t(t^*,\tau^*) = -\frac{1}{2\kappa\sigma}(Z^{1; t^*}|_{A'})
    \end{equation}
    where $(Y^{0; t^*},Z^{0; t^*})$ and $(Y^{1; t^*}(\cdot),Z^{1; t^*}(\cdot))$ solve the iterative system of Brownian BSDEs
    \begin{equation}\label{eq:bsde.bubble.game.random.entry}
    \begin{cases}
        Y_t^{1; t^*}(\upeta) = X^{t^*}_{\tau^*}\beta_{\tau^*}\gamma_{\tau^*} + c(X^{t^*}_T)^2 + \int_t^T h^1(s, X^{t^*}, \theta_s, Z^{1; t^*}_s)ds\\
       \hspace{3cm} - \int_t^TZ_s^{1; t^*}(\upeta)dW_s, & \upeta \leq t \leq T\\
        Y_t^{0; t^*} = X^{t^*}_{\bar\tau(\mu)}\beta_{\bar\tau(\mu)}\gamma_{\bar\tau(\mu)} + c(X^{t^*}_T)^2  + \inttT h^0(s, X^{t^*}, \theta_s, \bar{\tau}(\mu),Z^{0; t^*}_s)ds\\      \hspace{3cm} + \inttT k_s(Y_s^{1; t^*}(s) - Y_s^{0; t^*})ds
    - \int_t^TZ^{0; t^*}_sdW_s, &  t^* \leq t \leq T,
    \end{cases}
    \end{equation}
    with 
    \begin{align*}
        h^0(t, \bx, \theta_t, \bar{\tau}(\mu), z) & = \frac{1}{4\kappa\sigma^2}(z|_{A'})^2 - \frac{z}{2\kappa\sigma^2}(z|_{A'}) +\phi \bx_t^2 - \bx_t\qv{\rho, \theta_t}_{F_\T} - \bx_tb(t, F_\T(t))\ind{0\leq t < \bar\tau(\mu)}\\
        h^1(t, \bx, \theta_t, z) & = \frac{1}{4\kappa\sigma^2}(z|_{A'})^2 - \frac{z}{2\kappa\sigma^2}(z|_{A'}) + \phi \bx_t^2 - \bx_t\qv{\rho, \theta_t}_{F_\T}.
\end{align*}
\end{theorem}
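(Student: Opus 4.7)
The strategy is to derive Theorem \ref{thm.bubble.random.entry} as a direct application of Theorem \ref{Theorem:MFGExistence}. The plan is to match the bubble riding model to the general framework of Section \ref{Section:PE}, verify Assumptions \eqref{Assumption:SD}, \eqref{Assumption:BT}, \eqref{Assumption:C} and \eqref{Assumption:ET}, and then extract the explicit form \eqref{eq:minimizerofHam} and BSDE system \eqref{eq:bsde.bubble.game.random.entry} from the first-order condition for the quadratic Hamiltonian together with the decomposition \eqref{BSDEsolution}.

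First, I would identify the model coefficients: $b(t,\bx,a) = a$ and $\sigma(t,\bx)=\sigma>0$ are constant in $\bx$, so \ref{Assumption:SD1}--\ref{Assumption:SD3} are immediate. For \ref{Assumption:SD4}, $\sigma^{-1}b(t,\bx,a)=a/\sigma$ is uniformly bounded by compactness of $A$ in \ref{Assumption:MA1}, and $\lambda_0$ has all moments by \ref{Assumption:MA3}. Assumption \ref{Assumption:BT1} is exactly \eqref{Assumption:BT1model}. The delicate verification is \ref{Assumption:BT2}: I would argue that $\mu\mapsto \bar\mu_\cdot$ is continuous from $\M$ into $C([0,T])$ thanks to $F_\T$ being continuous and bounded below by $F_\T(0)>0$ (so the weighted mean $\bar\mu_t = F_\T(t)^{-1}\int x\,\mu_t(dx)$ varies continuously in both $t$ and $\mu$), whence $t\mapsto \inf_{s\le t}\bar\mu_s$ is continuous in $\mu$; combined with the strict monotonicity and continuity of $\zeta$, a standard hitting-time argument yields continuity of $\mu\mapsto \bar\tau(\mu)$.

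For \eqref{Assumption:C}, the running cost \eqref{runningcost} admits the separable decomposition with $f_a(t,\bx,a)=\kappa a^2$, $f_b(t,\bx,q)=\phi\bx_t^2-\bx_t g(\bar\theta)$, and $f_c(t,\bx,q)=-\bx_t b(t,F_\T(t))$; continuity in $q$ (through $\bar\theta$) and in $a$ is clear, and polynomial growth with $p=2$ follows from \ref{Assumption:MA2} and boundedness of $A$. Assumption \ref{Assumption:C2} holds because $\beta,\gamma$ are bounded on $[0,T]$ by \ref{Assumption:MA2} and $X \in L^p$ for all $p$ by Lemma \ref{Lemma:fandXintegrability}. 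For \ref{Assumption:C3}, the Hamiltonian is strictly convex in $a$ (because $\kappa>0$), so the unconstrained minimizer $-z/(2\kappa\sigma)$ has a unique projection onto the compact interval $A$, giving $\hat a(t,\bx,z)=-(z|_{A'})/(2\kappa\sigma)$ with $A'$ as defined in the statement. Finally, \eqref{Assumption:ET} holds because all cost terms are built from pointwise evaluations of $\bx$ at fixed or random times composed with continuous functions of time, and convergence in $d_{\X^*}$ entails uniform convergence of $\bar\bx$.

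With all assumptions verified, Theorem \ref{Theorem:MFGExistence} yields an MFG equilibrium $(\hat\alpha,\theta,\mu)$ with the announced decomposition and measurability. Substituting the explicit minimizer $\hat a$ into the general decomposition $\hat\alpha_t(t^*) = \hat a(t,X^{t^*},Z_t^{0;t^*})\ind{t\le \tau^*} + \hat a(t,X^{t^*},Z_t^{1;t^*}(\tau^*))\ind{t>\tau^*}$ yields \eqref{eq:minimizerofHam}, and specializing the iterative Brownian BSDE system \eqref{eq:BSDEsolutionBSDE.Hamiltonian} with the decomposed minimized Hamiltonians $h^0$ and $h^1$ computed from the present data produces \eqref{eq:bsde.bubble.game.random.entry}. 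The main obstacle I anticipate is the continuity of the endogenous burst time in \ref{Assumption:BT2}: although intuitive, it genuinely requires both the strict monotonicity of $\zeta$ (to rule out flat touching of the running minimum) and the continuity of $F_\T$ with $F_\T(0)>0$, which is precisely what upgrades Wasserstein convergence $\mu^n\to\mu$ to uniform convergence $\bar\mu^n\to\bar\mu$ on $[0,T]$, as needed for the hitting-time argument.
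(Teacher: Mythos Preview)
Your overall strategy---verify the hypotheses of Theorem \ref{Theorem:MFGExistence} for the bubble model, then read off the explicit minimizer and iterative BSDEs---is exactly the paper's. The identification of $b(t,\bx,a)=a$, $\sigma\equiv\sigma$, the separable decomposition of $f$, the strictly convex Hamiltonian giving $\hat a=-\frac{1}{2\kappa\sigma}(z|_{A'})$, and the specialization of \eqref{eq:BSDEsolutionBSDE.Hamiltonian} to \eqref{eq:bsde.bubble.game.random.entry} all match. (A cosmetic difference: the paper places $\phi\bx_t^2$ in $f_a$ rather than $f_b$, but either decomposition satisfies \ref{Assumption:C1}.)

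There is one genuine gap in your verification of \ref{Assumption:BT2}. You assert that $\mu\mapsto\bar\mu_\cdot$ is continuous from $\M$ into $C([0,T])$, but for a generic $\mu\in\M$ (a measure on $\X^*$, where paths jump at entry) the mean trajectory $t\mapsto\int x\,\mu_t(dx)$ need not even be continuous, so the target space is wrong and $\bar\tau$ cannot be shown continuous on all of $\M$. The continuity of $F_\T$ is a property of the entry-time law $\nu$, not of arbitrary $\mu$; it only guarantees continuous mean trajectories for the \emph{controlled} laws in $\M_K$, via Lemma \ref{Lemma:Xcont}. The paper handles this by introducing the closed set $\widetilde\M=\{\mu\in\M:\ t\mapsto\int x\,\mu_t(dx)\text{ continuous}\}$, using Lemma \ref{Lemma:Xcont} to conclude $\M_K\subset\widetilde\M$, and then running Corollary \ref{Corollary:Brouwer} with $\mathcal{P}=\Theta\times\widetilde\M$ instead of $\Theta\times\M$. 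On $\widetilde\M$ your hitting-time argument (uniform convergence of running minima plus strict monotonicity of $\zeta$) is precisely what the paper does; you just need to restrict the domain first.
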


\begin{proof}[Proof of Theorem \ref{thm.bubble.random.entry}]
    We check that the conditions of Theorem \ref{Theorem:MFGExistence} are satisfied by the model specifications. Assumption \eqref{Assumption:SD} is satisfied given that $b(t, x, a) = a$ and $\sigma(t, x) = \sigma > 0$. Assumption \eqref{Assumption:ET} holds since the cost functions are not path dependent. Therefore, it remains to verify Assumptions \eqref{Assumption:C} and \ref{Assumption:BT2}.\\
    
    \emph{Assumption \eqref{Assumption:C}}. The running cost (\ref{runningcost}) can be decomposed as
\begin{equation*}
    f(t, X, q, \tau^*, a) = f^0(t, X, q, \bar{\tau}, a)\ind{0 \leq t < \tau} + f^1(t, X, q, a)\ind{\tau \leq t}  
\end{equation*}
where $f^0:  [0, T] \times \X \times \mathcal{P}(A) \times [0, T] \times A \to \R$ and $f^1:  [0, T] \times \X \times \mathcal{P}(A)\times A \to \R$  are
\begin{equation}\label{modelf0f1}
    \begin{split}
    f^0(t, \bx, q, \bar{\tau}, a) & = \kappa a^2 + \phi \bx_t^2 - \bx_tb(t, F_\T(t))\ind{0\leq t < \bar\tau} - \bx_t\qv{\rho, q}_{F_\T}\\
    f^1(t, \bx, q, a) & = \kappa a^2 + \phi \bx_t^2 - \bx_t\qv{\rho, q}_{F_\T}.
    \end{split}
\end{equation} 
It is easy to verify that the corresponding Hamiltonian, quadratic in $a$, is minimized by \eqref{eq:minimizerofHam}. Matching $f^0$ and $f^1$ with the form in Assumption \ref{Assumption:C1}, we get
\[
\begin{cases}
    f_a(t, \bx,a) & = \kappa a^2 + \phi \bx_t^2\\
    f_b(t, \bx, q) & = -\bx_t \qv{\rho, q}_{F_\T}\\
    f_c(t, \bx, q) & = -\bx_tb(t, F_\T(t)).
\end{cases}
\]
Similarly, the terminal cost (\ref{terminalcost}) can also be expressed as
\begin{align}
\notag
    X^{t^*}_{\tau^*}\beta_{\tau^*}\gamma_{\tau^*}+ c(X^{t^*}_T)^2 & = \pa{X^{t^*}_{\bar\tau(\mu)}\beta_{\bar\tau(\mu)}\gamma_{\bar\tau(\mu)} + c(X^{t^*}_T)^2}\ind{\tau > T} + \pa{X^{t^*}_{\tau^*}\beta_{\tau^*}\gamma_{\tau^*} + c(X^{t^*}_T)^2}\ind{\tau \leq T}\\
    \label{modelxi0xi1}
    & \eqqcolon g^0(X^{t^*}, \bar\tau(\mu))\ind{\tau > T} + g^1(X^{t^*},\tau^*)\ind{\tau \leq T}.
\end{align}
Assumption \ref{Assumption:MA2} implies both \ref{Assumption:C1} and \ref{Assumption:C2}.\\

\emph{Assumption \ref{Assumption:BT2}}. 
Let $\overline{\M}_K$ be the closure of $\M_K$. In light of Lemma \ref{Lemma:MkPreCompact} and Corollary \ref{Corollary:Brouwer}, we only need to check continuity of $\bar{\tau}(\cdot)$ on $\overline{\M}_K$ instead of $\M$. By continuity assumption on $F_{\T}$ and Lemma \ref{Lemma:Xcont}, all laws in $\M_K$ have continuous mean trajectory. Taking closure in Wasserstein space preserves the continuity of the
mean processes. Take a sequence $(\mu^n)_n$ in $\overline{\M}_K$ converging in the Wasserstein sense to $\mu \in \overline{\M}_K$. Let $\bar{\mu}^n$ and $\bar{\mu}$ be the entry-weighted average processes defined in \eqref{eq:MFG_averages}. Since $F_\T(0) > 0$, the process $\bar{\mu}^n$ is continuous for each $n$ and converges uniformly to $\bar\mu$, which is also continuous.

Define the trajectory of the running minimum of the average inventory on $[0, T]$: $$m_t \ce \inf_{s \in [0, t]}\bar \mu_s \quad\text{and}\quad m^n_t \ce \inf_{s \in [0, t]}\bar \mu^n_s.$$
For each $n$, by continuity arguments we have $$\sup_{t \in [0,T]}|m_t - m^n_t| \leq \sup_{t \in [0, T]}\sup_{s \in [0, t]}\abs{\bar\mu^n_s - \bar\mu_s} = \sup_{t \in [0, T]}\abs{\bar{\mu}_t - \bar \mu_t^n},$$
which converges to $0$ by uniform convergence of $\bar{\mu}^n$ to $\bar{\mu}$.
Therefore, $m^n$ also converges to $m$ uniformly. Let us recall the definition $$\bar{\tau} \ce \inf\cbra{t > 0: m_t \leq \zeta_t} \wedge T$$
where $\zeta$ is a continuous and strictly increasing function of $t$ with $\zeta_0 \in (0, \E[\iota])$. First we take care of the case where $m_t > \zeta_t$ for all $t \in [0, T]$. By uniform convergence of $m^n \to m$, we must also have $m^n_t > \zeta_t$ on $[0, T]$ for sufficiently large $n$ and we have $\bar{\tau}(\mu^n) = \bar{\tau}(\mu) = T$ eventually. Now we look at the case where the infimum is attained. Suppose there exists a subsequence of $\bar{\tau}(\mu^n)$ that converges to $\bar{\tau}' < \bar{\tau}(\mu)$. Without loss of generality we can assume the whole sequence converges to $\bar{\tau}'$. Since $\bar{\tau}' < \bar{\tau}(\mu)$, the running minimum $m$ has not reached the threshold yet, so we must have $m_{\bar{\tau}'} > \zeta_{\bar{\tau}'}$. By continuity of $m^n$ and $\zeta$, we have $\limn m^n_{\bar{\tau}(\mu^n)} = \limn \zeta_{\bar{\tau}(\mu^n)} = \zeta_{\bar{\tau}'}$. However, $m^n$ converges to $m$ uniformly, which yields a contradiction $\limn m^n_{\bar{\tau}(\mu^n)} = m_{\bar{\tau}'} > \zeta_{\bar{\tau}'}$. 

Now suppose there is a subsequence of $\bar{\tau}(\mu^n)$ that converges to $\bar{\tau}' > \bar{\tau}(\mu)$. Again without loss of generality, we can assume the whole sequence to converge to $\bar{\tau}'$. By continuity $m_{\bar{\tau}} = \zeta_{\bar{\tau}}$. Monotonicity of $m$ and strict monotonicity of $\zeta$ imply $$m_{\bar{\tau}'} \leq m_{\bar{\tau}(\mu)} = \zeta_{\bar{\tau}(\mu)} <  \zeta_{\bar{\tau}'}.$$
However, by uniform convergence we get contradiction again $$0 = \limn (m^n_{\bar{\tau}(\mu^n)} - \zeta_{\bar{\tau}(\mu^n)}) = m_{\bar{\tau}'} - \zeta_{\bar{\tau}'}<0. $$
Therefore $\bar{\tau}(\cdot)$ is sequentially continuous. Applying Theorem \ref{Theorem:MFGExistence} concludes the proof.
\end{proof}

\begin{remark}
Here we provide a simple example to show that strict monotonicity assumption of $\zeta$ is necessary for the continuity of $\bar\tau(\cdot)$. Consider $T = 2, \zeta_t = 0$ for all $t$. Define $\mu = \delta_{m}$ where $m_t = (1-t)\ind{t \in [0, 1]}$. For each $n$, define $\mu^n = \delta_{m^n}$ where
\begin{equation*}
    m^n_t = \begin{cases}
    m_t & \text{ if } t \in \sqbra{0, 1 - \frac{1}{n}}\\
    \frac{2 - t}{n+1} & \text{ if } t \in \left (1- \frac{1}{n}, 2 \right ]
    \end{cases}.
\end{equation*}
Since $m^n$ converges to $m$ uniformly, we have $\mu^n \nto \mu$ in Wasserstein distance. However, $\bar{\tau}(\mu^n) = 2$ for all $n$ and $\bar{\tau}(\mu) = 1$.

Note that the strict monotonicity is likely not a necessary assumption for the existence of equilibrium. In particular, the numerical experiments in the next section see no difference when we set a constant threshold $\zeta_t = \zeta_0$ vs $\zeta_t = \zeta_0 + \varepsilon t$ for a very small $\varepsilon > 0$.
\end{remark}

\subsection{Numerical Experiments}\label{Subsection:Numerical}
In this section, we provide some observations from the numerical experiments on the fixed entry setup with linear price impact. Specifically, for some $\delta > 0$ we take $$F_{\T}(t) = \ind{t \geq 0}\quad \text{and}\quad \rho(a) = \delta a.$$
Notice that the setup becomes almost linear-quadratic (LQ). It is well--known that explicit solutions to a true linear-quadratic game can be obtained via solving Ricatti ordinary differential equations (e.g. \cite{LQMFGs1, LQMFGs2}). 
However, the interaction through endogenous burst time and the presence of a random exogenous burst time prevent us from directly following this traditional route. Some useful observations from the LQ framework such as the form of the solutions can be borrowed nonetheless. We numerically solve the iterative BSDE \eqref{eq:bsde.bubble.game.random.entry} by discretization and use the implicit scheme described in \cite{PengXuNumerics11} to take advantage of the LQ structure.\footnote{See \href{https://github.com/peterwangshichun/VaryingEntry_Numerics}{this link} for codes.} Within each iteration we randomly draw an initial inventory $\iota$, a sample path of Brownian motion $W$ and an exogenous burst time $\tau$.

We assume a linear exogenous burst intensity $k_t = kt$, $k > 0$. The parameter $k$ should be viewed as a measure of common concern (or population prior) about the external shock. A large value of $k$ indicates that traders believe that the exogenous burst will happen sooner, which should lead to a more pessimistic MFG equilibrium strategy. For the bubble trend function, we match the bubble component to the LPPL model \eqref{eq:LPPL} with critical time $1.01T$ (as opposed to $T$, to avoid unbounded derivative for numerical purposes) and set $C = 0$ for simplicity. Since we assume a fixed entry setup, the model reduces to a simple power law, and the bubble component reads $$\gamma_t = \int_0^t b(u)du \approx \exp\left(A -B_0(1.01T - t)^\ell\right) - P_0$$ where $\ell \in (0, 1)$ controls price growth, and $B_0 > 0$ controls the scale of the power law.
With $\gamma_0=0$, we have $$A = B_0(1.01T)^\ell + \ln(P_0),$$
which gives us 
$$
    \gamma_t = P_0\exp\left(B_0(1.01T)^\ell -B_0(1.01T - t)^\ell\right) - P_0.
$$
Matching derivatives yields $$b(t) = P_0\exp\left(B_0(1.01T)^\ell -B_0(1.01T - t)^\ell\right)(\ell B_0 (1.01T -t)^{\ell - 1}).$$
Let us recall that $\beta_t$ controls the loss amplitude when the crash happens. For simplicity we set $\beta_t = 1$ for all $t$, which indicates that the price drops to exactly the fundamental value at burst. The initial inventory is set to be $\iota \sim N(10, 2)$. We fix the values of the following parameters: $\sigma = 1$, a constant threshold\footnote{Technically, one should use the increasing threshold function $\zeta_t = 2 + \varepsilon t$, with $\varepsilon > 0$ very small. However, this has no impact on the numerical results as mentioned in the previous remark.} function $\zeta = 2$ for endogenous burst, $T = 1$, quadratic terminal penalty $c = 10$, running inventory cost $\phi = 0.1$ and initial price $P_0 = 10$. By varying the other parameters, we inspect 5 specific scenarios (see Table \ref{table:paravalue} below). 

\begin{table}[h]
\centering
\begin{tabular}{|c||c | c | c | c | c | c | c||} 
 \hline
 Scenarios & $\kappa$ & $\delta$  & $k$ & $B_0$ & $\ell$ \\ [0.5ex] 
 \hline\hline
 \emph{Default} & 0.5 & 0.5 & 2 & $\log(20)$ & 0.5\\
 \hline
 \emph{BigBubble} & 0.5 & 0.5 &  2 & $1.3*\log(20)$ & 0.65\\
 \hline
 \emph{NoBubble} & 0.5 & 0.5 & 2 & $0$ & - \\
 \hline
 \emph{FearExo} & 0.5 & 0.5 &  5 & $\log(20)$ & 0.5\\
 \hline
 \emph{LowImpact} & 0.1 & 0.3 &  2 & $\log(20)$ & 0.5\\
 \hline
\end{tabular}
\vspace*{3mm}
\caption{Parameter Values under Different Scenarios}
\label{table:paravalue}
\end{table}

The equilibrium strategies from the default setting are plotted in Figure (\ref{fig:Default}). On average, the equilibrium execution strategy begins with a ``riding'' phase, where traders sell at a much lower speed compared to when there is no bubble (Figure \ref{fig:NoBubble}). Then they gradually start ``attacking'' the bubble until burst. By construction, the equilibrium strategy always jumps at $\bar{\tau}(\mu)$ because the equilibrium endogenous burst time is anticipated by all traders. Therefore, a sudden drop in trading speed takes place right after $\bar{\tau}(\mu)$ as players shift from profit-taking to standard optimal execution. However, if an early exogenous burst occurs, the players react in the opposite direction and immediately start dumping their shares at a higher pace (see for instance the exogenous burst paths in Figure \ref{fig:DefaultB}, \ref{fig:BigBubble} and \ref{fig:LowImpact}). These reactions are consistent in all scenarios where a bubble is present.
\begin{figure}
    \centering
    \begin{subfigure}[h]{0.47\textwidth}
        \centering
        \includegraphics[width=\textwidth]{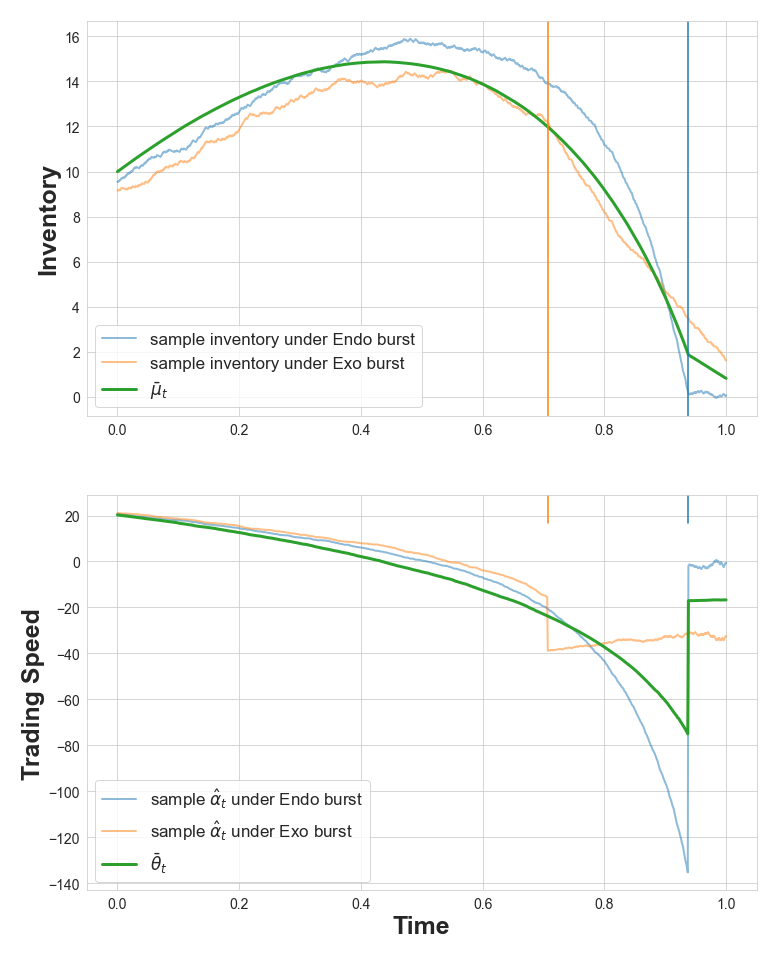}
        \caption{BigBubble: $J$= $-657.5$\\$B_0 = 1.3\log(20), \ell = 0.65$}
        \label{fig:BigBubble}
    \end{subfigure}
        \hfill
    \begin{subfigure}[h]{0.47\textwidth}
        \centering
        \includegraphics[width=\textwidth]{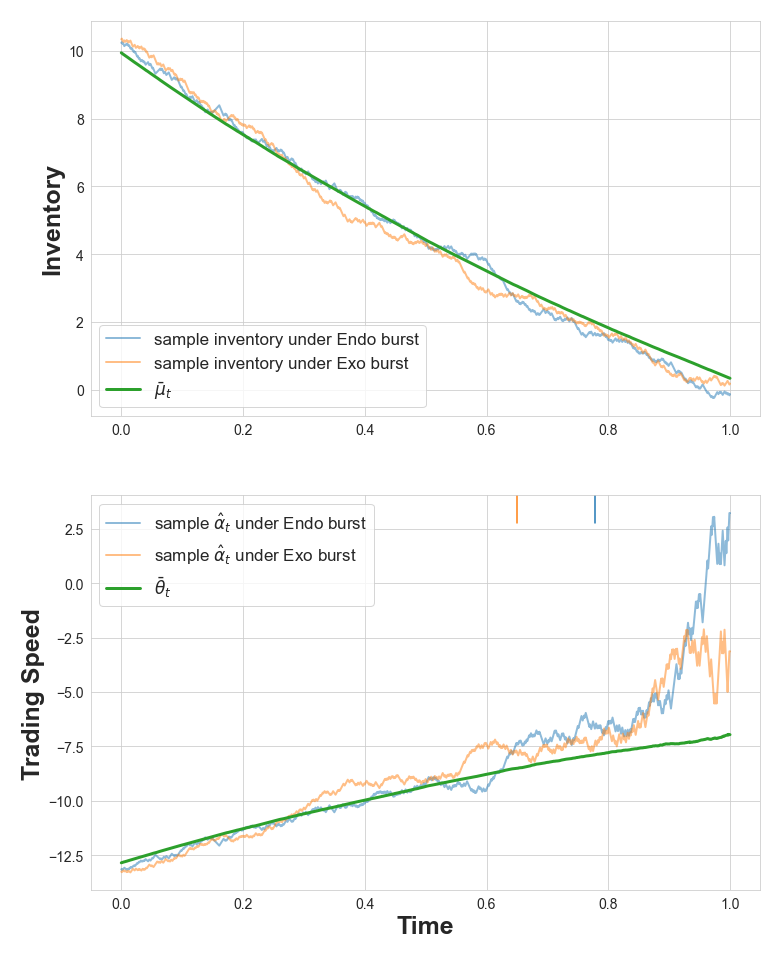}
        \caption{NoBubble: $J= 79.5$\\
        $B_0 =0$}
        \label{fig:NoBubble}
    \end{subfigure}
        \begin{subfigure}[h]{0.47\textwidth}
        \centering
        \includegraphics[width=\textwidth]{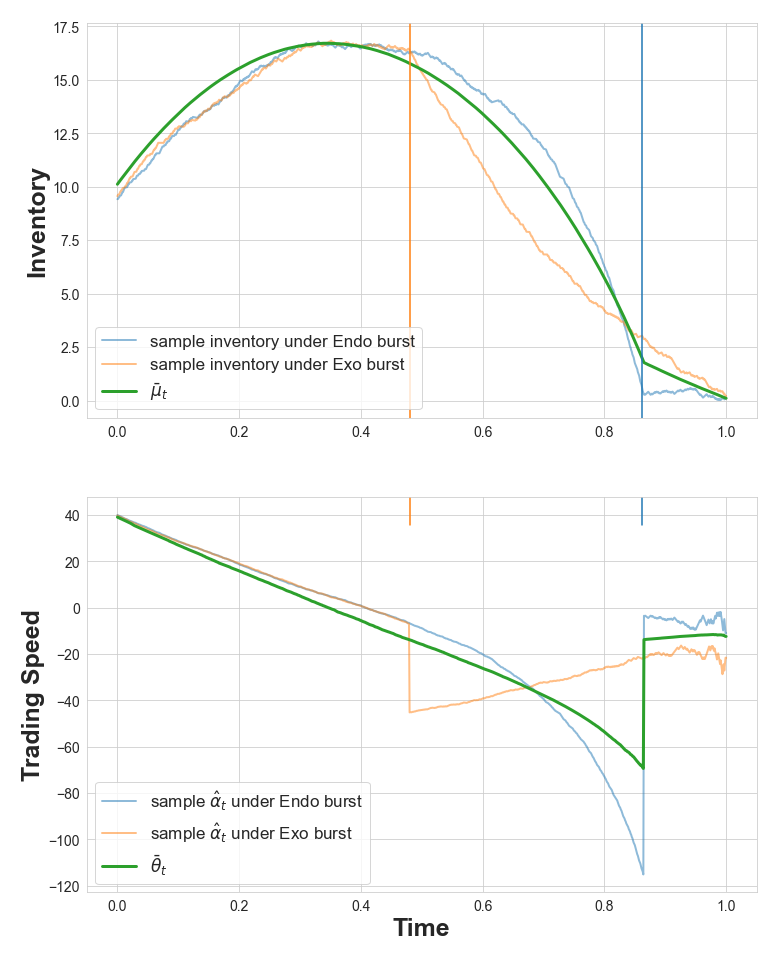}
        \caption{LowImpact: $J= -239.9$\\
        $\kappa = 0.1, \delta = 0.3$}
        \label{fig:LowImpact}
    \end{subfigure}
    \hfill
    \begin{subfigure}[h]{0.47\textwidth}
        \centering
        \includegraphics[width=\textwidth]{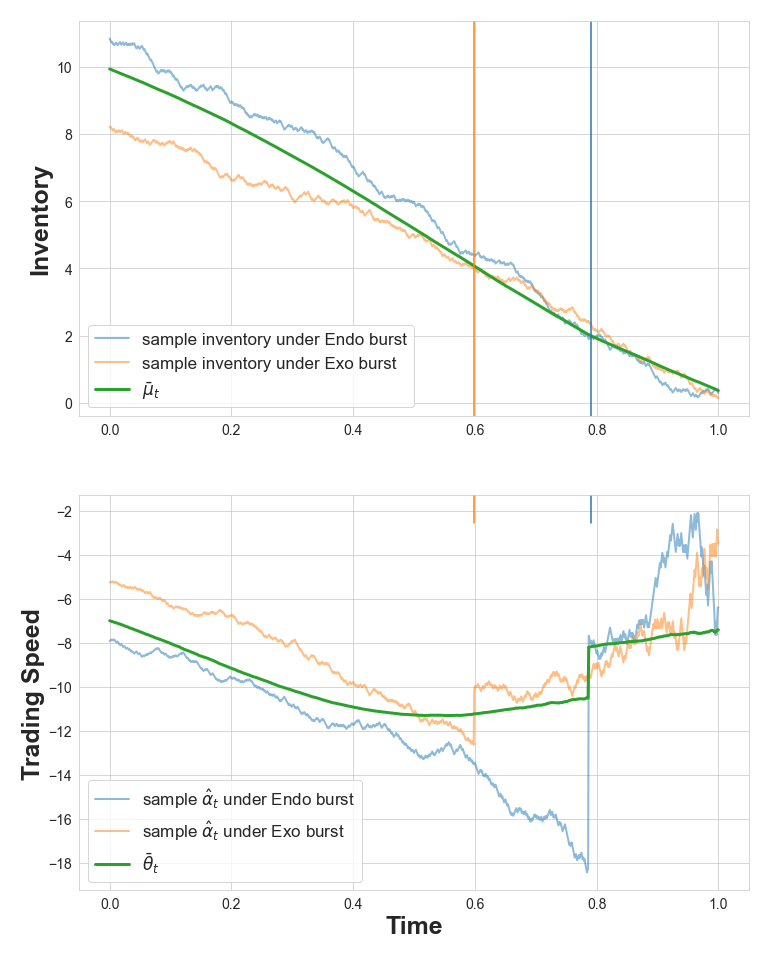}
        \caption{FearExo: $J= 18.4$\\
        $k = 5$}
        \label{fig:FearExo}
    \end{subfigure}

    \caption{Equilibrium Strategy under Different Scenarios}
    \label{fig:StandardPlot2}
\end{figure}

When the growth of the bubble is strong, traders may initially buy more shares to take advantage of the upward trend, which they use to offset the higher liquidation costs (see Figure \ref{fig:BigBubble} and \ref{fig:LowImpact}). However, an interesting phenomenon occurs in the \emph{BigBubble} setting where traders collectively delay the endogenous burst, compared to the \emph{Default} setting, in order to take further advantage of the explosive growth while keeping the transaction costs reasonable. On the contrary, traders in the \emph{LowImpact} scenario will trigger the burst sooner, even compared to the \emph{Default} case, by selling faster. In the \emph{FearExo} setting, we increase the level of concern towards an exogenous shock, parameterized by $k > 0$. Observe that the profit-taking phase before the exogenous burst is no longer present. Instead, players adopt a more conservative strategy by reducing inventory much earlier in the game. 

\begin{figure}[ht]
     \centering
     \begin{subfigure}[h]{0.49\textwidth}
         \centering
         \includegraphics[width=\textwidth]{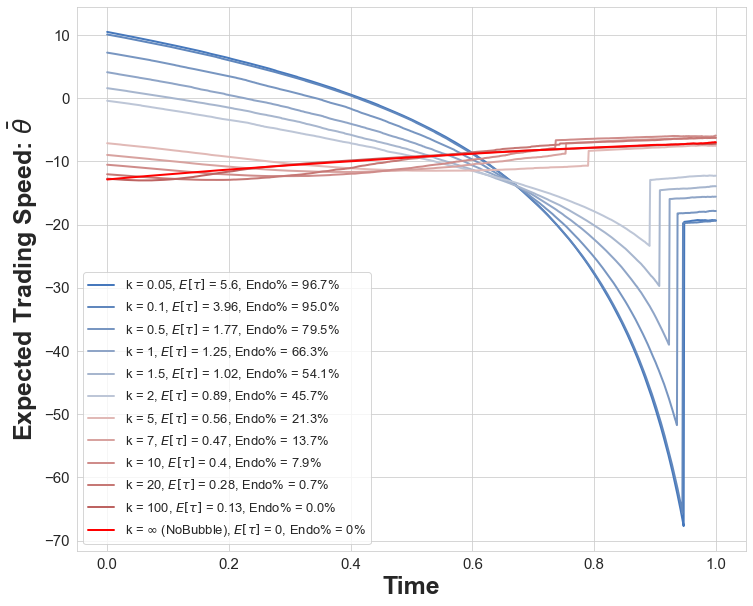}
         \caption{Equilibrium Trading Speed}
         \label{fig:theta_k}
     \end{subfigure}
     \hfill
     \begin{subfigure}[h]{0.49\textwidth}
         \centering
         \includegraphics[width=\textwidth]{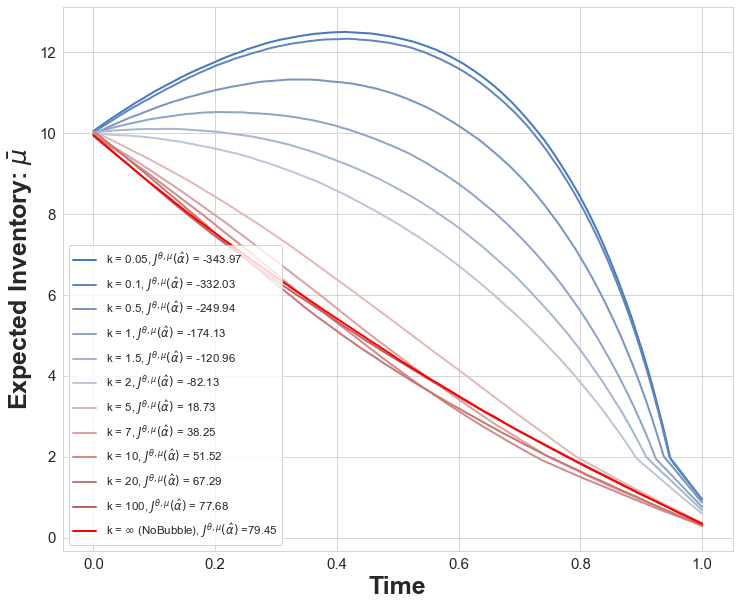}
         \caption{Equilibrium Inventory Level}
         \label{fig:mu_k}
     \end{subfigure}
        \caption{MFG Equilibrium under Different Priors on Exogenous Burst}
        \label{fig:Varyk}
\end{figure}
To better illustrate this change in mindset as the traders' prior on the exogenous burst varies, we fix all parameters to the \emph{Default} setting and plot the equilibrium strategies at different values of $k$ (see Figure \ref{fig:Varyk}). When $k$ is small (i.e. traders expect the exogenous burst to happen far in the future), we see a \emph{concave} shape in the equilibrium liquidation trajectory which resembles a ``riding before attacking'' strategy. 
When $k\to0$, players believe that there is almost no risk of an external shock.
If strong enough, this optimism alone could trigger players to initially buy shares to take advantage of the bubble. However, as we increase $k$, the concavity becomes less and less pronounced and traders eventually shift to a pessimistic strategy for which the equilibrium liquidation trajectory becomes \emph{convex}, resembling a ``risk-aversion'' strategy. The traders are thus reluctant to ride the bubble because they consider the bubble as a liability, rather than an opportunity. As $k \to \infty$, the bubble bursts closer to $0$ which eventually reduces to the \emph{NoBubble} scenario.
It would be interesting to determine the critical point at which players shift from the optimistic to the pessimistic strategy.
Observing the simulated results, it seems that the change of regimes occurs around the value of $k$ at which the equilibrium endogenous burst time is equal to the expected exogenous burst time.
\subsection{MFG Approximation of Finite-player Games}\label{Subection:Approximation}
In this section, we elaborate on the point made in Remark \ref{Remark:Approximation} that an approximate Nash equilibrium for the finite-player game can be constructed by a MFG equilibrium. Since the result is largely already presented in \cite[Sections 4 and 8]{CarmonaLacker15}, we shall only go over the main arguments to avoid repetition. 

First write the $N$-player game also in the weak formulation. Let us recall the probability setup introduced in Section \ref{Subsubsection:ProbSetup}. Take independent copies $X^i$ of the driftless state process $X$ in \eqref{eq:driftlessInv_model} with corresponding entry times $\T^i$, Brownian motions $W^i$ and enlarge for $i \in \N$, all on the same probability space. Let us recall the spaces of admissible controls $\A, \A(t^*)$ on $[0, T] \times \Omega$ and $\A_i, \A^N$ on the product space $[0, T] \times \Omega \times [0, \eta]^N$ from Definition \ref{Definition:AdmissibleControlN}. Given $\balpha = (\alpha^1, \dots, \alpha^N)\in \A^N$, define $\P^{\balpha}_N \sim \P \otimes \nu^N$ by $$\frac{d\P^{\balpha}_N}{d\P \otimes \nu^N} \ce \mathcal{E}\pa{\int_0^{\cdot}\sumN\sigma^{-1}\alpha^i_sdW^i_s}_T.$$
By Girsanov's theorem, under $\P^{\balpha}_N$, for each $i \leq N$ the processes $W^{\alpha^i}_t \ce W^i_t - \intt \sigma^{-1}\alpha^i_s ds$ form an $N$-dimensional Brownian motion, and $X^i$ satisfies $$dX^i_t = \alpha^i_tdt + \sigma dW^{\alpha^i}_t \text{ for } t \geq \T^i \quad \text{and} \quad X^i_t = 0 \as \text{ for } t \in [0, \T^i).$$
Let $\mu^N$ denote the empirical measure of the state processes and $\theta^N_t(\balpha)$ the empirical measure of some given controls $\balpha \in \A^N_N$ at time $t$. A key difference between the finite-player setup and the MFG is that the running cost depends on the random variable $\bT$ via the empirical CDF $F_{\bT}^{N}$. 

The weak formulation of the unconditional version of the objective \eqref{eq:Objective_N} for player $i$ with a given control $\balpha$ is 
\begin{equation*}
    \begin{split}
        J^{N, i}(\balpha) & \ce \E^{\P^{\balpha}_N}\left [\int_{\T^i}^{\tau^*(\mu^N)} f^+(t, X^i_t, F^N_{\bT}(t), \theta^N_t(\balpha), \alpha^i_t)dt\right.\\
& \left .+ \int_{\tau^*(\mu^N)}^T f^-(t, X_t^i, \theta^N_t(\balpha), \alpha_t^i)dt + g(X^i, \tau^*(\mu^N))\right],       
    \end{split}
\end{equation*}
where $f^+$ and $f^-$ are defined in \eqref{running_cost_f+} and \eqref{running_cost_f-}, and $g$ is defined in \eqref{terminalcost} as $$g(X, \upeta) \ce X_{\upeta}\beta_{\upeta}\gamma_{\upeta}+ c(X_T^i)^2.$$ 
Under slight abuse of notation, by $\balpha(\bt)$ we mean $(\alpha^1(\bt), \dots, \alpha^N(\bt))$. Denote by $J^{N, i}(\balpha; \bt)$ as the conditional expectation given $\bT = \bt$, that is
\begin{equation*}
    \begin{split}
        J^{N, i}(\balpha; \bt) & \ce \E^{\P^{\balpha}_N}\left [\int_{t^i}^{\tau^*(\mu^N)} f^+(t, X^i_t, F^N_{\bT}(t; \bt), \theta^N_t(\balpha(\bt)), \alpha^i_t(\bt))dt\right.\\
& \left .+ \int_{\tau^*(\mu^N)}^T f^-(t, X_t^i, \theta^N_t(\balpha(\bt)), \alpha_t^i(\bt))dt + g(X^i, \tau^*(\mu^N)) \right].       
    \end{split}
\end{equation*}

\begin{definition}\label{Definition:eps-Nash}
    Given $\varepsilon > 0$, an $\varepsilon$-Nash equilibrium for the $N$-player game is a set of admissible controls $\balpha \in \A^N$ such that for every $i \in \{1, \dots, N\}$, 
    \begin{equation}\label{eq:eps-Nash}
        J^{N, i}(\balpha) \leq J^{N, i}(\balpha^{\beta; -i}) + \varepsilon \quad \forall \beta \in \A_i.
    \end{equation}
\end{definition}
In order to construct an $\varepsilon$-Nash equilibrium, first note that since $\sigma > 0$, the process $(\iota^i, W^i_t)_{0 \leq t \leq T}$ generates the same filtration as $X^i$. Therefore, given a MFG equilibrium $(\hat{\alpha}, \hat{\theta}, \hat{\mu})$ from Theorem \ref{thm.bubble.random.entry}, the open-loop control $\hat{\alpha}$ can be written in a closed loop form, namely a deterministic, measurable function $\varphi$ of $(t, X, \T)$. We briefly outline the proof to show that for each $\varepsilon >0$, there exists $M_\varepsilon \in \N$ such that for all $N \geq M_{\varepsilon}$, the strategy profile $\hat{\balpha} =(\hat{\alpha}^1, \dots, \hat{\alpha}^N)$ is an $\varepsilon$-Nash equilibrium for the $N$-player game where $\hat{\alpha}^i_t = \varphi(t, X^i, \T^i)$. In other words, we can construct a strategy profile where each player only uses her own information, that is $X^i$ and $\T^i$, such that it can approximate a true Nash equilibrium arbitrarily well in the sense of Definition \ref{Definition:eps-Nash}. We only need to show the inequality \eqref{eq:eps-Nash} for the first player by symmetry. 

We now fix $t^1 \in [0, \eta]$ and define $\bt^1 \ce (t^1, \T^2, \dots, \T^N)$. Since we are fixing the entry time of just the first player, the notation $\hat{\balpha}^{\beta; -1}$ makes sense for $\beta \in \A(t^1)$ instead of $\beta \in \A_1$. In light of \cite[Lemma 8.2]{CarmonaLacker15}, the key step in the proof is to show that 
\begin{equation}\label{eq:convergence_limsup}
    \lim_{N \to \infty}\sup_{\beta \in \A(t^1)}\abs{J^{N, 1}(\hat{\balpha}^{\beta; -1}; \bt^1) - \hat{J}^{N}(\beta)} = 0
\end{equation}
where the intermediate objective function is defined as
\begin{equation*}
        \hat{J}^{N, 1}(\beta) \ce \E^{\P^{\balpha}_N}\left [\int_{t^1}^{\tau^*(\hat{\mu})} f^+(t, X^1_t, F_{\T}(t), \hat{\theta}_t, \beta_t)dt + \int_{\tau^*(\hat{\mu})}^T f^-(t, X_t^1, \hat{\theta}_t, \beta_t)dt + g(X^1, \tau^*(\hat{\mu})) \right].
\end{equation*}
Since $t^1$ is fixed, we are basically in the same setting as \cite{CarmonaLacker15} if we note from the proof of Theorem \ref{thm.bubble.random.entry} that $\tau^*(\cdot)$ is almost surely Wasserstein-continuous, and that $F^N_{\bT}(\cdot; \bt^1) \cas F_\T$ uniformly on $[0, \eta]$. Here we need to additionally assume that the bubble trend function $b$ is continuous.

Moreover, the limit \eqref{eq:convergence_limsup} also holds uniformly for all $t^1$ because $t^1$ only plays two roles. As entry time, it does not affect any bounds or coefficients (see e.g. \eqref{fandXintegrability1}, \eqref{fandXintegrability2}, and Lemma \ref{Lemma:uniformboundXtilde}). As a part of $F_{\bT}^N$, it is also fine since it only takes values in a compact interval. Therefore, $M_\varepsilon$ does not depend on $t^1$. We leave out the full proof as it mostly follows from Section 8 in \cite{CarmonaLacker15}.

\section{Concluding Remarks}\label{Section:Conclusion}
In this paper, we propose a model on optimal bubble riding under a stochastic differential game setting, in which traders realize the presence of the bubble at possibly different times before they enter the market and begin liquidating their assets. Each trader attempts to take advantage of the uptrend knowing that a future crash will happen. This formulation leads to a MFG with varying entry times, where players interact through both permanent price impact and the endogenous burst condition. We then prove the existence of MFG equilibria in a generic framework by a fixed point argument. To incorporate the exogenous burst, we adopt the techniques of progressive enlargement of filtration from credit risk literature, which enables us to decompose the MFG equilibrium into pre-burst and post-burst components that only require market information and time of exogenous burst once it occurs.

We revisit the model with simulation results by numerically solving the BSDE in the weak formulation of MFGs. The equilibrium control is found to have a different shape from the classical optimal execution problems, showing both a riding phase and an attacking phase in the liquidation strategy. While the uptrend provides motivation for herding strategies, the inevitable, random burst time propels the traders to eventually attack the bubble. We also notice that traders could collectively delay the endogenous burst in the presence of a strong enough bubble. 

Another intriguing result arises when we analyze how traders' view on the exogenous burst influences their equilibrium strategy. As the traders expect an earlier exogenous burst, there is a critical point where the concave shaped liquidation strategies become convex. It would be interesting to further investigate this regime change. Lastly, we give a proof outline of the inverse convergence result, showing that the MFG equilibrium can be used to construct an approximate Nash equilibrium. As we are the first to look at MFGs with varying entry times, there are many future directions to take. Some examples include having players choose their entry time based on their risk preference, allowing inference on $k$, or incorporating a leader-follower framework to model both institutional and retail traders. We also refer the readers to the accompanying paper \cite{TangpiWang23} for an extension with price-dependent entry times and common noise.

  \appendix
%!TEX root = main.tex
%\section{Minimal Supersolution of Convex BSDEs}
%\label{Sec:min_super_sol}

% \documentclass[main.tex]{subfiles}

% \begin{document}

\section{Appendix}\label{Section:Appendix}
The appendix consists of three sections. 
We first present the proofs of results on BSDEs under enlarged filtration used in the proof of our MFG existence results.
These follow by adaptations of \citet{KharroubiLim14}. Then we provide some regularity results needed for the varying entry times section, as well as a brief discussion on common noise.

\subsection{BSDE Solutions under Enlarged Filtration}
In this section we study the generic BSDE \eqref{BSDE ProEnl0} and prove the results given in \ref{sec:BSDE-progress}.
We begin with a uniform integrability result for the following BSDE: for a fixed $\upeta \in \R_+$, consider  $$Y_t^1(\upeta) = \xi^1(\upeta) + \int_t^T G^1(s, Z_s^1(\upeta), \upeta)ds - \int_t^TZ_s^1(\upeta)dW_s, \qquad \upeta \wedge T \leq t \leq T.$$
Under Assumption \eqref{Assumption:PEA}, by classical results of Pardoux and Peng \cite[Theorem 4.1]{PardouxPeng90}, for each $\upeta \in \R_+$ there exists a unique solution $(Y^1(\upeta), Z^1(\upeta)) \in \S_{\mathbb{F}}^2[\upeta \wedge T, T] \times \H_{\mathbb{F}}^2[\upeta \wedge T, T]$.

\begin{lemma}\label{integrabilityY1Z1}
There exists a constant $C > 0$ such that for all $\upeta \in \R_+$ and all $t \in [\upeta \wedge T, T]$, we have
$$\sup_{\upeta \in [0, T]}\E\sqbra{\abs{Y_t^1(\upeta)}^2 + \int_{\upeta \wedge T}^T |Z^1_s(\upeta)|^2ds} < \infty.$$
In particular, $\sup_{s \in [0, T]}\E[|Y^1_s(s)|] < \infty$.
\end{lemma}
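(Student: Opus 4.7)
The plan is to derive the standard a priori BSDE estimates of Pardoux--Peng, but to carefully track how all constants depend on the parameter $\upeta$ so that we can promote the pointwise bound into a uniform one using conditions \ref{Assumption:PEA1} and \ref{Assumption:PEA2}.

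First I would apply It\^o's formula to $|Y_t^1(\upeta)|^2$ on $[\upeta\wedge T, T]$. After a standard localization argument and taking expectation, the martingale part vanishes and we obtain
\begin{equation*}
    \E[|Y_t^1(\upeta)|^2] + \E\bigg[\int_t^T |Z_s^1(\upeta)|^2\, ds\bigg]
    = \E[|\xi^1(\upeta)|^2] + 2\,\E\bigg[\int_t^T Y_s^1(\upeta)\, G^1(s, Z_s^1(\upeta), \upeta)\, ds\bigg].
\end{equation*}
Using the uniform Lipschitz property of $G^1$ in $z$ from \ref{Assumption:PEA2} together with Young's inequality, for any $\varepsilon > 0$ we have
\begin{equation*}
    2|Y_s\, G^1(s, Z_s, \upeta)| \;\le\; \Big(1 + \tfrac{\ell_G^2}{\varepsilon}\Big)|Y_s|^2 + |G^1(s, 0, \upeta)|^2 + \varepsilon\,|Z_s|^2.
\end{equation*}
Choosing $\varepsilon = 1/2$ absorbs the $|Z|^2$ contribution into the left-hand side. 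Writing $C = 1 + 2\ell_G^2$, this yields
\begin{equation*}
    \E[|Y_t^1(\upeta)|^2] + \tfrac{1}{2}\E\bigg[\int_t^T |Z_s^1(\upeta)|^2\, ds\bigg]
    \;\le\; \E[|\xi^1(\upeta)|^2] + \E\bigg[\int_{\upeta\wedge T}^T |G^1(s,0,\upeta)|^2\, ds\bigg] + C\,\E\bigg[\int_t^T |Y_s^1(\upeta)|^2\, ds\bigg].
\end{equation*}

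Next I would apply Gronwall's lemma to the quantity $t \mapsto \E[|Y_t^1(\upeta)|^2]$ on $[\upeta\wedge T, T]$, which gives
\begin{equation*}
    \E[|Y_t^1(\upeta)|^2] \;\le\; e^{CT}\bigg(\E[|\xi^1(\upeta)|^2] + \E\bigg[\int_{\upeta\wedge T}^T |G^1(s,0,\upeta)|^2\, ds\bigg]\bigg).
\end{equation*}
Substituting back into the previous inequality then bounds $\E[\int_{\upeta\wedge T}^T |Z_s^1(\upeta)|^2\,ds]$ in terms of the same right-hand side. By assumption \ref{Assumption:PEA1}, $\sup_{\upeta \in [0,T]} \E[|\xi^1(\upeta)|^2] < \infty$, and by \ref{Assumption:PEA2}, $\sup_{\upeta \in [0,T]}\E[\int_0^T |G^1(s,0,\upeta)|^2\,ds] < \infty$. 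Taking the supremum over $\upeta \in [0,T]$ therefore produces the claimed uniform bound.

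For the concluding statement, I would simply set $t = \upeta = s$ in the bound above and apply Jensen's inequality to get $\E[|Y_s^1(s)|] \le \E[|Y_s^1(s)|^2]^{1/2}$, whose supremum over $s \in [0,T]$ is finite by what we just proved. The only mildly delicate point is that we need the right-hand side of the Gronwall estimate to be \emph{uniformly} bounded in $\upeta$, which is exactly what the two supremum conditions in \eqref{Assumption:PEA} guarantee; in their absence one would only get a pointwise bound, so these hypotheses are used essentially.
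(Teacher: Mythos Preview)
Your proposal is correct and follows essentially the same approach as the paper: both apply It\^o's formula to $|Y_t^1(\upeta)|^2$, use Young's inequality to split the cross term and absorb the $|Z|^2$ contribution, then apply Gronwall's inequality and invoke the uniform-in-$\upeta$ bounds from \ref{Assumption:PEA1} and \ref{Assumption:PEA2}. Your splitting of the Young step (separating $G^1(s,0,\upeta)$ from the Lipschitz increment first) is slightly more explicit than the paper's, but the argument is the same in substance.
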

\begin{proof}
Apply It\^{o}'s formula to $|Y^1_t(\upeta)|^2$ and take expectations. By \eqref{Assumption:PEA} and Young's inequality we get 
\begin{align*}
\E\sqbra{\abs{Y^1_t(\upeta)}^2} & = \E\sqbra{|\xi^1(\upeta)|^2} + \E\sqbra{\int_{\upeta \wedge T}^T 2Y_s^1(\upeta)G^1(s, Z^1_s(\upeta), \upeta)ds} - \E\sqbra{\int_{\upeta \wedge T}^T \abs{Z^1_s(\upeta)}^2ds} \\
& \leq \E\sqbra{|\xi^1(\upeta)|^2 + \int_{\upeta \wedge T}^T \frac{1}{\varepsilon}\abs{Y_s^1(\upeta)}^2 ds + \int_{\upeta \wedge T}^T \varepsilon|G^1(s,Z_s^1(\upeta),\upeta)|^2ds - \int_{\upeta \wedge T}^T \abs{Z^1_s(\upeta)}^2ds}\\
& \leq C + \frac{1}{\varepsilon}\E\sqbra{\int_{\upeta \wedge T}^T \abs{Y_s^1(\upeta)}^2 ds} + C\varepsilon T + (C\varepsilon - 1)\E\sqbra{\int_{\upeta \wedge T}^T  |Z^1_s(\upeta)|^2ds}
\end{align*}
for any $\varepsilon > 0$. 
Choose $\varepsilon = \frac{1}{C}$ to remove the last term. Applying Gr\"{o}nwall's inequality yields
$$\E\sqbra{\abs{Y^1_t(\upeta)}^2} \leq (C + T)\exp(CT) \eqqcolon C_Y.$$
Plug this uniform bound back to the previous inequality and choose any $\varepsilon < \frac{1}{C}$, we get
$$(1 - C\varepsilon)\E\sqbra{\int_{\upeta \wedge T}^T  |Z^1_s(\upeta)|^2ds} \leq C + \frac{TC_Y}{\varepsilon} + C\varepsilon T.$$
Dividing both sides by $(1 - C\varepsilon) > 0$ yields result, as $C$ does not depend on $\upeta$.
\end{proof}

\begin{proposition}[Existence of BSDE Solution]\label{Prop:BSDEexistence}
Under Assumptions \ref{Assumption:BT1} and \eqref{Assumption:PEA}, there exists a solution $(Y, Z, U) \in \mathcal{S}^2_{\mathbb G} \times \mathcal{H}_\mathbb{G}^2 \times \mathcal{H}_{\mathbb{G}, D}^2$ to the BSDE \eqref{BSDE ProEnl0} satisfying \eqref{BSDEsolution}.
\end{proposition}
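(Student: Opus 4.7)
The plan is to adapt the decomposition technique of \citet{KharroubiLim14}: construct the candidate solution by first solving an $\upeta$--parametrized family of purely Brownian BSDEs on $\mathbb{F}$ (the "after--burst" equations), then using the diagonal process $s\mapsto Y^1_s(s)$ as an extra forcing in a second Brownian BSDE on $\mathbb{F}$ (the "before--burst" equation), and finally patching the two together at $\tau$ via the ansatz \eqref{BSDEsolution}. The point is that on $\mathbb G$ one only has to check the equation separately on $\{t<\tau\}$ and $\{t\ge \tau\}$ plus a compatibility condition at the jump, which is precisely encoded by the formula $U_t=Y^1_t(t)-Y^0_t$ on $\{t\le \tau\}$.

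First, for each fixed $\upeta\in\R_+$ I would invoke Pardoux--Peng to solve the Brownian BSDE for $(Y^1(\upeta),Z^1(\upeta))\in \S^2_{\mathbb F}[\upeta\wedge T,T]\times \H^2_{\mathbb F}[\upeta\wedge T,T]$ in \eqref{BSDEsolutionBSDE}: Assumption \ref{AssumptionPE:PEA2} gives the Lipschitz property of $G^1(\cdot,\cdot,\upeta)$ in $z$ and \ref{AssumptionPE:PEA1} gives $\xi^1(\upeta)\in L^2$. The continuity of $\upeta\mapsto \xi^1(\upeta)$ in $L^2$ and the uniform Lipschitz/integrability conditions in \ref{AssumptionPE:PEA2}, combined with the standard a priori BSDE estimate, show that $\upeta\mapsto (Y^1(\upeta),Z^1(\upeta))$ is continuous, hence jointly measurable. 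This legitimizes the diagonal process $\bar Y_s:=Y^1_s(s)$, and Lemma~\ref{integrabilityY1Z1} yields $\sup_{s\in[0,T]}\E[|\bar Y_s|^2]<\infty$, so $s\mapsto k_s\bar Y_s$ is in $\H^2_{\mathbb F}$ by Assumption~\ref{Assumption:BT1} (boundedness of $k$).

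Second, I would solve the before--burst Brownian BSDE for $(Y^0,Z^0)\in\S^2_{\mathbb F}\times \H^2_{\mathbb F}$ in the second line of \eqref{BSDEsolutionBSDE}. Its driver $(y,z)\mapsto G^0(s,z)+k_s\bar Y_s-k_s y$ is Lipschitz in $(y,z)$ with constant $\max(\ell_G,\sup_s k_s)$, the inhomogeneity $k_s\bar Y_s$ is square--integrable by the previous step, and $\xi^0\in L^2$ by \ref{AssumptionPE:PEA1}; Pardoux--Peng again applies. I then define $(Y,Z,U)$ by \eqref{BSDEsolution}. The measurability decomposition of progressively enlarged filtrations (\cite[Lemma 2.1]{KharroubiLim14}) ensures that $Y,Z,U$ are $\mathbb G$--progressive/predictable. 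Integrability in $\S^2_{\mathbb G}\times\H^2_{\mathbb G}\times\H^2_{\mathbb G,D}$ follows by disintegrating against $\nu_\tau$: by the tower property and Fubini, $\E[\sup_{t\le T}|Y^1_t(\tau)|^2]=\int\E[\sup_{t\in[\upeta\wedge T,T]}|Y^1_t(\upeta)|^2]\,\nu_\tau(d\upeta)<\infty$ (using the uniform bound from Lemma~\ref{integrabilityY1Z1}), and similarly for $Z$; for $U$, $\E[\int_0^T|U_s|^2dD_s]=\E[|Y^1_\tau(\tau)-Y^0_\tau|^2\mathbf{1}_{\tau\le T}]$, again finite.

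Finally, I would verify that this triple solves \eqref{BSDE ProEnl1}. On $\{t<\tau\}$, $Y=Y^0$, $Z=Z^0$ and $dM_t=-k_t\,dt$, so the $\mathbb G$--equation collapses to $dY^0_t=-[G^0(t,Z^0_t)+k_t(Y^1_t(t)-Y^0_t)]\,dt+Z^0_t\,dW_t$, which is exactly the second equation of \eqref{BSDEsolutionBSDE}. On $\{t>\tau\}$, $Y=Y^1(\tau)$, $Z=Z^1(\tau)$, $U\equiv 0$ by Remark~\ref{Remark:convention}, and the equation reduces to the first line of \eqref{BSDEsolutionBSDE}. The only remaining point is the jump at $\tau$: $\Delta Y_\tau=Y^1_\tau(\tau)-Y^0_\tau=U_\tau=U_\tau\Delta D_\tau$, matching the $U\,dM$ integral. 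I expect the main obstacle to be the bookkeeping around measurability of $\upeta\mapsto (Y^1(\upeta),Z^1(\upeta))$ and the rigorous derivation of the integrability estimates via disintegration against $\nu_\tau$; the remaining computations are a direct verification handled by It\^o calculus, and the pieces are already packaged in the assumptions and in Lemma~\ref{integrabilityY1Z1}.
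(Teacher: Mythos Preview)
Your proposal is correct and follows essentially the same route as the paper: solve the $\upeta$--parametrized Brownian BSDEs via Pardoux--Peng, feed the diagonal $Y^1_s(s)$ into the before--burst Brownian BSDE, and patch via \eqref{BSDEsolution}. The only differences are cosmetic: the paper obtains joint measurability of $(Y^1,Z^1)$ by citing \cite[Proposition C.1]{KharroubiLim14} rather than via your continuity argument, and it defers the verification and integrability steps to \cite[Theorem 3.1]{KharroubiLim14} instead of writing them out explicitly as you do.
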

\begin{proof}
We cannot directly apply \cite[Theorem 3.1]{KharroubiLim14} since we do not assume that $\xi$ is almost surely bounded. However, adopting the proof of this result, it suffices to check the existence of solution to the following recursive system of Brownian BSDEs:
\begin{equation}
\begin{cases}
    Y_t^1(\upeta) = \xi^1(\upeta) + \int_t^T G^1(s, Z_s^1(\upeta), \upeta)ds - \int_t^TZ_s^1(\upeta)dW_s & \text{ for } \upeta \wedge T \leq t \leq T\\
    Y_t^0 = \xi^0 + \int_t^T\pa{G^0(s, Z^0_s) + k_sY_s^1(s) - k_sY_s^0}ds - \int_t^TZ^0_sdW_s & \text { for } 0 \leq t \leq T.
\end{cases}
\end{equation}
The second equation is well defined due to Lemma \ref{integrabilityY1Z1}. 
By Assumption \eqref{Assumption:PEA} and \cite[Theorem 4.1]{PardouxPeng90}, the first equation has a unique solution for all $\upeta \in \R_+$: $(Y^1(\upeta), Z^1(\upeta)) \in \mathcal{S}^2_{\mathbb F}[\upeta \wedge T, T] \times \mathcal{H}_\mathbb{F}^2[\upeta \wedge T, T]$. 
Moreover, by \cite[Proposition C.1]{KharroubiLim14}, we obtain $(Y^1, Z^1)$ as a $\mathcal{P}(\mathbb{F}) \otimes \B(\R_+)$-measurable process. 
Similarly, by Assumption \ref{Assumption:BT1} in addition to \eqref{Assumption:PEA}, the second equation has a unique solution $(Y^0, Z^0) \in \mathcal{S}^2_{\mathbb F} \times \mathcal{H}_\mathbb{F}^2$. 
Then by \cite[Theorem 3.1]{KharroubiLim14}, we arrive at a solution $(Y, Z, U) \in \mathcal{S}^2_{\mathbb G} \times \mathcal{H}_\mathbb{G}^2 \times \mathcal{H}_{\mathbb{G}, D}^2$ on $[0, T]$ satisfying \eqref{BSDEsolution}.
% {\color{yellow} with the following form:
% \begin{equation}
% \begin{cases}
% Y_t = Y_t^0\ind{t < \tau} + Y_t^1(\tau)\ind{\tau \leq t}\\
% Z_t = Z_t^0\ind{t \leq \tau} + Z_t^1(\tau)\ind{\tau < t}\\
% U_t = (Y_t^1(t) - Y_t^0)\ind{t \leq \tau}.
% \end{cases}
% \end{equation}
% }
\end{proof}

\begin{remark}
    In \cite{KharroubiLim14}, the authors make the assumption that $Y^i$ is an essentially bounded process (on $[0, T]$, $i = 0, 1$) for a stronger conclusion that $Y$ is also essentially bounded. Since we do not have a bounded terminal cost, we obtain an integrability property of the solution instead of boundedness. The proof that $(Y, Z, U)$ is in $\mathcal{S}^2_{\mathbb G} \times \mathcal{H}_\mathbb{G}^2 \times \mathcal{H}_{\mathbb{G}, D}^2$ easily follows from the original proof, thus we omit it here.
\end{remark}

\subsubsection{Proof of Proposition \ref{Prop:BSDEcomparison}}
By \cite[Lemma 2.1]{KharroubiLim14} we have the representation of $(Y, Z, U)$ as a sum of pre-- and post--burst processes:
\begin{align*}
    Y_t & = Y_t^0 \ind{t < \tau} + Y_t^1(\tau)\ind{t \geq \tau}\\
    Z_t & = Z_t^0 \ind{t \leq \tau} + Z_t^1(\tau)\ind{t > \tau}\\
    U_t & = U_t^0 \ind{t \leq \tau}
\end{align*}
where $Y^0, Z^0$ are $\mathbb F$-predictable and $U^0$ is $\mathbb F$-progressively measurable. Note that $U$ only has the pre--burst component by Remark (\ref{Remark:convention}). 
Respectively, we also get representation of $(Y', Z', U')$ by $(Y^{'0}, Y^{'1}, Z^{'0}, Z^{'1}, U^{'0})$. By \cite[Theorem 12.23]{bookSemiMGTheory} and Remark \ref{Remark:convention}, $(Y^1, Z^1)
$ (resp. $(Y^{'1}, Z^{'1})$) is a solution to the Brownian BSDE
\begin{align*}
    Y^1_t(\tau) & = \xi + \int_t^T G(s,Z_s^1(\tau))ds - \int_t^T Z_s^1dW_s, \qquad \tau \wedge T \leq t \leq T\\
    \biggl(\text{resp. } Y^{'1}_t(\tau) & = \xi' + \int_t^T G'(s,Z_s^{'1}(\tau))ds - \int_t^T Z_s^{'1}dW_s , \qquad \tau \wedge T \leq t \leq T)\biggr).
\end{align*}
By standard comparison theorem for BSDEs, we have $ Y_t^1 \leq  Y_t^{'1}$, see e.g. \cite[ Corollary 4.4.2]{DarlingPardoux97}. 
Now define drivers $F$ and $F'$ by
\begin{align*}
    F(t, y, z) & = G(t, z) + Y_t^1(t) - y\\
    F'(t, y, z) & = G'(t, z) + Y_t^{'1}(t) -y.
\end{align*}
We thus have $F \leq F'$. 
According to \cite[Theorem 4.1]{KharroubiLim14}, we only need to check that the drivers $F$ and $F'$ satisfy a comparison theorem for a standard Brownian BSDE. This can be verified again by \cite[Corollary 4.4.2]{DarlingPardoux97} under Assumption \eqref{Assumption:PEA}.\qed

\begin{proposition}[Uniqueness of BSDE Solution]\label{Prop:BSDEuniqueness}
Under Assumptions \ref{Assumption:BT1} and \ref{Assumption:PEA}, there exists at most one solution $(Y, Z, U) \in \mathcal{S}^2_{\mathbb G} \times \mathcal{H}_\mathbb{G}^2 \times \mathcal{H}_{\mathbb{G}, D}^2$ to the BSDE (\ref{BSDE ProEnl0}).
\end{proposition}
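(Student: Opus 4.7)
The plan is to reduce uniqueness in the enlarged filtration to uniqueness for two standard Brownian BSDEs, which is classical under Assumption \eqref{Assumption:PEA}. Concretely, suppose $(Y, Z, U)$ and $(Y', Z', U')$ are two solutions of \eqref{BSDE ProEnl0} living in $\mathcal{S}^2_{\mathbb G} \times \mathcal{H}^2_\mathbb{G} \times \mathcal{H}^2_{\mathbb{G}, D}$. Without loss of generality, adopt the convention of Remark \ref{Remark:convention}, so that $U_t = U'_t = 0$ for $t > \tau \wedge T$, and write the BSDE in the equivalent form \eqref{BSDE ProEnl1} using Assumption \ref{Assumption:BT1}.

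First, I would invoke the decomposition for $\mathbb{G}$-predictable and $\mathbb{G}$-progressively measurable processes from \cite[Lemma 2.1, Remark 2.1]{KharroubiLim14} (which uses $\P(\tau > T)>0$ for uniqueness of the decomposition). This provides $\mathbb{F}$-predictable processes $(Y^0, Z^0)$ and a $\mathscr{P}(\mathbb{F})\otimes \B(\R_+)$-measurable family $(Y^1(\upeta), Z^1(\upeta))$ such that
\[
    Y_t = Y_t^0\ind{t < \tau} + Y_t^1(\tau)\ind{\tau \leq t}, \quad Z_t = Z_t^0\ind{t \leq \tau} + Z_t^1(\tau)\ind{\tau < t}, \quad U_t = (Y^1_t(t) - Y^0_t)\ind{t \le \tau},
\]
and analogously $(Y^{\prime 0}, Z^{\prime 0})$, $(Y^{\prime 1}(\upeta), Z^{\prime 1}(\upeta))$ for the primed solution. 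Inserting these decompositions into \eqref{BSDE ProEnl1} and separating the behavior on $\{t \geq \tau\}$ (where the BSDE becomes purely Brownian with driver $G^1$ and terminal condition $\xi^1(\tau)$) and on $\{t < \tau\}$ (where after accounting for the compensator $dK_t = k_t dt$ and the form of $U$ the equation becomes a Brownian BSDE with driver $G^0(s, Z^0_s) + k_s(Y^1_s(s) - Y^0_s)$ and terminal condition $\xi^0$), I would conclude that both sets of components satisfy the iterative system \eqref{BSDEsolutionBSDE}.

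Now, for each fixed $\upeta \in \R_+$, the first equation of \eqref{BSDEsolutionBSDE} is a standard Brownian BSDE on $[\upeta \wedge T, T]$, and Assumption \eqref{Assumption:PEA} guarantees the Lipschitz continuity of $G^1(\cdot,\cdot,\upeta)$ and square-integrability of $\xi^1(\upeta)$; by the Pardoux--Peng theorem it admits at most one solution in $\S^2_\mathbb{F}[\upeta\wedge T, T]\times \H^2_\mathbb{F}[\upeta\wedge T, T]$. Hence $(Y^1(\upeta), Z^1(\upeta)) = (Y^{\prime 1}(\upeta), Z^{\prime 1}(\upeta))$ for every $\upeta$, and by Lemma \ref{integrabilityY1Z1} the map $s \mapsto Y^1_s(s)$ is integrable, so the second BSDE of \eqref{BSDEsolutionBSDE} is a well-posed Brownian BSDE on $[0,T]$ whose driver is Lipschitz in $Z^0$ (the new inhomogeneous term $k_s Y^1_s(s)$ is the same for both solutions). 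A second application of Pardoux--Peng gives $(Y^0, Z^0) = (Y^{\prime 0}, Z^{\prime 0})$. Reassembling via the decomposition formulas yields $(Y, Z, U) = (Y', Z', U')$ in the required spaces, which is the desired uniqueness.

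The main obstacle in this plan is ensuring that every square-integrable solution (not just the one constructed in Proposition \ref{Prop:BSDEexistence}) truly admits the decomposition \eqref{BSDEsolution} with $\mathbb{F}$-measurable components; this is precisely the role of \cite[Lemma 2.1]{KharroubiLim14}, and it relies on the standing hypothesis $\P(\tau > T) > 0$ from \ref{Assumption:BT1}. All subsequent steps are then automatic reductions to the classical Brownian BSDE theory.
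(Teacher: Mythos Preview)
Your plan is correct, but it takes a different route from the paper. The paper derives uniqueness from the comparison principle (Proposition \ref{Prop:BSDEcomparison}), following \cite[Theorem 4.2]{KharroubiLim14}: applying comparison with $(\xi,G)=(\xi',G')$ in both directions gives $Y=Y'$, and then $Z=Z'$, $U=U'$ follow by standard bracket arguments. Your approach bypasses comparison entirely and reduces directly to Pardoux--Peng uniqueness for the two Brownian BSDEs in the iterative system \eqref{BSDEsolutionBSDE}. This is more elementary (no need for a comparison theorem on $\mathbb{G}$), and arguably cleaner since comparison is strictly stronger than what is needed.

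Two small points of care. First, \cite[Lemma 2.1]{KharroubiLim14} only furnishes the decomposition $U_t = U^0_t\ind{t\le\tau}$ with $U^0$ $\mathbb{F}$-progressive; the specific identification $U^0_t = Y^1_t(t)-Y^0_t$ that you write down is not part of that lemma but must be extracted from the jump of $Y$ at $\tau$ in the BSDE. Second, your ``main obstacle'' is slightly understated: the decomposition itself is indeed automatic from Lemma 2.1, but showing that the \emph{components} of an arbitrary solution solve the iterative system \eqref{BSDEsolutionBSDE} (for each $\upeta$, not just $\upeta=\tau$) requires an additional argument using independence of $\tau$ from $\mathbb{F}$ and the absolute continuity in \ref{Assumption:BT1}. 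The paper handles the analogous step in its comparison proof via \cite[Theorem 12.23]{bookSemiMGTheory}; you would need the same ingredient here. Once that is in place, everything else in your plan goes through.
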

The proof of the uniqueness result largely follows from Proposition \ref{Prop:BSDEcomparison}, and we refer readers to the proof of \cite[Theorem 4.2]{KharroubiLim14} for the main argument. The only difference is that we use \cite[Corollary 4.4.2]{DarlingPardoux97} as the basis of the comparison principle. As a result, we require the driver to be Lipschitz in $Z$ but do not need any convexity assumptions as in \cite[Theorem 4.2]{KharroubiLim14}. 

Putting the results together, we have proven Theorem \ref{Theorem:BSDEmain}.

\begin{proposition}[Stability]\label{Prop:BSDEstability}
Let $(\xi, G)$ and $(\xi', G')$ be coefficients that satisfy \eqref{Assumption:PEA}. Suppose \ref{Assumption:BT1} is also satisfied and $(Y, Z, U)$, $(Y', Z', U')$ are the solutions to BSDE \eqref{BSDE ProEnl0} with coefficients $(\xi, G)$ and $\xi', G'$ respectively. Define $\Delta Y = Y - Y', \Delta Z = Z - Z'$ and 
\begin{align*}
    \Delta \xi^0 \ce \xi^0 - \xi^{'0}, & \quad \Delta \xi^1(\upeta) \ce \xi^1(\upeta) - \xi^{'1}(\upeta)\\
    \Delta G^0(s, z) \ce G^0(s, z) - G^{'0}(s, z), &\quad \Delta G^1(s, z, \upeta) \ce G^1(s, z, \upeta) - G^{'1}(s, z, \upeta).
\end{align*}  
Then there exists a constant $C>0$ such that 
\begin{equation}
    \begin{split}
        \E\sqbra{\intT\abs{\Delta Y_t}^2 + \abs{\Delta Z_t}^2dt} & \leq C\E\sqbra{|\Delta \xi^0|^2 + \intT \abs{\Delta G^0(s, Z^0_s)}^2ds}\\ 
        & + C \E\sqbra{|\Delta \xi^1(\tau)|^2 + \int_{\tau \wedge T}^T \abs{\Delta G^1(s, Z^1_s(\tau), \tau)}^2ds}\\
        & + C\E\sqbra{\intT |\Delta \xi^1(\upeta)|^2 d\upeta + \intT\int_{\upeta \wedge T}^T \abs{\Delta G^1(s, Z^1_s(\upeta), \upeta)}^2dsd\upeta}.
    \end{split}
\end{equation}
\end{proposition}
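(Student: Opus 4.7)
\subsection*{Proof proposal for Proposition \ref{Prop:BSDEstability}}

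The plan is to reduce the stability estimate for the BSDE \eqref{BSDE ProEnl0} under $\mathbb{G}$ to classical $L^2$--stability estimates for Brownian BSDEs under $\mathbb{F}$, by exploiting the decomposition \eqref{BSDEsolution}--\eqref{BSDEsolutionBSDE} provided by Theorem \ref{Theorem:BSDEmain}. Write
\begin{equation*}
    \Delta Y_t = \Delta Y^0_t\ind{t<\tau}+\Delta Y^1_t(\tau)\ind{\tau\leq t},\qquad \Delta Z_t = \Delta Z^0_t\ind{t\leq \tau}+\Delta Z^1_t(\tau)\ind{\tau<t},
\end{equation*}
where each $\Delta Y^i,\Delta Z^i$ is the difference of the corresponding solutions of the iterative Brownian BSDEs in \eqref{BSDEsolutionBSDE}, for $(\xi,G)$ versus $(\xi',G')$. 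Thus
\begin{equation*}
    \E\Big[\intT\abs{\Delta Y_t}^2+\abs{\Delta Z_t}^2 dt\Big] \leq \E\Big[\intT\abs{\Delta Y^0_t}^2+\abs{\Delta Z^0_t}^2 dt\Big]+\E\Big[\int_{\tau\wedge T}^T \abs{\Delta Y^1_t(\tau)}^2+\abs{\Delta Z^1_t(\tau)}^2 dt\Big],
\end{equation*}
and it suffices to control the two right--hand terms separately.

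For the second term, I fix $\upeta$ and apply standard $L^2$--stability for the Brownian BSDE driving $(Y^1(\upeta),Z^1(\upeta))$, whose driver is Lipschitz in $z$ with constant $\ell_G$ by \ref{Assumption:PEA2} (see e.g. \cite[Proposition 2.2]{KharroubiLim14} or the classical estimate of El Karoui--Peng--Quenez). This gives a constant $C$, uniform in $\upeta$, such that
\begin{equation*}
    \E\Big[\sup_{t\in[\upeta\wedge T,T]}\abs{\Delta Y^1_t(\upeta)}^2 + \int_{\upeta\wedge T}^T\abs{\Delta Z^1_s(\upeta)}^2 ds\Big] \leq C\,\E\Big[\abs{\Delta \xi^1(\upeta)}^2 + \int_{\upeta\wedge T}^T \abs{\Delta G^1(s,Z^1_s(\upeta),\upeta)}^2 ds\Big].
\end{equation*}
Since $\tau$ is independent of $W$ under Hypothesis \ref{Hypothesis:H} and the mappings $\upeta\mapsto Y^1(\upeta),Z^1(\upeta)$ are $\mathbb{F}$--measurable (by \cite[Proposition C.1]{KharroubiLim14}), conditioning on $\tau$ and integrating against its law $\nu_\tau$ yields
\begin{equation*}
    \E\Big[\int_{\tau\wedge T}^T\abs{\Delta Y^1_t(\tau)}^2+\abs{\Delta Z^1_t(\tau)}^2 dt\Big] \leq C(T+1)\,\E\Big[\abs{\Delta\xi^1(\tau)}^2 + \int_{\tau\wedge T}^T\abs{\Delta G^1(s,Z^1_s(\tau),\tau)}^2 ds\Big],
\end{equation*}
producing the second term of the statement.

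For the $\Delta Y^0,\Delta Z^0$ piece, write the Brownian BSDE satisfied by $\Delta Y^0$: its driver at $(s,\Delta Y^0_s,\Delta Z^0_s)$ equals
\begin{equation*}
    \Delta G^0(s,Z^0_s) + \bigl[G^{'0}(s,Z^0_s)-G^{'0}(s,Z^{'0}_s)\bigr] + k_s\,\Delta Y^1_s(s) - k_s\,\Delta Y^0_s.
\end{equation*}
The drift is Lipschitz in $(\Delta Y^0,\Delta Z^0)$ since $G^{'0}$ is Lipschitz in $z$ and $k$ is bounded by \ref{Assumption:BT1}. A second application of classical $L^2$--stability thus gives
\begin{equation*}
    \E\Big[\sup_{t\in[0,T]}\abs{\Delta Y^0_t}^2 + \intT\abs{\Delta Z^0_s}^2 ds\Big] \leq C\,\E\Big[\abs{\Delta\xi^0}^2 + \intT \abs{\Delta G^0(s,Z^0_s)}^2 ds + \intT\abs{\Delta Y^1_s(s)}^2 ds\Big].
\end{equation*}
To dispose of the diagonal term I invoke the $\upeta$--uniform estimate from the previous paragraph at $\upeta=s$ and integrate in $s\in[0,T]$, giving
\begin{equation*}
    \intT\E\big[\abs{\Delta Y^1_s(s)}^2\big]\,ds \leq C\intT \E\Big[\abs{\Delta\xi^1(\upeta)}^2 + \int_{\upeta\wedge T}^T\abs{\Delta G^1(s,Z^1_s(\upeta),\upeta)}^2 ds\Big]d\upeta,
\end{equation*}
which is exactly the third term of the statement. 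Combining all pieces and enlarging $C$ yields the desired inequality.

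The main obstacle is the coupling of the two BSDEs through the diagonal quantity $Y^1_s(s)$: standard stability must be applied uniformly in the ``parameter'' $\upeta$, and the $\mathbb{F}$--measurability in $\upeta$ (together with independence of $\tau$ and $\mathbb{F}$) is what lets me conditionally integrate and convert the $\upeta$--parametric estimates into expectations involving $\tau$. The cleanest way to make this rigorous is to invoke \cite[Proposition C.1]{KharroubiLim14} to guarantee joint measurability of $(Y^1,Z^1)$ in $(\omega,s,\upeta)$ before applying Fubini.
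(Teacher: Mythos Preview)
Your proposal is correct and follows essentially the same route as the paper: reduce to the iterative Brownian system \eqref{BSDEsolutionBSDE}, apply classical $L^2$--stability uniformly in $\upeta$ to the $(Y^1(\upeta),Z^1(\upeta))$ equation, then apply it once more to the $(Y^0,Z^0)$ equation and dispose of the coupling term $\int_0^T|\Delta Y^1_s(s)|^2\,ds$ via the $\upeta$--integrated estimate. The paper cites \cite[Proposition~4.4]{DarlingPardoux97} for the Brownian stability step rather than El~Karoui--Peng--Quenez or \cite{KharroubiLim14}, and it plugs $\upeta=\tau$ directly without spelling out the conditioning argument you give; your explicit use of independence of $\tau$ from $\mathbb{F}$ and of the joint measurability from \cite[Proposition~C.1]{KharroubiLim14} is a welcome clarification but not a different method.
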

\begin{proof}
By uniqueness, we can construct $(Y, Z, U)$ and $(Y', Z', U')$ using the procedure in the proof of Proposition \ref{Prop:BSDEexistence} and get the decomposition as in (\ref{BSDEsolution}). 
First note that we then have
\begin{align*}
    \abs{\Delta Y_t}^2 & = \abs{\Delta Y_t^0\ind{t < \tau} + \Delta Y^1_t(\tau)\ind{\tau \leq t}}^2 = \abs{\Delta Y_t^0}^2\ind{t < \tau} + \abs{\Delta Y_t^1(\tau)}^2\ind{\tau \leq t}\\
    \abs{\Delta Z_t}^2 & = \abs{\Delta Z_t^0\ind{t \leq \tau} + \Delta Z^1_t(\tau)\ind{\tau < t}}^2 = \abs{\Delta Z_t^0}^2\ind{t \leq \tau} + \abs{\Delta Z_t^1(\tau)}^2\ind{\tau < t}
\end{align*}
with 
\begin{align*}
    \Delta Y^0 \ce Y^0 - Y^{'0}, & \quad \Delta Y^1(\upeta) \ce Y^1(\upeta) - Y^{'1}(\upeta)\\
    \Delta Z^0 \ce Z^0 - Z^{'0},  &\quad \Delta Z^1(\upeta) \ce Z^1(\upeta) - Z^{'1}(\upeta).
\end{align*}
Therefore we have 
\begin{align*}
    \intT\abs{\Delta Y_t}^2 + \abs{\Delta Z_t}^2dt & = \intT\pa{\abs{\Delta Y^0_t}^2 + \abs{\Delta Z^0_t}^2}\ind{t \leq \tau}dt\\
    & + \intT\pa{\abs{\Delta Y^1_t(\tau)}^2 + \abs{\Delta Z^1_t(\tau)}^2}\ind{t > \tau}dt\\
    & \leq \intT\pa{\abs{\Delta Y^0_t}^2 + \abs{\Delta Z^0_t}^2}dt + \int_{\tau\wedge T}^T\pa{\abs{\Delta Y^1_t(\tau)}^2 + \abs{\Delta Z^1_t(\tau)}^2}dt.
\end{align*}
First we look at $(\Delta Y^1, \Delta Z^1)$. Since $G^1$ is assumed to be Lipschitz in $Z$ uniformly in $\upeta$, using \cite[Proposition 4.4]{DarlingPardoux97} we can find a constant $C_1 > 0$ independent of $\upeta$ such that for all $\upeta \in \R_+$:
\begin{subequations}
\begin{equation}\label{stabilitydecomposed:1}
\begin{split}
   & \E\sqbra{\abs{\Delta Y_\upeta^1(\upeta)}^2 + \int_{\upeta \wedge T}^T\abs{\Delta Y^1_s(\upeta)}^2 + \abs{\Delta Z^1_s(\upeta)}^2ds}\\
   & \qquad \leq C_1 \E\sqbra{|\Delta \xi^1(\upeta)|^2 + \int_{\upeta \wedge T}^T \abs{\Delta G^1(s, Z^1_s(\upeta), \upeta)}^2ds}.
\end{split}
\end{equation}

In particular, with integrability of $\abs{\Delta Y_s^1(s)}^2$ guaranteed by Lemma \ref{integrabilityY1Z1}, we have 
\begin{equation}\label{stabilitydecomposed:yss}
    \E\sqbra{\intT \abs{\Delta Y_\upeta^1(\upeta)}^2d\upeta} \leq C_1\E\sqbra{\intT |\Delta \xi^1(\upeta)|^2 d\upeta + \intT\int_{\upeta \wedge T}^T \abs{\Delta G^1(s, Z^1_s(\upeta), \upeta)}^2dsd\upeta}.
\end{equation}

Similarly for $(\Delta Y^0, \Delta Z^0)$, by Lemma \ref{integrabilityY1Z1} and Assumption \ref{Assumption:BT1}, applying \cite[Proposition 4.4]{DarlingPardoux97} again, we can find $C_0 > 0$ such that
\begin{equation}\label{stabilitydecomposed:0} 
    \E\sqbra{\intT\abs{\Delta Y^0_s}^2 + \abs{\Delta Z^0_s}^2ds} \leq C_0\E\sqbra{|\Delta \xi^0|^2 + \intT \abs{\Delta G^0(s, Z^0_s)}^2ds + \intT\abs{\Delta Y_s^1(s)}^2ds}.
\end{equation}
Conclude by combining \eqref{stabilitydecomposed:1}, \eqref{stabilitydecomposed:yss} and \eqref{stabilitydecomposed:0}.
\end{subequations}
\end{proof}
\begin{lemma}\label{Lemma:tau*decomp}
    Under Assumptions \eqref{Assumption:BT}, \eqref{Assumption:SD}, and \eqref{Assumption:C}, for each $(\theta, \mu) \in \Theta \times \M$, let $(Y, Z, U)$ denote the unique solution to the BSDE \eqref{eq:WeakMFG_BSDE}.
    %{WeakMFG_BSDE}). 
    Then $Z$ can be decomposed as $$Z_t = Z_t^0\ind{t \leq \tau^*} + Z_t^1(\tau^*)\ind{t > \tau^*},$$
where $Z^0$ and $Z^1(\cdot)$ are the $\mathbb{F}$-predictable components from the decomposition by $\tau$ given in \eqref{Theorem:BSDEmain}.
\end{lemma}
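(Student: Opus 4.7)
The plan is to bootstrap the $\tau$-decomposition provided by Theorem \ref{Theorem:BSDEmain}, namely $Z_t = Z^0_t\ind{t\le\tau} + Z^1_t(\tau)\ind{t>\tau}$, into the sharper $\tau^*$-decomposition. On the event $\{\tau \leq \bar\tau(\mu)\}$ one has $\tau^* = \tau$ and the two decompositions coincide word for word, so the task reduces to analyzing the complementary event $\{\tau > \bar\tau(\mu)\}$, on which $\tau^* = \bar\tau(\mu)$. On that event the time axis splits into three regions: $[0,\bar\tau(\mu)]$, $(\bar\tau(\mu),\tau]$, and $(\tau,T]$. In the first region the $\tau$-decomposition already returns $Z^0_t$, which matches the claim. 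The two remaining regions require genuine work, and both proceed by exploiting uniqueness of solutions of the iterative Brownian BSDE system \eqref{BSDEsolutionBSDE}.

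For the third region $(\tau,T]$ on $\{\tau > \bar\tau(\mu)\}$, Theorem \ref{Theorem:BSDEmain} gives $Z_t = Z^1_t(\tau)$, and I will show that $Z^1_t(\tau) = Z^1_t(\bar\tau(\mu))$. The key structural remark is that in our setting the terminal $\xi^1(\upeta) = g^1(X, \upeta \wedge \bar\tau(\mu))$ equals $g^1(X, \bar\tau(\mu))$ for every $\upeta \geq \bar\tau(\mu)$, and is therefore independent of $\upeta$; combined with the fact (see Remark \ref{Remark:AssumptionC}) that the minimized Hamiltonian $h^1$ does not depend on $\upeta$, the $Y^1(\upeta)$-BSDE on $[\upeta,T]$ is literally the same equation for every $\upeta \geq \bar\tau(\mu)$. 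Uniqueness for Lipschitz Brownian BSDEs then forces $(Y^1_t(\upeta),Z^1_t(\upeta)) = (Y^1_t(\bar\tau(\mu)),Z^1_t(\bar\tau(\mu)))$ for all $t \in [\upeta,T]$ whenever $\upeta \geq \bar\tau(\mu)$. Specializing to $\upeta = \tau$ yields the desired identity on $(\tau, T]$.

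The heart of the argument---and the principal obstacle---is the middle region $(\bar\tau(\mu), \tau]$, where Theorem \ref{Theorem:BSDEmain} gives $Z_t = Z^0_t$ but the claim demands $Z^1_t(\bar\tau(\mu))$. The plan is to prove the stronger statement that $(Y^0, Z^0)$ and $(Y^1(\bar\tau(\mu)), Z^1(\bar\tau(\mu)))$ coincide throughout $[\bar\tau(\mu),T]$. Two ingredients make the $Y^0$-equation collapse onto the $Y^1(\bar\tau(\mu))$-equation once $s$ crosses $\bar\tau(\mu)$: by Assumption \ref{Assumption:C1} the indicator $\ind{0\le s<\bar\tau(\mu)}$ in $f_c$ vanishes, so $h^0(s,X,\theta_s,\bar\tau(\mu),\cdot) = h^1(s,X,\theta_s,\cdot)$; and, by the previous paragraph, $Y^1_s(s) = Y^1_s(\bar\tau(\mu))$ for every $s \geq \bar\tau(\mu)$. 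Together with the terminal-cost compatibility $g^0(X,\bar\tau(\mu)) = g^1(X,\bar\tau(\mu))$---which is automatic in the bubble application, since both sides equal $X_{\bar\tau(\mu)}\beta_{\bar\tau(\mu)}\gamma_{\bar\tau(\mu)} + cX_T^2$, and which should be viewed as a natural consistency requirement on the decomposition in Assumption \ref{Assumption:C2}---the differences $\Delta Y := Y^0 - Y^1(\bar\tau(\mu))$ and $\Delta Z := Z^0 - Z^1(\bar\tau(\mu))$ satisfy a Brownian BSDE on $[\bar\tau(\mu),T]$ with zero terminal data and a driver that is Lipschitz in $\Delta Z$ (from the Lipschitz continuity of $h^1$ in $z$ noted in Remark \ref{Remark:AssumptionC}) and linear in $\Delta Y$ with bounded coefficient $-k_s$. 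The standard uniqueness result for Lipschitz Brownian BSDEs then forces $\Delta Y \equiv 0$ and $\Delta Z \equiv 0$, giving $Z^0_t = Z^1_t(\bar\tau(\mu))$ on the middle region and completing the proof.
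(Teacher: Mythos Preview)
Your proposal is correct and follows essentially the same route as the paper: both reduce to the event $\{\tau>\bar\tau(\mu)\}$, exploit that for $\upeta\ge\bar\tau(\mu)$ the terminal $g^1(X,\upeta\wedge\bar\tau(\mu))=g^1(X,\bar\tau(\mu))$ and the driver $h^1$ are $\upeta$-independent (so $Z^1(\upeta)$ is constant in $\upeta$ there), observe that $h^0=h^1$ once $s>\bar\tau(\mu)$, and then invoke uniqueness of Lipschitz Brownian BSDEs to identify $(Y^0,Z^0)$ with $(Y^1(\bar\tau(\mu)),Z^1(\bar\tau(\mu)))$ on $[\bar\tau(\mu),T]$. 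The only cosmetic differences are that the paper verifies directly that $(Y^1,Z^1)$ solves the $Y^0$-equation \eqref{label:particularBSDE0} (the $k_s(Y^1_s-Y^0_s)$ term vanishes upon substitution) rather than writing out a difference BSDE, and it does not separate the interval $(\bar\tau(\mu),T]$ into middle and third regions; your explicit flagging of the compatibility $g^0(X,\bar\tau(\mu))=g^1(X,\bar\tau(\mu))$ is a useful clarification the paper leaves implicit.
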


\begin{proof}
    Let $(Y, Z, U)$ be the unique solution to the BSDE 
  %(\ref{WeakMFG_BSDE})
    \eqref{eq:WeakMFG_BSDE}. 
  %To show that $\hat{\alpha}$ has the desired decomposition by $\tau^*$, it suffices to show that $Z^{\hat{\alpha}}$ can be decomposed by $\tau^*$. 
    Let us recall that  $\tau^* =  \tau\wedge  \bar\tau(\mu)$.
    From the decomposition property of the BSDE solution in Theorem \ref{Theorem:BSDEmain}, we have 
    \begin{equation}\label{label:Zdecomp}
    \begin{split}
    Z_t &= \pa{Z^0_t\ind{t \leq \tau} + Z^{1}_t(\tau)\ind{t > \tau}}\ind{\tau \leq \bar{\tau}(\mu)} + \pa{Z^0_t\ind{t \leq \tau} + Z^{1}_t(\tau)\ind{t > \tau}}\ind{\tau > \bar{\tau}(\mu)}\\
    & = \pa{Z_t^0\ind{t \leq \tau^*}+ Z^1_t(\tau^*)\ind{t > \tau^*}}\ind{\tau \leq \bar{\tau}(\mu)} \\
    & +\pa{Z_t^0\ind{t \leq \tau^*} + Z_t^0\ind{\tau^* < t \leq \tau} - Z^1_t(\tau)\ind{\tau^* < t \leq \tau}+ Z_t^1(\tau)\ind{t > \tau^*}} \ind{\tau > \bar{\tau}(\mu)}.
    \end{split}
\end{equation}
    In other words, on the event $\{\tau \leq \bar{\tau}(\mu)\}$, i.e. $\tau^* = \tau$, we have the desired decomposition. On the event $\{\tau > \bar{\tau}(\mu)\}$, i.e. $\tau^* = \bar{\tau}$, we need to show that $Z^0_t$ and $Z^1_t(\tau)$ coincide for $t \in (\tau^*, \tau]$. 
    First note that by Assumption \ref{Assumption:C2} we have $\xi^1(\tau) = \xi^0$, which no longer depends on $\tau$. Let us recall that $h^1$ does not depend on $\upeta$ either. Hence for all $\upeta \in \R_+$, $(Y^1(\upeta), Z^1(\upeta))$ solves the same BSDE
\begin{equation*}
    Y^1_t = \xi^0 + \inttT h^1(s, X, \theta_s, Z^1_s)ds - \inttT Z^1_sdW_s, \qquad 0 \leq t \leq T
\end{equation*}
which  is independent of $\upeta$. Under \ref{Assumption:C1}, observe that we always have
\begin{equation}\label{label:samebursteffectf}
    f^0(t, \bx, q, \bar{\tau}, a)\ind{t \geq \tau^*} = f^1(t, \bx, q, a)\ind{t \geq \tau^*}.
\end{equation}

Therefore, on $t > \bar{\tau}(\mu)$ and for all $(\bx, q, \bar{\tau}, z) \in \X \times \mathcal{P}(A) \in [0, T] \times \R$ one has $$h^1(t, \bx, q, z) = h^0(t, \bx, q, \bar{\tau}, z).$$
Specifically, $(Y^1, Z^1)$ satisfies on $(\bar{\tau}(\mu), T]$ the equation $$Y^1_t = \xi^0 + \inttT h^0(s, X, \theta_s, \bar{\tau}(\mu), Z^1_s)ds - \inttT Z^1_sdW_s.$$
Let us recall that $(Y^0, Z^0)$ uniquely solves 
\begin{equation}\label{label:particularBSDE0}
    Y^0_t = \xi^0 + \inttT \pa{h^0(s, X, \theta_s, \bar{\tau}(\mu), Z^0_s) + k_sY_s^1 - k_sY^0_s} ds - \inttT Z^0_sdW_s.
\end{equation}
It is immediate to see that $(Y^1, Z^1)$ also satisfy \eqref{label:particularBSDE0} on $(\bar{\tau}, T]$. Therefore, by uniqueness of the solution we have $$Z_t^0\ind{\tau^* < t \leq \tau} = Z^1_t \ind{\tau^* < t \leq \tau}.$$
Note again that on the event $\{\tau > \bar{\tau}(\mu)\}$ the process $Z^1_t$ does not depend on $\tau$. In particular $Z^1_t = Z^1_t(\tau^*)$. 
Then \eqref{label:Zdecomp} yields the desired result.
%  result
% \begin{align*}
%     Z_t & = \pa{Z_t^0\ind{t \leq \tau^*}+ Z^1_t(\tau^*)\ind{t > \tau^*}}\ind{\tau \leq \bar{\tau}(\mu)} + \pa{Z_t^0\ind{t \leq \tau^*}+ Z^1_t(\tau^*)\ind{t > \tau^*}}\ind{\tau > \bar{\tau}(\mu)}\\
%  & = Z_t^0\ind{t \leq \tau^*}+ Z^1_t(\tau^*)\ind{t > \tau^*}.
% \end{align*}
\end{proof}

\subsection{Regularity Results for Varying Entry Times}
\begin{lemma}\label{Lemma:uniformboundXtilde}
    For $p \in \N$, there exists $C > 0 $ such that for $\nu$-almost all $t^* \in [0, \eta]$:
    \orange{$$\E^{\P}\sqbra{\norm{X(t^*)}_\infty^p} + \sup_{\alpha \in \A^*}\E^{\P^{\alpha; t^*}}\Big[\|X(t^*)\|^p_\infty\Big] \leq C$$}
    where $X^{t^*} = X(\cdot, \cdot, t^*)$ is the driftless state \eqref{DriftlessState*} with varying entry time $t^*$.
\end{lemma}
\begin{proof}
    The linear growth condition of $\sigma$ provides a bound for the first summand by standard arguments. The bound is uniform in $t^*$ because $0 \leq t^* \leq \eta \leq T$. For each $\alpha \in \A^{*}$, we recall that $\P^{\alpha; t^*}$ is defined by
    \begin{equation*}
        M^{\alpha; t^*}_t \ce \mathcal{E}\pa{\int_0^{\cdot} \sigma^{-1}(s, X^{t^*})b(s, X^{t^*}, \alpha^{t^*}_s) dW_s}_t\quad \text{and}\quad \frac{d\P^{\alpha; t^*}}{d\P} \ce M^{\alpha; t^*}_T.
    \end{equation*}
    By uniform boundedness of $|\sigma^{-1}b|$, standard arguments give
    $$\E^{\P}\sqbra{|M^{\alpha; t^*}_T|^{2}} \leq \exp(K^2T) \text{ for all } \alpha \in \A^*, t^* \in [0, \eta]$$
    Applying Cauchy-Schwarz, we have:
    \orange{\begin{equation}\label{eq:t*uniformbound}
        \E^{\P^{\alpha; t^*}}\Big[\|X^{t^*}\|^p_\infty\Big] \leq \sqrt{\E^{\P}\sqbra{\abs{M_T^{\alpha; t^*}}^{2}}} \sqrt{\E^{\P}\sqbra{\norm{X^{t^*}}_\infty^{2p}}} \leq C
    \end{equation}}
    where $C$ only depends on $p, K, T$ and the bound in Lemma \ref{Lemma:fandXintegrability}, but not on $t^*$ or $\alpha$. Also, we note that $$\sup_{\alpha \in \A^*}\E^{\P^{\alpha}}\Big[\|X\|^{p}_\infty\Big] = \sup_{\alpha \in \A^*}\E^{\P^{\alpha}}\sqbra{\E^{\P^{\alpha; t^*}}\sqbra{\|X\|^{p}_\infty \ | \ \T = t^*}} \leq C.$$
\end{proof}

\begin{lemma}\label{Lemma:MetricSpace}
    The space $(\X^*, d_{\X^*})$ defined in Section \ref{Subsection:MFGModel} is a separable and complete metric space. 
\end{lemma}
\begin{proof}
    It is easy to check that $d_{\X^*}$ is indeed a metric. Separability follows from that of $\X$. For completeness, take a Cauchy sequence $\{\bx^n(t^n)\}_{n \geq 1}$ in $\X^*$. Then by definition of $d_{\X^*}$, the sequence of entry times $t^n$ must be Cauchy and therefore converges to some $t^* \in [0, \eta]$. Similarly, the continuous counterparts $\bar{\bx}^n(t^n)$ is a Cauchy sequence in $\X$ and thus converges to some $\bar{\bx}^* \in \X$. In particular, $\bar{\bx}^*$ must be constant on $[0, t^*]$. Then it is not hard to show that the Cauchy sequence $\bx^n(t^n)$ converges to $\bx^*(t^*) \in \X^*$ with respect to the metric $d_{\X^*}$, where $\bx^*_t(t^*) \ce \ind{t \geq t^*}\bar{\bx}^*_t$. 
\end{proof}
\begin{lemma}\label{Lemma:Xt*cont}
    If a sequence $t^n \nto t^*$ in $[0, \eta]$, then for all $p \geq 1$, $\E\sqbra{d_{\X^*}\pa{X^{t^n}, X^{t^*}}^p} \nto 0$. 
\end{lemma}

\begin{proof}
    First assume $t^n \uparrow t^*$ in $[0,\eta]$. 
    Since each trajectory has the same initial value, we have 
\begin{equation*}
    \bar{X}^{t^n}_t - \bar{X}^{t^*}_t = \int_{t^n}^t\sqbra{\sigma(s, X^{t^n}) - \ind{s \geq t^*}\sigma(s, X^{t^*})}dW_s.
\end{equation*}
Applying BDG and Jensen's inequality and the Lipschitz property of $\sigma$, allowing $C > 0$ to change from line to line, we take expectation with respect to $\P$ and get
\begin{align*}
    \E\sqbra{\sup_{s\in [0, t]}|\bar{X}^{t^n}_s - \bar{X}^{t^*}_s|^{p}} & \leq C\E\sqbra{\int_{t^n}^t \abs{\sigma(s, X^{t^n}) - \ind{s \geq t^*}\sigma(s, X^{t^*})}^pds}\\
    & \leq C\E\sqbra{\int_{t^n}^{t^*}|\sigma(s, \bar{X}^{t^n})|^pds} + C\E\sqbra{\int_{t^*}^t \abs{\sigma(s, \bar{X}^{t^n}) - \sigma(s, \bar{X}^{t^*})}^pds}\\
    &\leq C\int_{t^n}^{t^*}\E\sqbra{|\sigma(s, \bar{X}^{t^n})|^p}ds + C\ell_\sigma^p\int_{t^*}^{t}\E\Big[\sup_{u \in [0, s]}|\bar{X}^{t^n}_u - \bar{X}^{t^*}_u|^p\Big]ds.
\end{align*}
The first term, which we denote as $C_n$, goes to $0$ due to the linear growth of $\sigma$ and (\ref{fandXintegrability1}). Applying Gr\"{o}nwall's lemma yields
\begin{equation*}
    \E\sqbra{d_{\X^*}\pa{X^{t^n}, X^{t^*}}^p} \leq 2^{p-1}C_n\exp(C\ell_\sigma^p |t^n - t^*|) + 2^{p-1}|t^n - t^*|^p \nto 0.
\end{equation*}
The case $t^n\downarrow t^*$ follows by the same argument.
\end{proof}

\subsection{MFGs with Common Noise}\label{Subsection:CommonNoise}
Let us recall that the filtration that we work with in this paper does not include the noise term driving the price dynamics $W^0$, but only the individual inventory information $W$. In this section we elaborate on Remark \ref{Remark:CommonNoise} to justify this seemingly restrictive assumption. Unlike the idiosyncratic noise $W$, the common noise remains even when $N \to \infty$. As a consequence, $\theta$ and $\mu$ become random laws which require a different definition for MFG equilibrium. In particular, they now become the conditional laws of controls and state given $W^0$. 

Let $(\Omega, \F, \P)$ also carry another Wiener process $W^0$ that is independent from everything else. Denote by $\mathbb{F}^c \ce (\F_t^c)_{t \in [0, T]}$ the natural filtration now including common noise and by $\mathbb{G}^c$ its progressive enlargement by $\tau$, where $\F^c_t$ is the $\P$-completion of $\sigma((\iota, W_s, W^0_s)_{s \in [0, t]})$. Also define $\mathbb{F}^{W^0}\ce (\F^{W^0}_t)_{t \in [0, T]}$ to be the natural filtration generated by $W^0$. Let $\A^{*, c}(t^*)$ and $\A^{*, c}$ follow Definition \ref{Definition:AdmissibleControl} but with common noise (i.e. $\mathbb{G}^c$-progressive, A-valued processes with appropriate entry times).

\begin{definition}\label{Definition:CommonNoise}
    A MFG equilibrium with common noise for random entry times is a triple $(\hat\alpha, \theta^c,\mu^c)$ where the random probability measures $(\theta^c, \mu^c): \Omega \to \Theta \times \M$ and the optimal control $\hat\alpha \in \A^{*, c}$ satisfy 
    \begin{enumerate}
        \item $\hat\alpha$ minimizes over $\A^{*, c}$ the objective $J^{\theta^c,\mu^c}$ defined in \eqref{Definition:MFGobjectiveProductSpace}.
        \item $\mu^c$ is a version of the conditional law of $X$ given $W^0$ under $\P^{\hat\alpha}$: $$\mu^c_t(\cdot) = \int_{[0, \eta]}\P^{\hat{\alpha}(t^*)}\pa{X^{t^*}_t \in \cdot | \F_t^{W^0}}\nu(dt^*) \text{ for almost every } t.$$
        \item $\theta^c$ is a version of the conditional law of $\hat\alpha$ given $W^0$ under $\P^{\hat\alpha}$: $$\theta^c_t(\cdot) =  \int_{[0, \eta]}\P^{\hat{\alpha}(t^*)} \pa{\hat\alpha_t(t^*) \in \cdot | \F_t^{W^0}}\nu(dt^*) \text{ for almost every } t.$$
    \end{enumerate}
\end{definition}

In contrast to the full-fledged versions of MFGs with common noise described in \cite{CarmonaBookII}, our setup has the following distinctive features:
\begin{enumerate}
    \item Individual state variable $X$ does not involve $W^0$.
    \item Running cost function $f$ and terminal cost $g$ do not depend on common noise $W^0$ (or the common state $P$).
    \item Entry time and burst time do not depend on $W^0$.
\end{enumerate}

\begin{proposition}\label{Prop:CommonNoise}
    A MFG equilibrium $(\hat\alpha, \theta, \mu)$ without common noise in the sense of Definition \ref{definition:MFGequilibrium} is also a MFG equilibrium with common noise in the sense of Definition \ref{Definition:CommonNoise} (where $(\theta, \mu): \Omega \to \Theta \times \M$ is deterministic).
\end{proposition}

\begin{proof}
    We check the conditions in Definition \ref{Definition:CommonNoise} are satisfied by $(\hat{\alpha}, \theta, \mu)$. First note that $\hat{\alpha}$ is still admissible since $\mathbb{F} \subset \mathbb{F}^c$ and thus $\mathbb G\subset \mathbb G^c$. Since $\hat\alpha$ does not depend on $W^0$, neither does $X$. Therefore the conditional law properties of $\mu$ and $\theta$ reduce to the unconditional versions, which follow from the fact that $(\hat\alpha, \theta, \mu)$ is an equilibrium. It remains to show that $\hat\alpha$ minimizes $J^{\theta, \mu}$. Using arguments in Section \ref{Section:RET}, it suffices to show optimality $t^*$-by-$t^*$. 
    
    Let $(Y^{t^*}, Z^{t^*}, U^{t^*})$ be the unique solution to the BSDE \eqref{WeakMFG_BSDE*} with input $(\theta, \mu)$. Assumption \ref{Assumption:C3}, Remark \ref{Remark:AssumptionC} and Proposition \ref{Prop:BSDEcomparison} imply that $\hat{\alpha}_t(t^*) = \hat{a}(t, X^{t^*}, Z^{t^*}_t)$ for almost every $t$. Now consider the following BSDE on $[t^*, T]$ solved on the filtration $\mathbb{G}^c$:
    \begin{equation}\label{BSDE_CommonNoise}
    \begin{split}
            Y^{t^*, c}_t & = g(X^{t^*}, \bar{\tau}(\mu^{c}), \tau) + \int_t^T h(s, X^{t^*}, \theta^{c}_s, \bar{\tau}(\mu), \tau,Z^{t^*, c}_s)ds\\
            & - \int_t^T Z^{t^*, c}_sdW_s - \int_t^T \mathfrak{Z}^{t^*, c}_sdW^0_s - \int_t^T U^{t^*, c}_sdM_s.\\
    \end{split}
    \end{equation}
    A solution to this BSDE is $(Y^{t^*, c}, Z^{t^*, c}, \mathfrak{Z}^{t^*, c}, U^{t^*, c}) \in \mathcal{S}^2_{\mathbb{G}^c}[t^*, T] \times \mathcal{H}_{\mathbb{G}^c}^2[t^*, T] \times \mathcal{H}_{\mathbb{G}^c}^2[t^*, T] \times \mathcal{H}_{\mathbb{G}^c, D}^2[t^*, T]$, which exists and is unique (see e.g. \cite[Theorem 53.1]{KarouiMazliakBSDE}). However, it is also easy to verify that $(Y^{t^*}, Z^{t^*}, \bzero, U^{t^*})$ is a solution by construction, so by uniqueness, $(Y^{t^*, c}, Z^{t^*, c}) = (Y^{t^*}, Z^{t^*})$ and $\hat{\alpha}_t(t^*) = \hat{a}(t, X^{t^*}, Z^{t^*, c}_t)$ for $t \in [t^*, T]$ which is still the unique minimizer of the Hamiltonian. 
    Thus, using comparison principle of BSDE \eqref{BSDE_CommonNoise}, for any $\beta\in \A^{*, c}(t^*)$ we have
    \begin{align*}
        J^{\theta, \mu, t^*}(\beta) & = \E^{\P^{\beta}}\bigg[g(X^{t^*},\bar\tau(\mu), \tau) + \int_{t^*}^Tf(s, X^{t^*}, \theta_s, \bar{\tau}(\mu), \tau, \beta)ds\bigg] \\
        & \leq \E[Y^{t^*, c}_{t^*}] = \E^{\P^{\hat\alpha(t^*)}}[Y^{t^*, c}_{t^*}] = J^{\theta, \mu, t^*}(\hat{\alpha}(t^*)).
    \end{align*}
    We have checked that $(\hat{\alpha}, \theta, \mu)$ satisfies both the optimality condition and the fixed point condition, so it is a MFG equilibrium with common noise.
\end{proof}
The argument above implies the existence, albeit a somewhat trivial one, of MFG equilibrium even if we allow our controls to depend on common noise. However, it is important to note that the key reason enabling our model to avoid common noise is that the bubble trend functions depends only on the entry time, which is specified exogenously. Proposition \ref{Prop:CommonNoise} would not be applicable for instance if individual entry also depended on the price trajectory. See our accompanying paper \cite{TangpiWang23} for a full treatment.

\bibliographystyle{plainnat}
\bibliography{references}  

\end{document}